\tikzset{
to/.style={->},
To/.style={->,postaction={decorate},
 decoration={markings,
 mark=at position .5 with
 \node[transform shape]{$\large\mathstrut\raisebox{0.02em}%
 {$\smallparallel$}$};
}},
TO/.style={->,postaction={decorate},
 decoration={markings,
 mark=at position .5 with
 \node[transform shape]{$\mathstrut\smalldevel$};
}},
tos/.style={->>},
hide/.style={color=white},
}
\newcommand{\CR}{\textnormal{CR}\xspace}
\newcommand{\PRE}{\textnormal{pre}\xspace}
\newcommand{\SN}{\textnormal{sn}\xspace}
\renewcommand{\SN}{\textnormal{rt}\xspace}
\newcommand{\RL}{\textnormal{rl}\xspace}
\newcommand{\seq}[2][n]{{#2_1},\dots,{#2_{#1}}}
\newcommand{\ST}[1]{{{\ensuremath\star\textnormal{(}{#1}%
\textnormal{)}}}}
\newcommand{\STD}[1]{{\ensuremath{#1^\star_\dagger}}}
\newcommand{\STG}[1]{{\ensuremath{#1^\star_>}}}
\newcommand{\STE}[1]{{\ensuremath{#1^\star_=}}}
\newcommand{\STC}[1]{{\ensuremath{#1^\star_\gtrsim}}}
\newcommand{\sstar}{{\ensuremath%
 {\makebox[0mm][l]{\raisebox{-0.45ex}{$\star$}}}%
 {\raisebox{0.65ex}{$\star$}}%
}}
\newcommand{\STT}[1]{\sstar \textnormal{(}{#1}\textnormal{)}}
\newcommand{\STTG}[1]{{\ensuremath{#1^{\star\star}_>}}}
\newcommand{\STTE}[1]{{\ensuremath{#1^{\star\star}_=}}}
\newcommand{\red}{{\triangle}}
\DeclareSymbolFont{letters}{OML}{cmbboard}{m}{it}
\newcommand{\XP}{(parallel)\xspace}
\newcommand{\XO}{(critical overlap)\xspace}
\newcommand{\XV}{(variable overlap)\xspace}
\newcommand{\XVL}{(variable-linear)\xspace}
\newcommand{\XVLL}{(variable-left-linear)\xspace}
\newcommand{\HOLE}{\Box}
\newcommand{\labgt}{>}
\newcommand{\labge}{\geqslant}
\newcommand{\lablt}{<}
\newcommand{\lable}{\leqslant}
\newcommand{\labvar}{\circ}
\newcommand{\labany}{\circ}
\newlength{\len}
\newcommand{\from}{\leftarrow}
\newcommand{\smallparallel}{%
 \def\next##1##2{\raise.07ex\hbox{$##1{\shortmid}\mkern-1.5mu%
{\shortmid}$}}%
 \mathpalette\next{}%
}
\newcommand{\smalldevel}{%
 \def\next##1##2{\raise.07ex\hbox{$##1\raisebox{0.02em}%
 {\scalebox{.8}{$\circ$}}$}}\mathpalette\next{}%
}
\newcommand{\overlayrel}[4]{%
 \mathrel{%
 \def\next##1##2{%
 \setbox0=\hbox{$##1#3$}%
 \dimen0=\wd0%
 \hbox to0pt{\box0}%
 \hbox to\dimen0{$##1\hfil\mkern#1mu#4\mkern#2mu\hfil$}}%
 \mathpalette\next{}}%
}
\newcommand{\prightarrow}{\overlayrel02\rightarrow\smallparallel}
\newcommand{\pleftarrow}{\overlayrel20\leftarrow\smallparallel}
\newcommand{\xprightarrow}[2][]{\overlayrel02{\xrightarrow[#1]{#2}}%
\smallparallel}
\newcommand{\xpleftarrow}[2][]{\overlayrel20{\xleftarrow[#1]{#2}}%
\smallparallel}
\newcommand{\orightarrow}{\overlayrel02\rightarrow\smalldevel}
\newcommand{\oleftarrow}{\overlayrel20\leftarrow\smalldevel}
\newcommand{\xorightarrow}[2][]{\overlayrel02{\xrightarrow[#1]{#2}}%
\smalldevel}
\newcommand{\xoleftarrow}[2][]{\overlayrel20{\xleftarrow[#1]{#2}}%
\smalldevel}
\newcommand\Nat{\ensuremath{\mathbb{N}}\xspace}
\newcommand\SIG[1]{\ensuremath{\mathcal{#1}}}
\newcommand\VAR[1]{\ensuremath{\mathcal{#1}}}
\newcommand\FF{\SIG{F}}
\newcommand\GG{\SIG{G}}
\newcommand\VV{\VAR{V}}
\newcommand\TERMS[2]{\mbox{\ensuremath{\mathcal{T}%
(\SIG{#1},\VAR{#2})}}\xspace}
\newcommand\FVTERMS{\TERMS{F}{V}}
\newcommand\STEPS[1]{\ensuremath{\Lambda_{#1}\xspace}}
\newcommand\RR{\ensuremath{\mathcal{R}}\xspace}
\renewcommand\SS{\ensuremath{\mathcal{S}}\xspace}
\newcommand\RRd{{{\ensuremath{\mathcal{R}_\mathsf{d}}}}\xspace}
\newcommand\RRnd{{\ensuremath{\mathcal{R}_\mathsf{nd}}}\xspace}
\newcommand\REL[2]{{\ensuremath{#1 \kern0em/\kern0em #2}}\xspace}
\newcommand\RS{\REL{\RR}{\SS}}
\newcommand\SR{\REL{\SS}{\RR}}
\newcommand\RRdnd{\REL{\RRd}{\RRnd}}
\newcommand{\sep}{\hspace{-0.28em}}
\newcommand{\cp}{\mathrel{\from\sep\rtimes\sep\to}}
\newcommand{\cps}[1]{\mathrel{\from\sep{#1}\sep\to}}
\renewcommand{\cps}[1]{\from #1 \to}
\newcommand{\pcp}{\mathrel{\mparr\sep\rtimes\sep\to}}
\newlength{\rtimesl}
\newcommand{\overlay}%
{\mathrel{\from\sep{\ltimes\hspace{-\rtimesl}\rtimes}\sep\to}}
\newcommand\PCPS{\ensuremath{\mathsf{PCPS}}}
\newcommand\CPS{\ensuremath{\mathsf{CPS}}}
\newcommand\CDS{\ensuremath{\mathsf{CDS}}}
\newcommand\NAT{\mathbb{N}}
\newcommand\Var{\mathcal{V}\mathsf{ar}}
\newcommand\Pos{{\mathcal{P}\mathsf{os}}}
\newcommand\FPos{{\Pos_\mathcal{F}}}
\newcommand\VPos{{\Pos_\mathcal{V}}}
\newcommand\BM{\hspace{-0.3ex}\begin{pmatrix}}
\newcommand\EM{\end{pmatrix}\hspace{-0.3ex}}
\newcommand\m[1]{\mathsf{#1}}
\newcommand\x[1]{\mathcal{#1}}
\newcommand\ACP{\ensuremath{\mathsf{ACP}}\xspace}
\newcommand\CSI{\ensuremath{\mathsf{CSI}}\xspace}
\newcommand\SAIGAWA{\ensuremath{\mathsf{saigawa}}\xspace}
\newcommand\TTTT{%
 \ensuremath{\mathsf{T\kern-0.2em\raisebox{-0.3em}%
 {$\mathsf{T}$}\kern-0.2emT\kern-0.2em%
 \raisebox{-0.3em}$\mathsf{2}$}}\xspace%
}
\newcommand{\r@rrow}[3]{%
 \newcommand{#1}[2][]{%
  \def\next{#2\@ifempty{##1}{}{_{##1}}\@ifempty{##2}{}{^{##2}}}%
  \mathchoice{#3[##1]{##2}}{\next}{\next}{\next}%
 }%
}
\newcommand{\l@rrow}[3]{%
 \newcommand{#1}[2][]{%
  \def\next####1{%
   \setbox0=\hbox{$####1\vphantom{#2}\@ifempty{##1}{}{_{}}%
    \@ifempty{##2}{}{^{##2}}$}%
   \setbox1=\hbox{$####1\vphantom{#2}\@ifempty{##1}{}{_{##1}}%
    \@ifempty{##2}{}{^{}}$}%
   \setbox2=\vbox{\hbox to\wd0{}\hbox to\wd1{}}%
   \mathrel{\hskip\wd2\hskip-\wd0\box0\hskip-\wd1\box1{#2}}%
  }%
  \mathchoice{#3[##1]{##2}}{\next\textstyle}{\next\scriptstyle}%
 {\next\scriptscriptstyle}%
 }%
}
\r@rrow{\xr}{\rightarrow}{\xrightarrow}
\l@rrow{\xl}{\leftarrow}{\xleftarrow}
\r@rrow{\xR}{\prightarrow}{\xprightarrow}
\l@rrow{\xL}{\pleftarrow}{\xpleftarrow}
\r@rrow{\XR}{\orightarrow}{\xorightarrow}
\l@rrow{\XL}{\oleftarrow}{\xoleftarrow}
\r@rrow{\xrw}{\mathrel{\widetilde\Rightarrow}}{\xrightarrow}
\r@rrow{\xrb}{\Rightarrow}{\xrightarrow}
\l@rrow{\xlw}{\mathrel{\widetilde\leftarrow}}{\xleftarrow}
\let\parfrom\pleftarrow
\let\parto\prightarrow
\newcommand{\lab}{\ell}
\newcommand{\plab}{\lab^\parallel}
\def\pcp{\mathrel{{\parfrom}{\rtimes}{\to}}}
\def\pcpk#1#2#3{\xpleftarrow[\{#1\}]{} #2 \xrightarrow[\{#3\}]{}}
\newcommand{\turnleft}[1]{%
 \def\next##1##2{\raisebox{-.09em}{\rotatebox{90}{$##1#1$}}}%
 \mathpalette\next{}%
}
\newcommand{\Vee}{\turnleft\lablt}
\newcommand{\Veq}{\turnleft\lable}
\newcommand{\mul}{\textsf{mul}}
\newcommand{\?}{\mathbin{?}}
\begin{document}
\title{Labelings for Decreasing Diagrams\footnote{This research is
supported by FWF (Austrian Science Fund) project P22467.}}
\titlerunning{Labelings for Decreasing Diagrams}

\author{Harald Zankl, Bertram Felgenhauer, and Aart Middeldorp}
\authorrunning{H.~Zankl et al.}

\institute{
H.\ Zankl, B.\ Felgenhauer, A.\ Middeldorp \at
Institute of Computer Science, University of Innsbruck,
6020 Innsbruck, Austria
\and
H.\ Zankl \at
\email{harald.zankl@uibk.ac.at}
\and
B.\ Felgenhauer \at
\email{bertram.felgenhauer@uibk.ac.at}
\and
A.\ Middeldorp \at
\email{aart.middeldorp@uibk.ac.at}
}

\date{Received: date / Accepted: date}
 
\maketitle

\begin{abstract}
This article is concerned with automating the decreasing diagrams
technique of van Oostrom for establishing confluence of term rewrite
systems. We study abstract criteria that allow to lexicographically
combine labelings to show local diagrams decreasing. This approach
has two immediate benefits. First, it allows to use labelings for
linear rewrite systems also for left-linear ones, provided some mild
conditions are satisfied. Second, it admits an incremental method for
proving confluence which subsumes recent developments in automating
decreasing diagrams. The techniques proposed in the article have been
implemented and experimental results demonstrate how, e.g., the rule
labeling benefits from our contributions.
\keywords{term rewriting \and confluence \and decreasing diagrams
\and automation}
\end{abstract}

\section{Introduction}

Confluence is an important property of rewrite systems since it
ensures unique normal forms. It is decidable in the presence of
termination~\cite{KB70} and implied by orthogonality~\cite{R73} or restricted
joinability conditions on the critical pairs~\cite{H80,T88,vO97,OO97,O98}.
Recently, there is a renewed interest in confluence research, 
with a strong emphasis on automation. As one application we 
mention~\cite{SZKO12}, where automated confluence tools are employed
for proving soundness of abstract forms of reduction in solving the
typing problem.

The decreasing diagrams technique of van Oostrom~\cite{vO94} is a
complete method for showing confluence of countable abstract rewrite
systems.
The main idea of the approach is to show confluence by establishing
local confluence under the side condition that rewrite steps of the
joining sequences must \emph{decrease} with respect to some
well-founded order. For term rewrite systems however, the main
problem for automation of decreasing diagrams is that in general
infinitely many local peaks must be considered. To reduce this
problem to a finite set of local peaks one can
label rewrite steps with functions that satisfy special properties.
In~\cite{vO08a} van Oostrom presented the rule labeling that
allows to conclude confluence of \emph{linear} rewrite systems by
checking decreasingness of the critical peaks (those emerging from
critical overlaps). The rule labeling has been
implemented by Aoto~\cite{A10} and Hirokawa and
Middeldorp~\cite{HM11}. Already in~\cite{vO08a}
van Oostrom presented constraints that allow to apply the rule
labeling to \emph{left-linear} systems. This approach has been
implemented and extended by Aoto~\cite{A10}. 
Our framework subsumes the above ideas.

The contributions of this article comprise the extraction of abstract
constraints on a labeling such that for a (left-)linear rewrite system
decreasingness of the (parallel) critical peaks ensures confluence.
We show that the rule labeling adheres to our constraints and present
additional labeling functions. Furthermore
such labeling functions can be combined lexicographically to obtain new
labeling functions satisfying our constraints. This approach
allows the formulation of an abstract criterion that makes
virtually every labeling function for linear rewrite systems also
applicable to left-linear systems. 
Consequently, confluence of the TRS in Example~\ref{EX:OO03} can be
established automatically, e.g., by the rule labeling,
while current approaches based on the decreasing diagrams
technique~\cite{A10,HM11} as well as
other confluence criteria like Knuth and Bendix' criterion
or orthogonality (and its refinements) fail.

\begin{example}
\label{EX:OO03}
Consider the TRS $\RR$ (Cops~\#60)\footnote{%
COnfluence ProblemS, see
\url{http://coco.nue.riec.tohoku.ac.jp/problems/}.}
consisting of the rules
\begin{alignat*}{4}
1\colon  &~&x + (y + z)      &\to (x + y) + z &\qquad
6\colon  &~&x \times y       &\to y \times x
\\
2\colon  &&(x + y) + z       &\to x + (y + z) &\qquad
7\colon  &&\m{s}(x) + y      &\to x + \m{s}(y)
\\
3\colon  &&\m{sq}(x)         &\to x \times x &\qquad
8\colon  &&x + \m{s}(y)      &\to \m{s}(x) + y
\\
4\colon  &&\m{sq}(\m{s}(x))  &\to (x \times x) + \m{s}(x + x) &\qquad
9\colon  &&x \times \m{s}(y) &\to x + (x \times y)
\\
5\colon  &&x + y             &\to y + x &\qquad
10\colon &&\m{s}(x) \times y &\to (x \times y) + y
\end{alignat*}
This system is locally confluent since all its 34 critical pairs are
joinable.
\end{example}

The remainder of this article is organized as follows. After recalling
preliminaries in Section~\ref{PRE:main} we present constraints 
(on a labeling) such that decreasingness of the critical peaks ensures
confluence for \mbox{(left-)}linear rewrite systems in
Section~\ref{LAB:one:main}.
Three of these constraints are based on relative termination while
the fourth employs persistence.
We focus on parallel rewriting in Section~\ref{LAB:par:main}.
The merits of these approaches are assessed in Section~\ref{ASS:main}
by discussing the relationship to the recent literature.
Implementation issues are addressed in Section~\ref{IMP:main} before 
Section~\ref{EXP:main} gives an empirical evaluation of our results.
Section~\ref{CON:main} concludes.

This article is an updated and extended version of~\cite{ZFM11},
which presents the first incremental approach for labeling decreasing
diagrams. Besides a number of small
improvements, the article contains three new major
contributions:
\begin{itemize}
\item
Section~\ref{LAB:ll:red}, presenting a new labeling measuring the
contracted redex,
\item
Section~\ref{LAB:ll:per}, which uses persistence to enhance
the applicability of L-labelings
for left-linear systems,
\item
Section~\ref{LAB:par:main}, which studies parallel rewriting to
make any weak LL-labeling applicable to showing confluence of
left-linear systems without additional (relative termination)
constraints.
\end{itemize}
The latter generalizes and incorporates recent findings 
from~\cite{F13}, which studies the rule labeling for parallel
rewriting.

\section{Preliminaries}
\label{PRE:main}

We assume familiarity with term rewriting \cite{BN98,TeReSe}.

Let $\FF$ be a signature and let $\VV$ be a set of variables disjoint
from $\FF$. By $\FVTERMS$ we denote the set of terms over $\FF$ and
$\VV$. The expression $|t|_x$ indicates how often variable $x$ occurs
in term $t$.
Positions are strings of natural numbers, i.e., elements of
$\NAT_+^*$. The set of positions of a term $t$ is defined as
$\Pos(t) = \{ \epsilon \}$ if $t$ is a variable and as
$\Pos(t) = \{ \epsilon \} \cup \{ iq \mid
\text{$1 \leqslant i \leqslant n$ and
$q \in \Pos(t_i)$} \}$
if $t = f(\seq{t})$. 
We write $p \leqslant q$ if $q = pp'$ for some position $p'$,
in which case $q \backslash p$ is defined to be $p'$.
Furthermore $p < q$ if $p \leqslant q$ and $p \neq q$.
Finally, $p \parallel q$ if neither $p \leqslant q$ nor $q < p$.
Positions are used to address subterm occurrences.
The subterm of $t$ at position $p \in \Pos(t)$ is defined as
$t|_p = t$ if $p = \epsilon$ and as
$t|_p = t_i|_q$ if $p = iq$. We write 
$u \trianglelefteq t$ if $u$ is a subterm of $t$ and
$s[t]_p$ for the result of
replacing $s|_p$ with $t$ in $s$. 
The set of function symbol positions $\FPos(t)$ is
$\{ p \in \Pos(t) \mid t|_p \notin \VV \}$ and
$\VPos(t) = \Pos(t) \setminus \FPos(t)$.
The set of variables occurring in a term $t$ is 
denoted by $\Var(t)$. We let $t|_P = \{ t|_p \mid p \in P \}$
if $t$ is a term and $P$ a set of positions.

A rewrite rule is a pair of terms $(l,r)$, written $l \to r$,
such that $l$ is not a variable and all variables in $r$ are contained
in $l$. A rewrite rule $l \to r$ is duplicating if $|l|_x < |r|_x$ for
some $x \in \VV$. A term rewrite system (TRS) is a 
signature together with a finite set of rewrite rules
over this signature. In the sequel signatures are implicit.
By $\RRd$ and $\RRnd$ we denote the duplicating and non-duplicating
rules of a TRS $\RR$, respectively.
A rewrite relation is a binary relation on terms that is
closed under contexts and substitutions. For a TRS $\RR$ we define
$\to_\RR$ to be the smallest rewrite relation that contains $\RR$.
As usual $\to^=$, $\to^+$, and $\to^*$ denotes the reflexive,
transitive, and reflexive and transitive closure of $\to$, respectively.

A relative TRS $\RS$ is a pair of TRSs $\RR$ and $\SS$ with the
induced rewrite relation 
${\to_\RS} = {\to_{\SS}^* \cdot \to_{\RR}^{} \cdot \to_{\SS}^*}$.
Sometimes we identify a TRS $\RR$ with the relative
TRS $\REL{\RR}{\varnothing}$ and vice versa.
A TRS $\RR$ (relative TRS $\RS$) is terminating if $\to_\RR$
($\to_\RS$) is well-founded.
Two relations $\geqslant$ and $>$ are called compatible if 
${\geqslant} \cdot {>} \cdot {\geqslant} \subseteq {>}$.
A monotone reduction pair $(\geqslant,>)$ consists of a
preorder $\geqslant$ and a well-founded order $>$ such that
$\geqslant$ and $>$ are compatible and closed under contexts and
substitutions. 
A reduction pair $(\geqslant,>)$ is called simple if
${f(s_1,\ldots,s_n) \geqslant s_i}$ for all $1\leqslant i\leqslant n$.
We recall how to prove relative
termination incrementally according to Geser~\cite{G90}.

\begin{theorem}
\label{THM:rt}
A relative TRS $\RS$ is terminating if $\RR = \varnothing$ or there 
exists a monotone reduction pair $(\geqslant,>)$ such that
$\RR \cup \SS \subseteq {\geqslant}$ and
$\REL{(\RR \setminus {>})}{(\SS \setminus {>})}$ is terminating.
\qed
\end{theorem}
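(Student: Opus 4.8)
The case $\RR = \varnothing$ is immediate: then $\to_\RR$ is empty, hence so is $\to_\RS = {\to_\SS^*}\cdot{\to_\RR}\cdot{\to_\SS^*}$, which is trivially well-founded. In the remaining case we argue by contradiction, assuming a monotone reduction pair $(\geqslant,>)$ with $\RR\cup\SS\subseteq{\geqslant}$ and $\REL{(\RR\setminus{>})}{(\SS\setminus{>})}$ terminating, together with an infinite $\to_\RS$ derivation. The first step is to switch between the relative and the absolute view: unfolding the definition of $\to_\RS$, an infinite $\to_\RS$ derivation yields an infinite derivation $s_0 \to s_1 \to s_2 \to \cdots$ over $\RR\cup\SS$ in which infinitely many steps apply a rule of $\RR$, and conversely every such derivation is an infinite $\to_\RS$ derivation. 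So it suffices to show there is no infinite $\to_{\RR\cup\SS}$ derivation with infinitely many $\RR$-steps.

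Next I classify each step $s_i\to s_{i+1}$ by the applied rule $l\to r$. Since $\RR\cup\SS\subseteq{\geqslant}$ and $\geqslant$ is closed under contexts and substitutions, every step gives $s_i\geqslant s_{i+1}$; if in addition $l>r$, then closure of $>$ under contexts and substitutions gives $s_i>s_{i+1}$ — call these the \emph{strict} steps. The key claim is that only finitely many steps are strict. Suppose not, and let $j_1<j_2<\cdots$ enumerate the strict steps. Using that $\geqslant$ is a preorder to contract the weak segments between consecutive strict steps, and compatibility ${\geqslant}\cdot{>}\cdot{\geqslant}\subseteq{>}$ (which, with reflexivity of $\geqslant$, yields ${>}\cdot{\geqslant}\subseteq{>}$), one obtains $s_{j_1}>s_{j_2}>s_{j_3}>\cdots$, contradicting well-foundedness of $>$.

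Hence there is an index $N$ such that no step $s_i\to s_{i+1}$ with $i\geqslant N$ is strict, i.e., every such step uses a rule of $(\RR\setminus{>})\cup(\SS\setminus{>})$. Since only finitely many $\RR$-steps were strict and only finitely many $\RR$-steps occur before $N$, the tail $s_N\to s_{N+1}\to\cdots$ still contains infinitely many $\RR$-steps, all of which are now $(\RR\setminus{>})$-steps. Refolding the definition of relative rewriting, this tail is an infinite $\to_{\REL{(\RR\setminus{>})}{(\SS\setminus{>})}}$ derivation, contradicting the assumed termination of $\REL{(\RR\setminus{>})}{(\SS\setminus{>})}$. Therefore no infinite $\to_\RS$ derivation exists, i.e., $\RS$ is terminating.

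\textbf{Main obstacle.} The argument is essentially bookkeeping, and the only points requiring care — as opposed to routine unfolding — are the two translations between infinite $\to_\RS$ derivations and infinite $\to_{\RR\cup\SS}$ derivations carrying infinitely many $\RR$-steps, and the segment-contraction step that converts infinitely many strict steps into an infinite $>$-chain; the latter is precisely where compatibility of $\geqslant$ with $>$ and well-foundedness of $>$ are used, and closure of $\geqslant$ and $>$ under contexts and substitutions is what lets rule-level orientation lift to the level of rewrite steps.
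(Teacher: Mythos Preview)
Your argument is correct and is the standard proof of this relative-termination result. Note, however, that the paper does not actually prove Theorem~\ref{THM:rt}: it is stated with an immediate \qed and attributed to Geser~\cite{G90}, so there is no paper proof to compare against. Your proof is exactly the expected one---unfold relative rewriting to an infinite $\to_{\RR\cup\SS}$ sequence with infinitely many $\RR$-steps, use closure of $\geqslant$ and $>$ under contexts and substitutions to lift rule orientation to steps, use compatibility and well-foundedness of $>$ to bound the number of strict steps, and then refold the tail into an infinite $\REL{(\RR\setminus{>})}{(\SS\setminus{>})}$ derivation---and it would serve perfectly well as the omitted justification.
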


A critical overlap
$(l_1 \to r_1,p,l_2 \to r_2)_\mu$ of a TRS $\RR$ consists
of variants $l_1 \to r_1$ and $l_2 \to r_2$ of rewrite rules of $\RR$
without common variables, a position $p \in \FPos(l_2)$, and a most
general unifier $\mu$ of $l_1$ and $l_2|_p$. If $p = \epsilon$ then we
require that $l_1\to r_1$ and $l_2 \to r_2$ are not variants.
From a critical overlap
$(l_1 \to r_1,p,l_2 \to r_2)_\mu$ we obtain a 
critical peak $l_2\mu[r_1\mu]_p \cps{l_2\mu} r_2\mu$ and a
critical pair $l_2\mu[r_1\mu]_p \cp r_2\mu$.

If $l \to r \in \RR$ and $p$ is
a position, we call the pair $\pi = \langle p, l \to r \rangle$ a
redex pattern, and write $l_\pi$, $r_\pi$, $p_\pi$ for its left-hand
side, right-hand side, and position, respectively. We write
$\xr{\pi}$ (or $\xr{p_\pi,l_\pi\to r_\pi}$) for a rewrite step at
position $p_\pi$ using the rule
$l_\pi \to r_\pi$. A redex pattern
$\pi$ matches a term $t$ if $t|_{p_\pi}$ is an instance of $l_\pi$.
If $\pi$ matches $t$, there is a unique reduct $t^\pi$ with
$t \xr{\pi} t^\pi$.

Let $\pi_1$ and $\pi_2$ be redex patterns that match a common term.
They are called parallel
($\pi_1 \parallel \pi_2$) if $p_{\pi_1} \parallel p_{\pi_2}$.
If $p_{\pi_2} \leqslant p_{\pi_1}$ and
$p_{\pi_1} \!\!\setminus p_{\pi_2} \in \Pos_\FF(l_{\pi_2})$
then $\pi_1$ and $\pi_2$ overlap critically;
otherwise they are called orthogonal ($\pi_1 \perp \pi_2$).
Note that $\pi_1 \parallel \pi_2$ implies $\pi_1 \perp \pi_2$.
We write $P \perp Q$ if $\pi \perp \pi'$ for all $\pi \in P$ and
$\pi' \in Q$
and similarly $P \parallel Q$ if
$\pi \parallel \pi'$ for all $\pi \in P$ and $\pi' \in Q$.
If $P$ is a set of pairwise parallel redex patterns matching a term
$t$, we denote by $t \xR{P} t'$ the parallel rewrite step from $t$
to $t'$ by $P$, where $t' = t^{\pi_1\cdots\pi_n}$ if
$P = \{ \seq{\pi} \}$.
We allow $P$ to be abbreviated to a set of positions in
$t \xR{P} t'$.

We write $\langle A, \{ \to_\alpha \}_{\alpha \in I} \rangle$ to
denote the ARS $\langle A, \to \rangle$ where $\to$ is the union of
$\to_\alpha$ for all ${\alpha \in I}$. 
Let $\langle A, \{ \to_\alpha \}_{\alpha \in I} \rangle$ be an ARS and
let 
$\geqslant$ and $>$ be relations on~$I$. We write 
$\xr[\Vee\alpha_1 \cdots\,\alpha_n]{}$ for the union of $\to_\beta$
where $\beta \lablt \alpha_i$ for some $1 \leqslant i \leqslant n$.
We call $\to_\alpha$ and $\to_\beta$
\emph{decreasing}
(with respect to $\geqslant$ and $>$) if
\[
{\xl[\alpha]{}} \cdot {\xr[\beta]{}} \subseteq 
{\xr[\Vee\alpha]{*}} \cdot {\xr[\Veq\beta]{=}} \cdot
{\xr[\Vee\alpha\beta]{*}} \cdot {\xl[\Vee\alpha\beta]{*}} \cdot
{\xl[\Veq\alpha]{=}} \cdot {\xl[\Vee\beta]{*}}
\]
An ARS $\langle A, \{ \to_\alpha \}_{\alpha \in I} \rangle$ is
\emph{decreasing} if there exists a preorder
$\geqslant$ and a well-founded order $>$ such that 
$\geqslant$ and $>$ are compatible and
$\to_\alpha$ and $\to_\beta$ are decreasing for
all $\alpha, \beta \in I$ with respect to $\geqslant$ and $>$. 

The following theorem is a reformulation of a result
obtained by van Oostrom~\cite{vO94} (where $\geqslant$ is
the identity relation).
While allowing a preorder $\geqslant$ does not add power, 
it is more convenient for our purposes.

\begin{theorem}
\label{THM:dd}
Every decreasing ARS is confluent.
\qed
\end{theorem}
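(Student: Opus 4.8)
The plan is to derive the theorem from van Oostrom's original decreasing diagrams result~\cite{vO94}, which is the special case in which $\geqslant$ is the identity on $I$; the only extra ingredient is a routine quotient that reduces an arbitrary preorder to that case. So I would proceed in two steps --- (i) relabel so that the index preorder becomes equality, and then (ii) invoke the classical inductive completion argument for the identity case.

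For step (i), put $\alpha \equiv \beta$ iff $\alpha \geqslant \beta$ and $\beta \geqslant \alpha$. Since $\geqslant$ is a preorder, $\equiv$ is an equivalence relation. Compatibility of $(\geqslant,>)$ forces that no two $\equiv$-equivalent labels are $>$-comparable: if $\alpha \equiv \beta$ and $\alpha > \beta$ then, since $\beta \geqslant \alpha$ and $\beta \geqslant \beta$, the inclusion ${\geqslant} \cdot {>} \cdot {\geqslant} \subseteq {>}$ yields $\beta > \beta$, contradicting well-foundedness. The same inclusion gives that $\alpha \equiv \alpha'$, $\beta \equiv \beta'$ and $\alpha > \beta$ imply $\alpha' > \beta'$, so $>$ descends to a strict order $\gg$ on $I/{\equiv}$, which is well-founded because an infinite $\gg$-descending chain lifts (again using compatibility to splice chosen representatives) to an infinite $>$-descending one. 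Now replace $\langle A, \{\to_\alpha\}_{\alpha\in I}\rangle$ by $\langle A, \{\to_{[\alpha]}\}_{[\alpha]\in I/{\equiv}}\rangle$, where $\to_{[\alpha]} = \bigcup_{\gamma \equiv \alpha} \to_\gamma$, with $\gg$ and the identity preorder on the index side; the underlying rewrite relation $\bigcup_\alpha \to_\alpha$ --- hence confluence --- is unchanged. Each clause of the decreasingness inclusion for $\alpha,\beta$ translates to the corresponding clause for $[\alpha],[\beta]$, using that $\gamma < \alpha$ implies $[\gamma] \ll [\alpha]$, that $\gamma \equiv \alpha$ implies $[\gamma] = [\alpha]$, and that the single-step parts $\xr[\Veq\beta]{=}$ and $\xl[\Veq\alpha]{=}$ of a diagram carry only labels $\gamma$ with $[\gamma]$ below or equal to $[\beta]$, resp.\ $[\alpha]$ --- which is exactly the content of ``allowing a preorder does not add power''. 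Thus the relabelled ARS is decreasing in the sense of~\cite{vO94}.

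For step (ii) one establishes the Church--Rosser property ${\leftrightarrow^*} \subseteq {\to^*} \cdot {\leftarrow^*}$, which is equivalent to confluence, by well-founded induction on a measure $|\Gamma|$ attached to each conversion $\Gamma\colon a \leftrightarrow^* b$. If $\Gamma$ already has the valley shape ${\to^*} \cdot {\leftarrow^*}$ we are done. Otherwise its direction sequence is not of the form $\to^*\leftarrow^*$, so $\Gamma$ contains a local peak $c \xl[\alpha]{} d \xr[\beta]{} e$ as a subpath; by the hypothesis this subpath is replaced by a conversion from $c$ to $e$ of the shape $\xr[\Vee\alpha]{*} \cdot \xr[\Veq\beta]{=} \cdot \xr[\Vee\alpha\beta]{*} \cdot \xl[\Vee\alpha\beta]{*} \cdot \xl[\Veq\alpha]{=} \cdot \xl[\Vee\beta]{*}$, producing a conversion $\Gamma'\colon a \leftrightarrow^* b$ with $|\Gamma'| < |\Gamma|$. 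The induction hypothesis then gives $a \to^* s \leftarrow^* b$ for some $s$; since $a \leftrightarrow^* b$ was arbitrary, the relabelled ARS is confluent, and by step (i) so is the original one.

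The technical heart, and the step I expect to be the real obstacle, is the design of the measure $|\Gamma|$ together with the verification that the replacement above strictly decreases it. The naive multiset of step labels does not do: the replacement deletes a peak carrying the pair $\{\alpha,\beta\}$ but may reinsert one step with label $\beta$ and one with label $\alpha$, plus arbitrarily many new steps with labels below $\alpha$ or below $\beta$, and it may create new local peaks at the two points where the inserted segment is spliced into $\Gamma$; one therefore needs van Oostrom's finer, multiset-based well-founded order on conversions, engineered precisely to absorb these three effects. Everything else --- the quotient construction, the clause-by-clause translation of the inclusion, and the reduction of confluence to the Church--Rosser property --- is routine bookkeeping.
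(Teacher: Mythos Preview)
The paper supplies no proof of this theorem; it is stated without proof, and the surrounding text attributes the result to van Oostrom~\cite{vO94} with the remark that allowing a preorder~$\geqslant$ ``does not add power''. Your two-step plan is a natural way to make that remark precise, and step~(ii) is an accurate sketch of van Oostrom's argument.

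There is, however, a genuine gap in step~(i). For the $\xr[\Veq\beta]{=}$ slot to translate to the quotient you need every label~$\gamma$ with $\beta \geqslant \gamma$ to satisfy $[\gamma] = [\beta]$ or $[\beta] \gg [\gamma]$, and compatibility does not give this when~$\geqslant$ strictly refines ${>} \cup {\equiv}$. Concretely, take $I = \{a,b\}$ with $a \geqslant b$, $b \not\geqslant a$, and ${>} = \varnothing$: the pair is compatible, $>$ is well-founded, $\equiv$ is the identity and $\gg$ is empty; a peak ${\xl[a]{}} \cdot {\xr[a]{}}$ closed as ${\xr[b]{}} \cdot {\xl[b]{}}$ is decreasing for $(\geqslant,>)$ but not for $({=},\gg)$ in the sense of~\cite{vO94}, since the $b$-steps fit no slot of the required shape. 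So the quotient does not in general reduce the preorder case to~\cite{vO94}. The repair is to drop step~(i) and run step~(ii) directly for the pair $(\geqslant,>)$: van Oostrom's measure on conversions and its decrease lemma do extend to a compatible preorder, but~$\geqslant$ must be built into the multiset comparison (working modulo~$\equiv$ and invoking compatibility when splicing) rather than eliminated beforehand. Your instinct that designing the measure is the real obstacle is exactly right; that is also where the preorder has to be absorbed.
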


\section{Labelings for Rewrite Steps}
\label{LAB:one:main}

In this section we present constraints (on a labeling) such that
decreasingness of the critical peaks ensures
confluence of linear (Section~\ref{CR:l}) and left-linear
(Section~\ref{CR:ll}) TRSs.
Furthermore, we show that if two labelings satisfy these
conditions then also their lexicographic combination satisfies them.

For a local peak
\begin{equation}
\label{EQ:peak}
t = s[r_1\sigma]_p \from s[l_1\sigma]_p = s = s[l_2\sigma]_q \to
s[r_2\sigma]_q = u
\end{equation}
there are three possibilities (modulo symmetry):
\begin{itemize}
\item[\subref{FIG:peaks:p}]
$p \parallel q$
\XP,
\item[\subref{FIG:peaks:o}]
$q \leqslant p$ and $p \backslash q \in \FPos(l_2)$
\XO,
\item[\subref{FIG:peaks:v}]
$q < p$ and $p \backslash q \notin \FPos(l_2)$
\XV.
\end{itemize}
\begin{figure}
\subfloat[\label{FIG:peaks:p} \XP]{
\begin{tikzpicture}[scale=0.5]
\small
\node at (3,6) (s)            {$s$};
\node at (0,3) (t)            {$t$};
\node at (6,3) (u)            {$u$};
\node at (3,0) (v)            {$v$};
\draw[to] (s) to node[left]  {} (t);
\draw[to] (t) to node[left]  {} (v);
\draw[to] (s) to node[right] {} (u);
\draw[to] (u) to node[right] {} (v);
\end{tikzpicture}
}
\hfill
\subfloat[\label{FIG:peaks:o} \XO]{
\begin{tikzpicture}[scale=0.5]
\small
\node at (3,6) (s)            {$s$};
\node at (0,3) (t)            {$t$};
\node at (6,3) (u)            {$u$};
\node at (1,2) (t1)           {$\cdot$};
\node at (5,2) (u1)           {$\cdot$};
\node at (3,0) (v)            {};
\node at (3,1)                {\LARGE ?};
\draw[to] (s) to node[anchor=south east] {} (t);
\draw[to] (t) to node[anchor=north east] {} (t1);
\draw[to] (s) to node[anchor=south west] {} (u);
\draw[to] (u) to node[anchor=north west] {} (u1);
\end{tikzpicture}
}
\hfill
\subfloat[\label{FIG:peaks:v} \XV]{
\begin{tikzpicture}[scale=0.5]
\small
\node at (3,6) (s)            {$s$};
\node at (0,3) (t)            {$t$};
\node at (1.5,1.5) (t1)       {$t_1$};
\node at (6,3) (u)            {$u$};
\node at (3,0) (v)            {$v$};
\draw[to]   (s)  to node[anchor=south east] {} (t);
\draw[To]   (t)  to node[anchor=south east] {} (t1);
\draw[to]   (t1) to node[anchor=north east] {} (v);
\draw[to]   (s)  to node[anchor=south west] {} (u);
\draw[To]   (u)  to node[anchor=north west] {} (v);
\end{tikzpicture}
}
\caption{Three kinds of local peaks.}
\label{FIG:peaks}
\end{figure}
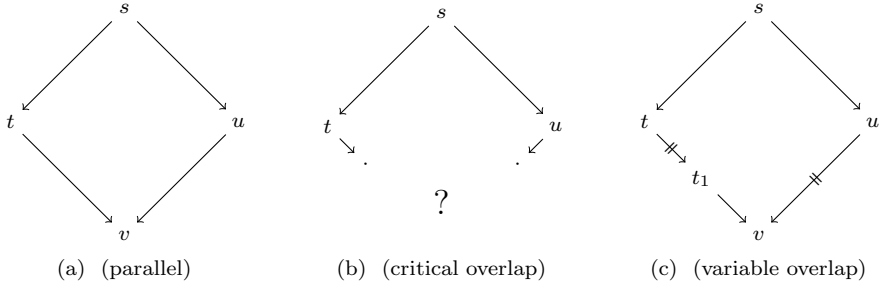
These cases are visualized in Figure~\ref{FIG:peaks}.
Figure~\ref{FIG:peaks}\subref{FIG:peaks:p}
shows the shape of a local peak where the
steps take place at parallel positions. 
Here we have $s \xr{p, l_1 \to r_1} t$ and
$u \xr{p, l_1 \to r_1} v$
as well as $s \xr{q, l_2 \to r_2} u$ and
$t \xr{q, l_2 \to r_2} v$,
i.e., the steps drawn at opposing sides in the diagram are due to
the same rules. The question mark in
Figure~\ref{FIG:peaks}\subref{FIG:peaks:o} conveys that joinability of
critical overlaps may depend on auxiliary rules.
Variable overlaps (Figure~\ref{FIG:peaks}\subref{FIG:peaks:v})
can again be joined by the rules involved in the diverging step.
More precisely, if $q'$ is the unique position in $\VPos(l_2)$ such
that $qq' \leqslant p$, $x = l_2|_{q'}$, $|l_2|_x = m$, and
$|r_2|_x = n$ then
we have $t \xr[l_1 \to r_1]{m-1}t_1$, $t_1 \xr[l_2 \to r_2]{} v$,
and $u \xr[l_1\to r_1]{n} v$.

Labelings are used to compare rewrite steps. 
In the sequel we denote the set of all rewrite steps for a TRS $\RR$
by $\STEPS{\RR}$ and elements from this set by capital Greek letters
$\Gamma$ and $\Delta$. Furthermore if
$\Gamma = s \xr{p, l \to r} t$ then
$C[\Gamma\sigma]$ denotes the rewrite step
$C[s\sigma] \xr{p'p, l \to r} C[t\sigma]$ 
for any substitution $\sigma$ and context $C$ with $C|_{p'} = \HOLE$.

\begin{definition}
\label{DEF:lab}
Let $\RR$ be a TRS. A \emph{labeling function}
$\ell\colon \STEPS{\RR} \to W$ is a mapping from rewrite steps into
some set $W$. A \emph{labeling} $(\ell,\geqslant,>)$ for $\RR$
consists of a labeling function $\ell$,
a preorder $\geqslant$, and a well-founded order $>$
such that $\geqslant$ and $>$ are compatible and for 
all rewrite steps $\Gamma, \Delta \in \STEPS{\RR}$,
contexts $C$ and substitutions $\sigma$:
\begin{enumerate}
\item
\label{DEF:lab:1}
$\ell(\Gamma) \geqslant \ell(\Delta)$ implies 
$\ell(C[\Gamma\sigma]) \geqslant \ell(C[\Delta\sigma])$,
and 
\item
\label{DEF:lab:2}
$\ell(\Gamma) > \ell(\Delta)$ implies 
$\ell(C[\Gamma\sigma]) > \ell(C[\Delta\sigma])$.
\end{enumerate}
\end{definition}

All labelings we present satisfy ${>} \subseteq {\geqslant}$, which allows
to avoid tedious case distinctions, and we assume this property henceforth.
We do so without loss of generality, because
$(({>} \cup {\geqslant})^*,>)$ 
satisfies the conditions of Definition~\ref{DEF:lab} if
$(\geqslant,>)$ does.

In the sequel $W$, $\geqslant$, and $>$ are left implicit when clear
from the context and a labeling is identified with the labeling
function $\ell$. We use the terminology that a labeling $\ell$ is 
\emph{monotone} and \emph{stable}
if properties \ref{DEF:lab:1} 
and \ref{DEF:lab:2} of Definition~\ref{DEF:lab} hold.
Abstract labels, i.e., labels that are unknown, are represented by
lowercase Greek letters $\alpha, \beta, \gamma$, and $\delta$.
We write $s \xr[\alpha]{\pi} t$ (or simply $s \to_\alpha t$)
if $\ell(s \xr{\pi} t) = \alpha$.
Often we leave the labeling $\ell$ implicit and just attach labels to
arrows. A local peak $t \from s \to u$ is called
\emph{decreasing for $\ell$} if
there are labels $\alpha$ and $\beta$ such that
$t \xl[\alpha]{} s \xr[\beta]{} u$, and $\to_{\alpha}$ and $\to_\beta$
are decreasing with respect to $\geqslant$ and $>$.
To employ Theorem~\ref{THM:dd} for TRSs, decreasingness of the ARS
$\langle \FVTERMS, \{ \to_w \}_{w \in W} \rangle$ must be shown.

\begin{figure}
\subfloat[\label{FIG:p} \XP]{
\begin{tikzpicture}[scale=0.5]
\small
\node at (3,6) (s) {$s$};
\node at (0,3) (t) {$t$};
\node at (6,3) (u) {$u$};
\node at (3,0) (v) {$v$};
\draw[to] (s) to node[above,sloped] {$\alpha$} (t);
\draw[to] (t) to node[below,sloped] {$\delta$} (v);
\draw[to] (s) to node[above,sloped] {$\beta$}  (u);
\draw[to] (u) to node[below,sloped] {$\gamma$} (v);
\end{tikzpicture}
}
\hfill
\subfloat[\label{FIG:vl} \XVL]{
\begin{tikzpicture}[scale=0.5]
\small
\node at (3,6) (s) {$s$};
\node at (0,3) (t) {$t$};
\node at (6,3) (u) {$u$};
\node at (3,0) (v) {$v$};
\draw[to] (s) to node[above,sloped] {$\alpha$} (t);
\draw[to] (t) to node[below,sloped] {$\delta$} (v);
\draw[to] (s) to node[above,sloped] {$\beta$}  (u);
\draw[to] (u) to node[below,sloped] {$\gamma$} (v);
\draw (u) to node[anchor=south,rotate=45] {$=$} (v);
\end{tikzpicture}
}
\hfill
\subfloat[\label{FIG:vll} \XVLL]{
\begin{tikzpicture}[scale=0.5]
\small
\node at (3,6) (s) {$s$};
\node at (0,3) (t) {$t$};
\node at (6,3) (u) {$u$};
\node at (3,0) (v) {$v$};
\draw[to] (s) to node[above,sloped] {$\alpha$}            (t);
\draw[to] (t) to node[below,sloped] {$\delta$}            (v);
\draw[to] (s) to node[above,sloped] {$\beta$}             (u);
\draw[To] (u) to node[below,sloped] {$\overline{\gamma}$} (v);
\end{tikzpicture}
}
\caption{Labeled local peaks.}
\label{FIG:lpeaks}
\end{figure}
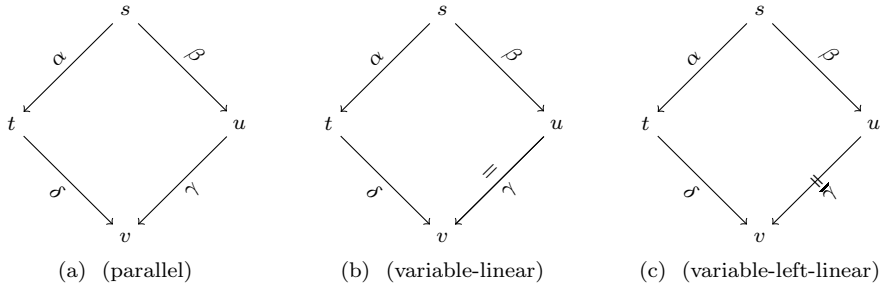

In this article we investigate conditions on a labeling such that local
peaks according to \XP and \XV are decreasing automatically. This is
desirable since in general there are infinitely many local peaks
corresponding to these cases (even if the underlying TRS has
finitely many rules).
There are also infinitely many local peaks according to \XO in general,
but for a finite TRS they are captured by the finitely many 
critical overlaps.
Still, it is undecidable if they are decreasingly
joinable~\cite{HM11}.

For later reference, Figure~\ref{FIG:lpeaks} shows labeled 
local peaks for
the case \XP (Figure~\ref{FIG:lpeaks}\subref{FIG:p}) and \XV if the
rule $l_2 \to r_2$ in local peak~\eqref{EQ:peak} is linear
(Figure~\ref{FIG:lpeaks}\subref{FIG:vl}) and
left-linear (Figure~\ref{FIG:lpeaks}\subref{FIG:vll}), respectively.
In Figure~\ref{FIG:lpeaks}\subref{FIG:vll} the expression
$\overline{\gamma}$ denotes
a sequence of labels $\seq{\gamma}$.
In the subsequent analysis we will always use the fact that the
local peaks in Figure~\ref{FIG:lpeaks} can be closed by the rules
involved in the peak (applied at opposing sides in the diagram).

\subsection{Linear TRSs}
\label{CR:l}

The next definition presents sufficient abstract conditions on a
labeling such that local peaks according to the cases \XP and \XVL in
Figure~\ref{FIG:lpeaks} are decreasing. We use the observation that
for linear TRSs the \XP case can be seen as an instance of the \XVL case
to shorten proofs.

\begin{definition}
\label{DEF:l}
Let $\ell$ be a labeling for a TRS $\RR$.
We call $\ell$ an \emph{L-labeling (for $\RR$)}
if for local peaks according to \XP and \XVL we have 
$\alpha \geqslant \gamma$ and $\beta \geqslant \delta$
in Figures~\ref{FIG:lpeaks}\subref{FIG:p}
and~\ref{FIG:lpeaks}\subref{FIG:vl}, respectively.
\end{definition}

The local diagram in Figure~\ref{FIG:combination}\subref{FIG:l1}
visualizes the conditions on an L-labeling more succinctly. 
We will use L-labelings also for left-linear TRSs,
where no conditions are required for local peaks different from
\XP and \XVL.
We call the critical peaks of a TRS $\RR$
\emph{$\Phi$-decreasing} if there exists a $\Phi$-labeling $\ell$
for $\RR$ such that the critical peaks of $\RR$ are 
decreasing for~$\ell$.
In the sequel we will introduce further labelings, e.g., 
LL-labelings and weak LL-labelings. The placeholder $\Phi$ 
avoids the need for repeating the definition of decreasingness
for these labelings.

The next theorem states that L-labelings may be used to show
confluence of linear TRSs.

\begin{theorem}
\label{THM:l}
Let $\RR$ be a linear TRS. If the critical peaks of $\RR$ are
L-decreasing then $\RR$ is confluent.
\end{theorem}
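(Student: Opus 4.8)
The plan is to invoke the decreasing diagrams theorem (Theorem~\ref{THM:dd}) for the ARS $\langle \FVTERMS, \{ \to_w \}_{w \in W} \rangle$ induced by the L-labeling $\ell$ witnessing L-decreasingness of the critical peaks. Since $\RR$ is linear, confluence follows once every local peak is shown decreasing for $\ell$. By the case analysis following~\eqref{EQ:peak} there are exactly three shapes of local peaks (modulo symmetry): \XP, \XO, and \XV. The \XO case is handled by assumption: every such peak is an instance $C[\Gamma'\sigma] \from C[\cdot]\sigma \to C[\Delta'\sigma]$ of one of the finitely many critical peaks, which are decreasing for $\ell$; monotonicity and stability of $\ell$ (Definition~\ref{DEF:lab}) propagate the labels and hence decreasingness through the context and substitution. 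So the real work is the \XP and \XV cases.

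For the \XV case, note that linearity of $l_2$ means $m = |l_2|_x = 1$ in the description after the peak list, so the variable-overlap diagram collapses to the shape in Figure~\ref{FIG:lpeaks}\subref{FIG:vl}: we have $t \xr[\gamma]{} u$-side steps reduced to $t \to^= \cdots$, and the closing steps are $u \xr[\gamma]{} v$ and $t \xr[\delta]{} v$ where the two new steps use the rules $l_1 \to r_1$ and $l_2 \to r_2$ already present in the peak. The L-labeling condition gives $\alpha \geqslant \gamma$ and $\beta \geqslant \delta$. Feeding $t \xl[\alpha]{} s \xr[\beta]{} u$ together with the closing conversion $t \xr[\delta]{} v \xl[\gamma]{} u$ into the decreasingness inclusion: from $\alpha \geqslant \gamma$ we get $\gamma \lable \alpha$, so the step $u \xr[\gamma]{} v$ is an $\xr[\Veq\alpha]{=}$-step; from $\beta \geqslant \delta$ we get that $t \xr[\delta]{} v$ is an $\xr[\Veq\beta]{=}$-step, which reversed is the $\xl[\Veq\alpha]{=}$-component only after we match it to the right side of the diagram — here one uses the symmetric form, placing $\delta$ on the $\xr[\Vee\alpha\beta]{*}$/$\xl[\Veq\alpha]{=}$ portion. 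Concretely, the conversion $t \xr[\delta]{} v \xl[\gamma]{} u$ fits the pattern $\xr[\Vee\alpha]{*} \cdot \xr[\Veq\beta]{=} \cdot \xr[\Vee\alpha\beta]{*} \cdot \xl[\Vee\alpha\beta]{*} \cdot \xl[\Veq\alpha]{=} \cdot \xl[\Vee\beta]{*}$ by taking all $*$-parts empty, the $\xr[\Veq\beta]{=}$ step to be $t \xr[\delta]{} v$ (legal since $\delta \lable \beta$), and the $\xl[\Veq\alpha]{=}$ step to be $v \xl[\gamma]{} u$ (legal since $\gamma \lable \alpha$). For \XP, linearity lets us regard the parallel peak as an instance of the \XVL case (as the excerpt announces), so the same bookkeeping with the L-labeling inequalities $\alpha \geqslant \gamma$, $\beta \geqslant \delta$ applies verbatim; alternatively one observes directly that the \XP diamond in Figure~\ref{FIG:lpeaks}\subref{FIG:p} closes with the opposite rules and the identical label comparisons.

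The main obstacle I anticipate is purely organizational rather than deep: getting the orientations right in the decreasingness inclusion (which side is $\alpha$, which is $\beta$, and that the closing steps land in $\xr[\Veq\beta]{=}$ resp.\ $\xl[\Veq\alpha]{=}$ rather than the forbidden strict-label slots), and making the reduction of \XP to \XVL precise — one must check that the positions and the rule used on the closing side genuinely instantiate the variable-overlap pattern when the "variable" is a fresh context hole, so that Definition~\ref{DEF:l}'s hypothesis on \XVL is actually applicable. Once the diagram chase is set up correctly, no well-foundedness argument beyond what Theorem~\ref{THM:dd} already supplies is needed, and monotonicity/stability of $\ell$ does all the lifting from critical peaks to arbitrary \XO peaks.
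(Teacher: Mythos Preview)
Your proposal is correct and follows essentially the same approach as the paper: invoke Theorem~\ref{THM:dd}, split local peaks into \XP, \XVL (using linearity to collapse \XV to \XVL), and \XO; the first two are decreasing by the L-labeling conditions, and the \XO case reduces to a critical peak instance whose decreasingness is transported by monotonicity and stability of the labeling. The paper's own proof is terser, simply asserting that the \XP and \XVL cases are decreasing ``by definition of an L-labeling'' without spelling out how $\alpha \geqslant \gamma$ and $\beta \geqslant \delta$ slot into the decreasingness inclusion, whereas you (correctly) make this explicit; your worry about reducing \XP to \XVL is moot since Definition~\ref{DEF:l} imposes the required inequalities on both shapes directly.
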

\begin{proof}
By assumption the critical peaks of $\RR$ are decreasing for some
L-labeling~$\ell$. We establish confluence of $\RR$ by
Theorem~\ref{THM:dd}, i.e., show decreasingness of the
ARS $\langle \FVTERMS, \to_\RR \rangle$ where rewrite steps are
labeled according to $\ell$. Since $\RR$ is linear, local peaks have
the shape \XP, \XVL, or \XO. By definition of an L-labeling the
former two are decreasing. Now consider a local peak
according to \XO, i.e., for the local peak
\eqref{EQ:peak} we have
$q \leqslant p$ and $p \backslash q \in \FPos(l_2)$. Let
$p' = p \backslash q$.
Then $t|_q \xl{} s|_q \xr{} u|_q$ must be an instance of
a critical peak
$l_2\mu[r_1\mu]_{p'} \xl{} l_2[l_1\mu]_{p'} = l_2\mu \xr{} r_2\mu$
which is decreasing by
assumption. By monotonicity and stability of $\ell$ we obtain 
decreasingness of the local peak \eqref{EQ:peak}.
\qed
\end{proof}

We recall the rule labeling of van Oostrom~\cite{vO08a}, 
parametrized by a mapping $i\colon \RR \to \Nat$.
Often $i$ is left implicit.
The rule labeling satisfies the constraints of an L-labeling.

\begin{lemma}
\label{LEM:lrl}
Let $\RR$ be a TRS and
$\ell^i_\RL(s \xr{\pi} t) = i(l_\pi \to r_\pi)$.
Then $(\ell^i_\RL,\geqslant_\Nat,>_\Nat)$ is an
L-labeling for $\RR$.
\end{lemma}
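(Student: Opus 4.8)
The plan is to unfold the definition of an L-labeling (Definition~\ref{DEF:l}) and verify the two inequalities $\alpha \geqslant \gamma$ and $\beta \geqslant \delta$ for the rule labeling $\ell^i_\RL$ directly, using the fact that in both the \XP and \XVL cases the opposing steps in the diagram use the same rewrite rules as the diverging peak. First I would recall the shape of the diagrams from Figure~\ref{FIG:lpeaks}: in case \XP (Figure~\ref{FIG:lpeaks}\subref{FIG:p}) the step labeled $\alpha$ (namely $s \xr{p,l_1\to r_1} t$) uses rule $l_1\to r_1$, and so does the step labeled $\gamma$ (namely $u \xr{p,l_1\to r_1} v$), since the parallel-peak diagram closes by applying the same rules at opposing sides, as stated in the paragraph following~\eqref{EQ:peak}. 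Likewise $\beta$ and $\delta$ both come from $l_2\to r_2$.

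Then the key computation is trivial: $\ell^i_\RL$ only looks at the rule, so $\ell^i_\RL$ of the $\alpha$-step equals $i(l_1\to r_1)$ which equals $\ell^i_\RL$ of the $\gamma$-step, giving $\alpha = \gamma$, hence $\alpha \geqslant_\Nat \gamma$; symmetrically $\beta = \delta \geqslant_\Nat \delta$. For case \XVL (Figure~\ref{FIG:lpeaks}\subref{FIG:vl}) I would argue the same way, noting that here the lower-right step is an $=$-step (possibly empty) whose label, if present, is again $i(l_2\to r_2)$, so $\beta \geqslant_\Nat \delta$ still holds, and the left side again reuses $l_1\to r_1$ so $\alpha = \gamma$. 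I should also check the preliminary requirement of Definition~\ref{DEF:lab}: since $\ell^i_\RL$ ignores both context and substitution, the value $\ell^i_\RL(C[\Gamma\sigma]) = i(l_\Gamma\to r_\Gamma) = \ell^i_\RL(\Gamma)$, so monotonicity and stability are immediate, and $(\geqslant_\Nat,>_\Nat)$ is plainly a compatible pair of a preorder and a well-founded order with ${>_\Nat}\subseteq{\geqslant_\Nat}$; thus $(\ell^i_\RL,\geqslant_\Nat,>_\Nat)$ is indeed a labeling in the sense of Definition~\ref{DEF:lab}.

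I do not expect a genuine obstacle here — the lemma is essentially a sanity check that the rule labeling fits the new abstract framework. The only point requiring mild care is making precise the informal claim ``the local peaks in Figure~\ref{FIG:lpeaks} can be closed by the rules involved in the peak (applied at opposing sides in the diagram)'', i.e., that in the \XP diagram the $t\to v$ and $u\to v$ steps genuinely use $l_2\to r_2$ and $l_1\to r_1$ respectively, and in the \XVL diagram the analogous statement holds including the degenerate case where the $=$-step is empty (in which case there is no $\gamma$-label at all, and the condition $\beta\geqslant\delta$ is vacuous on that side or reads $\beta\geqslant_\Nat\beta$ after relabeling). Since this closure fact is explicitly asserted in the text preceding the definitions, I would simply invoke it rather than reprove it. The write-up is therefore two or three sentences: state that the opposing steps reuse the same rules, conclude $\alpha=\gamma$ and $\beta=\delta$ (hence the required $\geqslant_\Nat$ inequalities), and note monotonicity/stability are trivial because $\ell^i_\RL$ depends only on the applied rule.
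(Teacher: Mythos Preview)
Your proposal is correct and follows essentially the same approach as the paper: verify that $(\ell^i_\RL,\geqslant_\Nat,>_\Nat)$ is a labeling (trivial, since the label depends only on the applied rule), and then observe that in both the \XP and \XVL diagrams the opposing steps reuse the same rule, giving $\alpha=\gamma$ and $\beta=\delta$. One small slip in your write-up of the \XVL case: the $=$-step $u \to^= v$ (labeled $\gamma$) uses $l_1\to r_1$, while $t \to v$ (labeled $\delta$) uses $l_2\to r_2$ --- you swapped these, but since each pair of opposing labels is equal anyway the conclusion is unaffected.
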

\begin{proof}
First we show that $(\ell^i_\RL,\geqslant_\Nat,>_\Nat)$ is a labeling.
The preorder $\geqslant_\Nat$ and the well-founded order $>_\Nat$
are compatible. Furthermore
$\ell^i_\RL(s \xr{\pi} t) = i(l_\pi \to r_\pi)$
which ensures monotonicity and stability of $\ell^i_\RL$. 
Hence $(\ell^i_\RL,\geqslant_\Nat,>_\Nat)$ is a labeling.
Next we show the properties demanded in Definition~\ref{DEF:l}.
For local peaks according to cases
\XP and \XVL we recall that the steps drawn at opposite sides in the
diagram, e.g., the steps labeled with $\alpha$ and $\gamma$ 
($\beta$ and $\delta$) in Figures~\ref{FIG:lpeaks}\subref{FIG:p}
and~\ref{FIG:lpeaks}\subref{FIG:vl}, are due to applications of the
same rule. Hence $\alpha = \gamma$ and $\beta = \delta$
in Figures~\ref{FIG:lpeaks}\subref{FIG:p}
and~\ref{FIG:lpeaks}\subref{FIG:vl}, which shows the result.
\qed
\end{proof}

Inspired by~\cite{HM11} we propose a labeling based on relative
termination.

\begin{lemma}
\label{LEM:lsn}
Let $\RR$ be a TRS and $\ell_\SN(s\to t) = s$. Then 
$\ell^\SS_\SN = (\ell_\SN, {\to_\RR^*}, {\to_\SR^+})$
is an L-labeling for $\RR$, provided
${\to_\SS} \subseteq {\to_\RR}$ and \SR is terminating.
\end{lemma}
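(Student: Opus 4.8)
The plan is to verify, in turn, that $\ell^\SS_\SN = (\ell_\SN, {\to_\RR^*}, {\to_\SR^+})$ is a labeling in the sense of Definition~\ref{DEF:lab} and then that it satisfies the L-labeling conditions of Definition~\ref{DEF:l}. For the first part I would note that $\to_\RR^*$ is a preorder and that $\to_\SR^+$ is well-founded precisely because \SR is terminating by hypothesis; compatibility ${\to_\RR^*}\cdot{\to_\SR^+}\cdot{\to_\RR^*}\subseteq{\to_\SR^+}$ follows from ${\to_\SS}\subseteq{\to_\RR}$, since an $\RR$-step is then also an \SR-step (it is $\to_\SS^* \cdot \to_\RR \cdot \to_\SS^*$ with empty \SS-parts, or absorbable into the $\SR^+$ by the $\SS\subseteq\RR$ inclusion), so flanking a nonempty $\SR^+$ sequence by $\RR^*$ stays inside $\SR^+$. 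Monotonicity and stability (conditions \ref{DEF:lab:1} and \ref{DEF:lab:2}) reduce to the observation that $\ell_\SN(s\to t)=s$, so $\ell(C[\Gamma\sigma])=C[s\sigma]$; since $\to_\RR$ and $\to_\SR$ are rewrite relations they are closed under contexts and substitutions, hence $s \to_\RR^* s'$ implies $C[s\sigma]\to_\RR^* C[s'\sigma]$, and likewise for $\to_\SR^+$.

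For the L-labeling conditions I would work through the two diagram shapes. In case \XP (Figure~\ref{FIG:lpeaks}\subref{FIG:p}) the source term of the step labeled $\alpha$ is $s$, while the source of the step labeled $\gamma$ is $u$, and $s\to_\RR u$; thus $\alpha\mathrel{(\to_\RR^*)}\gamma$, i.e.\ $\alpha\geqslant\gamma$, and symmetrically $s\to_\RR t$ gives $\beta\geqslant\delta$. In case \XVL (Figure~\ref{FIG:lpeaks}\subref{FIG:vl}) the step labeled $\gamma$ goes $u\to v$ with source $u$, and $s\to_\RR u$ (the diverging $l_2\to r_2$ step), so again $\alpha=\ell_\SN(s\to t)\mathrel{(\to_\RR^*)} u=\ell_\SN(u\to v)=\gamma$; and the step labeled $\delta$ has source $t_1$ with $s\to_\RR t\to_{l_1\to r_1}^{m-1} t_1$, so $\beta=\ell_\SN(s\to u)\mathrel{(\to_\RR^*)} t_1=\delta$, using here that $l_1\to r_1\in\RR$. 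Both conditions $\alpha\geqslant\gamma$ and $\beta\geqslant\delta$ hold, establishing the L-labeling property.

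The only delicate point, and the one I would present carefully, is the compatibility argument: one must make sure that an $\RR$-step prepended or appended to an $\SR^+$ chain does not break the derivation $\SR^+$ pattern. This is exactly where the hypothesis ${\to_\SS}\subseteq{\to_\RR}$ is needed — it guarantees that $\to_\RR$ and $\to_\SR$ have the same transitive closure's "at least one $\RR$-step" shape, so that $\to_\RR^* \cdot \to_\SR^+ \cdot \to_\RR^* \subseteq \to_\SR^+$. Everything else is a routine unfolding of $\ell_\SN(s\to t)=s$ together with stability of rewrite relations under contexts and substitutions. I would therefore organize the write-up as: (i) labeling conditions, spending most words on compatibility; (ii) the \XP and \XVL computations, which are a couple of lines each.
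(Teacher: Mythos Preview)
Your overall plan matches the paper's proof: verify the labeling axioms, then check the \XP and \XVL diagrams using $\alpha=\beta=s$, $\gamma=u$, $\delta=t$ together with $s\to_\RR t$ and $s\to_\RR u$.

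There is, however, a misattribution in what you call ``the only delicate point.'' Compatibility ${\to_\RR^*}\cdot{\to_\SR^+}\cdot{\to_\RR^*}\subseteq{\to_\SR^+}$ does \emph{not} require the hypothesis ${\to_\SS}\subseteq{\to_\RR}$. By definition ${\to_\SR}={\to_\RR^*}\cdot{\to_\SS}\cdot{\to_\RR^*}$, so any outer ${\to_\RR^*}$ prefix or suffix is absorbed directly into the first or last ${\to_\RR^*}$ block of the $\SR^+$ chain; no assumption on $\SS$ is needed. (Your parenthetical ``it is $\to_\SS^* \cdot \to_\RR \cdot \to_\SS^*$'' is the shape of $\to_\RS$, not $\to_\SR$, which suggests the two got swapped in your head.) The paper accordingly dispatches compatibility in half a sentence: ``By definition of relative rewriting, $\geqslant$ and $>$ are compatible.''

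Where ${\to_\SS}\subseteq{\to_\RR}$ \emph{is} used is to obtain ${>}\subseteq{\geqslant}$, i.e., ${\to_\SR^+}\subseteq{\to_\RR^*}$: from ${\to_\SS}\subseteq{\to_\RR}$ one gets ${\to_\SR}\subseteq{\to_\RR^+}\subseteq{\to_\RR^*}$. The paper records this explicitly because it adopts ${>}\subseteq{\geqslant}$ as a standing convention for all labelings (stated right after Definition~\ref{DEF:lab}). Your write-up omits this check entirely, so as it stands you never use the hypothesis for what it is actually needed for. The fix is easy: move the sentence about ${\to_\SS}\subseteq{\to_\RR}$ from the compatibility paragraph to a one-line verification of ${>}\subseteq{\geqslant}$.

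A minor stylistic point: in the \XVL case you invoke $t_1$ and $m-1$ intermediate steps, but since $l_2$ is left-linear here we have $m=1$ and $t_1=t$, so the source of the $\delta$-step is simply $t$. The paper exploits this and writes the \XP and \XVL cases uniformly as $\alpha=\beta=s$, $\gamma=u$, $\delta=t$.
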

\begin{proof}
Let ${\geqslant} = {\to_\RR^*}$ and ${>} = {\to_\SR^+}$.
First we show that $(\ell_\SN,\geqslant,>)$ is a labeling.
By definition of relative rewriting, $\geqslant$ and $>$ are
compatible and $>$ is well-founded by the termination assumption
of $\SR$. Since rewriting is closed under contexts
and substitutions, $\ell^\SS_\SN$ is monotone and stable and hence
a labeling.
Next we show the properties demanded in Definition~\ref{DEF:l}.
The assumption
${\to_\SS} \subseteq {\to_\RR}$ yields
${>} \subseteq {\geqslant}$.
Combining $\alpha = s = \beta$, $\gamma = u$, and $\delta = t$ with
$s \to_\RR t$ and $s \to_\RR u$ yields
$\alpha = \beta \geqslant \gamma, \delta$ for local peaks according
to \XP and \XVL in Figures~\ref{FIG:lpeaks}\subref{FIG:p}
and~\ref{FIG:lpeaks}\subref{FIG:vl}.
\qed
\end{proof}

The L-labeling from the previous lemma allows to establish a decrease
with respect to some steps of $\RR$.
The next lemma allows to combine L-labelings.
Let $\ell_1\colon \STEPS{\RR} \to W_1$ and
$\ell_2\colon \STEPS{\RR} \to W_2$.
Then $(\ell_1,\geqslant_1,>_1) \times (\ell_2,\geqslant_2,>_2)$ is
defined as $(\ell_1 \times \ell_2,\geqslant_{12},>_{12})$
where $\ell_1 \times \ell_2\colon \STEPS{\RR} \to W_1 \times W_2$
with
$(\ell_1 \times \ell_2)(\Gamma) = (\ell_1(\Gamma),\ell_2(\Gamma))$.
Furthermore $(x_1,x_2) \geqslant_{12} (y_1,y_2)$ if and only if
$x_1 >_1 y_1$ or $x_1 \geqslant_1 y_1$ and $x_2 \geqslant_2 y_2$
and $(x_1,x_2) >_{12} (y_1,y_2)$ if and only if
$x_1 >_1 y_1$ or $x_1 \geqslant_1 y_1$ and $x_2 >_2 y_2$.

\begin{lemma}
\label{LEM:llex}
Let $\ell_1$ and $\ell_2$ be L-labelings. Then
$\ell_1 \times \ell_2$ is an L-labeling.
\end{lemma}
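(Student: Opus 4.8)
The plan is to verify the two defining properties of an L-labeling (Definition~\ref{DEF:l}) for the lexicographic product $\ell_1 \times \ell_2$, namely that for local peaks according to \XP and \XVL we have $\alpha \geqslant_{12} \gamma$ and $\beta \geqslant_{12} \delta$ in Figures~\ref{FIG:lpeaks}\subref{FIG:p} and~\ref{FIG:lpeaks}\subref{FIG:vl}. But first one must check that $\ell_1 \times \ell_2$ together with $(\geqslant_{12},>_{12})$ is actually a labeling in the sense of Definition~\ref{DEF:lab}: that $\geqslant_{12}$ is a preorder, $>_{12}$ a well-founded order, the two are compatible, and that monotonicity and stability (properties~\ref{DEF:lab:1} and~\ref{DEF:lab:2}) are inherited from $\ell_1$ and $\ell_2$. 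Reflexivity and transitivity of $\geqslant_{12}$ are routine. Well-foundedness of $>_{12}$ is the standard fact that a lexicographic combination of a well-founded order with anything well-founded on the second coordinate is well-founded---here I would note that an infinite $>_{12}$-descent would, projected to the first coordinate, give an infinite $\geqslant_1$-chain with infinitely many $>_1$-steps (impossible by well-foundedness of $>_1$ together with compatibility), so eventually the first coordinate is $\geqslant_1$-stable and the second coordinate must descend infinitely by $>_2$, contradicting well-foundedness of $>_2$. Compatibility ${\geqslant_{12}} \cdot {>_{12}} \cdot {\geqslant_{12}} \subseteq {>_{12}}$ is a short case analysis on which coordinate realizes each relation, using compatibility of $(\geqslant_1,>_1)$ and of $(\geqslant_2,>_2)$; this is the one genuinely fiddly bookkeeping step.

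For monotonicity and stability: fix rewrite steps $\Gamma,\Delta$, a context $C$, and a substitution $\sigma$. If $(\ell_1 \times \ell_2)(\Gamma) \geqslant_{12} (\ell_1 \times \ell_2)(\Delta)$ then either $\ell_1(\Gamma) >_1 \ell_1(\Delta)$, or $\ell_1(\Gamma) \geqslant_1 \ell_1(\Delta)$ and $\ell_2(\Gamma) \geqslant_2 \ell_2(\Delta)$. In the first case monotonicity of $\ell_1$ (actually property~\ref{DEF:lab:2} for $\ell_1$) gives $\ell_1(C[\Gamma\sigma]) >_1 \ell_1(C[\Delta\sigma])$, hence $(\ell_1 \times \ell_2)(C[\Gamma\sigma]) \geqslant_{12} (\ell_1 \times \ell_2)(C[\Delta\sigma])$; in the second case properties~\ref{DEF:lab:1} for $\ell_1$ and $\ell_2$ give both coordinate inequalities, again yielding $\geqslant_{12}$. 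The argument for $>_{12}$ is identical except that in the second case one uses property~\ref{DEF:lab:2} for $\ell_2$ on the second coordinate.

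Finally, the L-labeling conditions themselves transfer coordinatewise. Consider a local peak according to \XP or \XVL, and write $\alpha_i,\beta_i,\gamma_i,\delta_i$ for the labels assigned by $\ell_i$ (so $\alpha = (\alpha_1,\alpha_2)$, etc.). Since $\ell_1$ and $\ell_2$ are L-labelings we have $\alpha_i \geqslant_i \gamma_i$ and $\beta_i \geqslant_i \delta_i$ for $i = 1,2$. By the definition of $\geqslant_{12}$, having $\geqslant_1$ on the first coordinate and $\geqslant_2$ on the second coordinate is exactly one of the disjuncts, so $\alpha \geqslant_{12} \gamma$ and $\beta \geqslant_{12} \delta$, as required. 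I expect the only real obstacle to be the compatibility and well-foundedness verifications for the lexicographic product, which involve a careful case split; everything else is a direct inheritance argument.
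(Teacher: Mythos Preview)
Your proposal is correct and follows essentially the same approach as the paper: first verify that $\ell_1 \times \ell_2$ is a labeling (monotonicity and stability via the same case split you give), then check the L-labeling condition coordinatewise. The only difference is that you spell out the preorder, well-foundedness, and compatibility properties of the lexicographic product in detail, whereas the paper treats these as standard and omits them, noting only that ${>_{12}} \subseteq {\geqslant_{12}}$ holds; your extra care here is not wasted but is more than the paper deems necessary.
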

\begin{proof}
First we show that $\ell_1 \times \ell_2$ is monotone and stable whenever
$\ell_1$ and $\ell_2$ are labelings. Indeed if
$(\ell_1\times\ell_2)(\Gamma) \geqslant (\ell_1\times\ell_2)(\Delta)$
then $\ell_1(\Gamma) > \ell_1(\Delta)$ or
$\ell_1(\Gamma) \geqslant \ell_1(\Delta)$ and
$\ell_2(\Gamma) \geqslant \ell_2(\Delta)$, which for all contexts~$C$ and
substitutions~$\sigma$ implies
$\ell_1(C[\Gamma\sigma]) > \ell_1(C[\Delta\sigma])$ or
$\ell_1(C[\Gamma\sigma]) \geqslant \ell_1(C[\Delta\sigma])$ and
$\ell_2(C[\Gamma\sigma]) \geqslant \ell_2(C[\Delta\sigma])$ by stability
and monotonicity of $\ell_1$ and $\ell_2$, which is equivalent to
$(\ell_1\times\ell_2)(C[\Gamma\sigma]) \geqslant
(\ell_1\times\ell_2)(C[\Delta\sigma])$.
Showing stability and monotonicity of $>$ is similar.
Since the lexicographic product satisfies
${>_{12}} \subseteq {\geqslant_{12}}$
if $\ell_1$ and $\ell_2$ are labelings we conclude
that $\ell_1 \times \ell_2$ is a labeling.
 
Next we show that $\ell_1 \times \ell_2$ satisfies the
requirements of Definition~\ref{DEF:l}.
If $\ell_1$ and $\ell_2$ are L-labelings then the diagram
of Figure~\ref{FIG:lpeaks}\subref{FIG:vl} has the shape as in
Figure~\ref{FIG:combination}\subref{FIG:l1}
and~\ref{FIG:combination}\subref{FIG:l2}, respectively. It is easy to
see that the lexicographic combination is again an L-labeling
(cf.\ Figure~\ref{FIG:combination}\subref{FIG:lex}).
\qed
\end{proof}

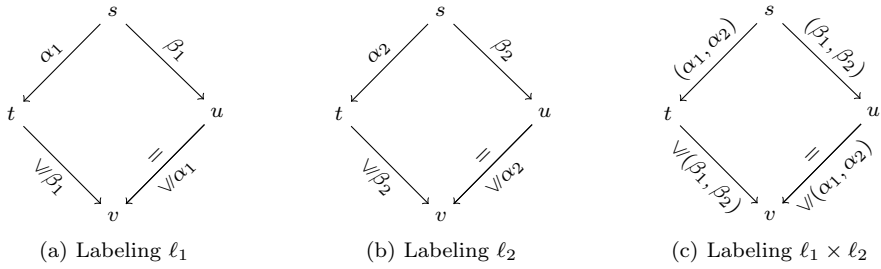
\begin{figure}
\subfloat[\label{FIG:l1}Labeling $\ell_1$]{
\begin{tikzpicture}
\small
\node (1)                      {$s$};
\node (21) [below left=of 1]   {$t$};
\node (22) [below right=of 1]  {$u$};
\node (3)  [below right=of 21] {$v$};
\draw[to] (1)  to node[above,sloped]  {$\alpha_1$} (21);
\draw[to] (1)  to node[above,sloped] {$\beta_1$}  (22);
\draw[to] (21) to node[below,sloped]  {$\Veq\beta_1$}  (3);
\draw[to] (22) to node[below,sloped] {$\Veq\alpha_1$} (3);
\draw (22) to node[anchor=south,rotate=45] {$=$} (3);
\end{tikzpicture}
}
\hfill
\subfloat[\label{FIG:l2}Labeling $\ell_2$]{
\begin{tikzpicture}
\small
\node (1)                      {$s$};
\node (21) [below left=of 1]   {$t$};
\node (22) [below right=of 1]  {$u$};
\node (3)  [below right=of 21] {$v$};
\draw[to] (1)  to node[above,sloped]  {$\alpha_2$} (21);
\draw[to] (1)  to node[above,sloped]  {$\beta_2$}  (22);
\draw[to] (21) to node[below,sloped]  {$\Veq\beta_2$}  (3);
\draw[to] (22) to node[below,sloped]  {$\Veq\alpha_2$} (3);
\draw (22) to node[anchor=south,rotate=45] {$=$}
 (3);
\end{tikzpicture}
}
\hfill
\subfloat[\label{FIG:lex}Labeling $\ell_1 \times \ell_2$]{
\begin{tikzpicture}
\small
\node (1)                      {$s$};
\node (21) [below left=of 1]   {$t$};
\node (22) [below right=of 1]  {$u$};
\node (3)  [below right=of 21] {$v$};
\draw[to] (1)  to node[above,sloped]
 {$(\alpha_1,\alpha_2)$} (21);
\draw[to] (1)  to node[above,sloped]
 {$(\beta_1,\beta_2)$}   (22);
\draw[to] (21) to node[below,sloped]
 {$\Veq(\beta_1,\beta_2)$}   (3);
\draw[to] (22) to node[below,sloped]
 {$\Veq(\alpha_1,\alpha_2)$} (3);
\draw (22) to node[anchor=south,rotate=45] {$=$} (3);
\end{tikzpicture}
}
\caption{Lexicographic combination of L-labelings.}
\label{FIG:combination}
\end{figure}

\subsection{Left-linear TRSs}
\label{CR:ll}

For left-linear TRSs the notion of an LL-labeling is introduced.
The following definition exploits that 
Figure~\ref{FIG:lpeaks}\subref{FIG:vl} is an instance of
Figure~\ref{FIG:lpeaks}\subref{FIG:vll}.

\begin{definition}
\label{DEF:ll}
\label{DEF:wll}
A labeling $\ell$ for a TRS $\RR$ is an \emph{LL-labeling (for $\RR$)} if
\begin{enumerate}
\item
\label{DEF:ll1}
in Figure~\ref{FIG:lpeaks}\subref{FIG:p},
$\alpha \geqslant \gamma$ and $\beta \geqslant \delta$,
\item
\label{DEF:ll2}
in Figure~\ref{FIG:lpeaks}\subref{FIG:vll},
$\alpha \geqslant \overline{\gamma}$
and $\beta \geqslant \delta$ for all permutations of the
rewrite steps of $u \xR{} v$,
where $\alpha \geqslant \overline{\gamma}$ means
$\alpha \geqslant \gamma_i$ for $1 \leqslant i \leqslant n$, and
\item
\label{DEF:ll3}
in Figure~\ref{FIG:lpeaks}\subref{FIG:vll},
$\alpha > \overline{\gamma}$
for some permutation of the rewrite steps of $u \xR{} v$,
where $\alpha > \overline{\gamma}$ means $\alpha \geqslant \gamma_1$
and $\alpha > \gamma_i$ for $2 \leqslant i \leqslant n$.
\end{enumerate}
A labeling $\ell$ is a \emph{weak LL-labeling} if the
first two conditions are satisfied.
\end{definition}

We strengthened the definition of (weak) LL-labelings from~\cite{ZFM11}.
All labelings proposed in~\cite{ZFM11} satisfy the stronger conditions.
Considering \emph{all} permutations in
condition~\ref{DEF:ll2} of Definition~\ref{DEF:wll}
is necessary to ensure that the lexicographic combination
of two weak LL-labelings again is a weak LL-labeling
(cf.\ Lemma~\ref{LEM:lllex}). Furthermore, this condition facilitates
their use for parallel rewriting (Section~\ref{LAB:par:main}).

\begin{remark}
\label{REM:lvswll}
The L-labelings presented so far (cf.\ Lemmata~\ref{LEM:lrl}
and~\ref{LEM:lsn}) are weak LL-labelings. 
\end{remark}

The next theorem states that LL-labelings allow to show
confluence of left-linear TRSs.

\begin{theorem}
\label{THM:ll}
Let $\RR$ be a left-linear TRS. If the critical peaks of $\RR$ are
LL-decreasing then $\RR$ is confluent.
\end{theorem}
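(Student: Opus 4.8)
The plan is to mimic the structure of the proof of Theorem~\ref{THM:l}, now using the decreasing diagrams theorem (Theorem~\ref{THM:dd}) for the ARS $\langle \FVTERMS, \{ \to_w \}_{w \in W} \rangle$ whose steps are labeled by the given LL-labeling $\ell$. By assumption the critical peaks of $\RR$ are decreasing for $\ell$, and since $\RR$ is left-linear every local peak falls into one of the three cases \XP, \XO, or \XV of Figure~\ref{FIG:peaks} (modulo symmetry). So it suffices to verify decreasingness in each case.

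For case \XP we are in the situation of Figure~\ref{FIG:lpeaks}\subref{FIG:p}: the peak is closed by $t \xr[\gamma]{} v \xl[\delta]{} u$ with $\{\alpha,\gamma\}$ and $\{\beta,\delta\}$ coming from the same rules applied at opposite sides. Condition~\ref{DEF:ll1} of Definition~\ref{DEF:wll} gives $\alpha \geqslant \gamma$ and $\beta \geqslant \delta$, so the join lies in $\xr[\Veq\beta]{=} \cdot \xl[\Veq\alpha]{=}$, which is the required decreasing shape. For case \XO we argue exactly as in Theorem~\ref{THM:l}: writing $p' = p \backslash q$, the peak $t|_q \xl{} s|_q \xr{} u|_q$ is an instance of a critical peak which is decreasing by assumption, and by monotonicity and stability of $\ell$ (Definition~\ref{DEF:lab}) the local peak~\eqref{EQ:peak} is decreasing as well.

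The interesting case is \XV, and here the left-linear subcase is Figure~\ref{FIG:lpeaks}\subref{FIG:vll}: we have $t \xr[\alpha]{} \cdot = t_1 \xr[\delta]{} v$ and $u \xR[\overline{\gamma}]{} v$ for a parallel step whose components $\gamma_1,\dots,\gamma_n$ may be taken in any order. Condition~\ref{DEF:ll3} supplies a permutation with $\alpha \geqslant \gamma_1$ and $\alpha > \gamma_i$ for $i \geqslant 2$; performing the $\gamma_1$ step first and the remaining steps afterwards closes the $u$-side via $\xr[\Veq\alpha]{=} \cdot \xr[\Vee\alpha]{*}$. Condition~\ref{DEF:ll2} gives $\beta \geqslant \delta$ (and $\beta \geqslant \gamma_i$ for all $i$, which is not even needed here) so the $t$-side is closed via the single step labeled $\delta \mathrel{\Veq} \beta$. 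Hence the join sits inside $\xr[\Veq\beta]{=} \cdot \xl[\Veq\alpha]{=} \cdot \xl[\Vee\alpha]{*}$, again a decreasing diagram. Since all three cases are decreasing, the ARS is decreasing and Theorem~\ref{THM:dd} yields confluence of $\RR$.

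The main obstacle is purely bookkeeping in case \XV: one must make sure the parallel step $u \xR{} v$ in Figure~\ref{FIG:lpeaks}\subref{FIG:vll} really can be reordered so that the single step meeting the $\alpha$-threshold comes first, which is exactly why Definition~\ref{DEF:wll} quantifies over permutations; and one must check that the two ``$=$''-labeled steps on the $t$-side and the reordered tail on the $u$-side assemble into the precise pattern ${\xr[\Vee\alpha]{*}} \cdot {\xr[\Veq\beta]{=}} \cdot {\xr[\Vee\alpha\beta]{*}} \cdot {\xl[\Vee\alpha\beta]{*}} \cdot {\xl[\Veq\alpha]{=}} \cdot {\xl[\Vee\beta]{*}}$ demanded by the definition of decreasing steps. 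Everything else is a direct appeal to the LL-labeling axioms and to monotonicity/stability.
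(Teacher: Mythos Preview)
Your proof is correct and follows exactly the paper's approach, which simply notes that the LL-labeling axioms make the \XP and \XVLL peaks decreasing by definition and reuses the \XO argument from Theorem~\ref{THM:l}; you have just unpacked the details the paper leaves implicit. Two small slips worth fixing: in the \XVLL case the phrase ``$t \xr[\alpha]{} \cdot = t_1$'' is spurious (left-linearity gives $m=1$, so $t = t_1$ and there is no $\alpha$-step on that side), and your assembled join should read $\xr[\Veq\beta]{=} \cdot \xl[\Vee\alpha]{*} \cdot \xl[\Veq\alpha]{=}$ rather than $\xr[\Veq\beta]{=} \cdot \xl[\Veq\alpha]{=} \cdot \xl[\Vee\alpha]{*}$, since $\gamma_1$ is performed first from $u$ and hence appears last when reversed---and only this order matches the decreasing template, because the $\Vee\alpha$ tail must sit in the $\xl[\Vee\alpha\beta]{*}$ slot, not the $\xl[\Vee\beta]{*}$ slot.
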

\begin{proof}
By assumption the critical peaks of $\RR$ are decreasing for some
LL-labeling~$\ell$. We establish confluence of $\RR$ by
Theorem~\ref{THM:dd}, i.e., we show decreasingness of the ARS
$\langle \FVTERMS, \to_\RR \rangle$ by labeling rewrite steps
according to $\ell$. By definition of an LL-labeling local peaks
according to \XP and \XVLL are decreasing. The
reasoning for local peaks according to \XO is the same as in the
proof of Theorem~\ref{THM:l}.
\qed
\end{proof}

The rule labeling from Lemma~\ref{LEM:lrl} is 
a weak LL-labeling but
not an
LL-labeling since in Figure~\ref{FIG:lpeaks}\subref{FIG:vll} we have
$\alpha = \gamma_i$ for $1 \leqslant i \leqslant n$ which does not
satisfy $\alpha > \overline{\gamma}$ if $n > 1$.
(See also \cite[Example~9]{HM11}.) 
We return to this problem and propose two solutions (in 
Sections~\ref{LAB:ll:per} and~\ref{LAB:par:main})
after presenting simpler (weak) LL-labelings based on measuring
duplicating steps (Section~\ref{LAB:ll:dup}),
the context above the contracted redex (Section~\ref{LAB:ll:con}),
and the contracted redex (Section~\ref{LAB:ll:red}).

\subsubsection{Measuring duplicating steps}
\label{LAB:ll:dup}

The L-labeling from Lemma~\ref{LEM:lsn} can be adapted to an
LL-labeling.

\begin{lemma}
\label{LEM:llsn}
Let $\RR$ be a TRS. Then $\ell^\RRd_\SN$ is an LL-labeling,
provided $\RRdnd$ is terminating.
\end{lemma}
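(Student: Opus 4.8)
The plan is to instantiate the relative-termination L-labeling of Lemma~\ref{LEM:lsn} with $\SS = \RRd$, so that $\ell^{\RRd}_\SN(s \to t) = s$, $\geqslant$ is $\to^*_\RR$, and $>$ is $\to^+_{\REL{\RRd}{\RR}}$. The hypothesis of Lemma~\ref{LEM:llsn} is that $\RRdnd$ is terminating, which is exactly the termination of $\REL{\RRd}{\RR}$ once we observe $\RR = \RRd \cup \RRnd$ and hence $\to_{\REL{\RRd}{\RR}} = \to_{\REL{\RRd}{\RRnd}}$; moreover ${\to_{\RRd}} \subseteq {\to_\RR}$ trivially. So by Lemma~\ref{LEM:lsn} (with $\SS = \RRd$), $\ell^{\RRd}_\SN$ is an L-labeling for $\RR$. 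By Remark~\ref{REM:lvswll} (or by redoing the trivial argument), it is in fact a weak LL-labeling, i.e.\ conditions~\ref{DEF:ll1} and~\ref{DEF:ll2} of Definition~\ref{DEF:wll} hold: the source $s$ of the diverging step satisfies $s \to_\RR t$ and $s \to_\RR u$, giving $\alpha = \beta = s \geqslant t = \delta$ and $\alpha = s \geqslant \gamma_i$ for every step of $u \xR{} v$ (each reduct of $u$ is an $\RR$-reduct of $s$), and this holds for every permutation since all these reducts lie below $s$ in $\to^*_\RR$.

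The real content is condition~\ref{DEF:ll3}: in Figure~\ref{FIG:lpeaks}\subref{FIG:vll} we need $\alpha > \overline{\gamma}$ for some permutation of $u \xR{} v$, meaning $\alpha \geqslant \gamma_1$ and $\alpha > \gamma_i$ for $2 \leqslant i \leqslant n$. Recall the variable-overlap description from the preliminaries: with $q'$ the variable position of $l_2$ with $qq' \leqslant p$, $x = l_2|_{q'}$, and $|r_2|_x = n$, the step $u \xr[l_1 \to r_1]{n} v$ contracts $n$ copies of the $\pi_1$-redex, one in each occurrence of $x$'s binding in $r_2\sigma$. The key case split is on whether $l_2 \to r_2$ is duplicating in $x$, i.e.\ whether $|l_2|_x < |r_2|_x$; here I would argue $n \geqslant 2$ forces $l_2 \to r_2 \in \RRd$, because the step $s \xr[l_2 \to r_2]{q} u$ creates strictly more copies of the $\pi_1$-redex than were present (one, at $p$), so it is duplicating. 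Then for the step $s = t$-side labels: $\alpha = s$ is the label of $t \xl[l_1 \to r_1]{} s$, while the first step $u \to u_1$ in a chosen enumeration of $u \xR{} v$ has label $u$, and $s \xr[l_2 \to r_2]{} u$ with $l_2 \to r_2 \in \RRd$ gives $s \xr{}_{\REL{\RRd}{\RR}} u$, hence $s > u \geqslant u_1 = \gamma_1$ — in fact $s > \gamma_i$ for all $i$ since every $u_i$ is an $\RR$-reduct of $u$ and $s \to^+_{\REL{\RRd}{\RR}} u_i$. If instead $n \leqslant 1$ there is at most one $\gamma$-step and the condition $\alpha > \overline{\gamma}$ degenerates to $\alpha \geqslant \gamma_1$, already established. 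Thus condition~\ref{DEF:ll3} holds.

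The main obstacle I anticipate is the careful bookkeeping in condition~\ref{DEF:ll3}: one must verify that when $n \geqslant 2$ the rule $l_2 \to r_2$ is genuinely duplicating (not merely non-right-linear in some other variable), so that the single step $s \to u$ already counts toward $>$, and that therefore $s > u_i$ rather than merely $s \geqslant u_i$ for the later reducts. This requires using the precise combinatorics ($|l_2|_x = m$, $|r_2|_x = n$) from the variable-overlap paragraph and the fact that in a left-linear system $m = 1$, so $n > 1$ is equivalent to $l_2 \to r_2$ being duplicating in $x$. Once that is pinned down, stability and monotonicity of $\ell^{\RRd}_\SN$ are inherited from Lemma~\ref{LEM:lsn}, and $\ell^{\RRd}_\SN$ is an LL-labeling.
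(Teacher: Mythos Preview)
Your proposal is correct and follows essentially the same approach as the paper: instantiate Lemma~\ref{LEM:lsn} with $\SS = \RRd$, note that termination of $\RRdnd$ coincides with termination of $\REL{\RRd}{\RR}$, and for condition~\ref{DEF:ll3} use that $n \geqslant 2$ forces $l_2 \to r_2 \in \RRd$ (the paper equivalently case-splits on linearity of $l_2 \to r_2$ rather than on $n$). One small slip: the label $\gamma_1$ of the first step $u \to u_1$ is its source $u$, not $u_1$, but your stated conclusion $s > \gamma_i$ for all $i$ is correct regardless.
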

\begin{proof}
By Theorem~\ref{THM:rt} the relative TRS $\RRdnd$ is terminating if
and only if $\REL{\RRd}{\RR}$ is terminating. Hence
$(\ell^\RRd_\SN,\geqslant,>)$ is a labeling by Lemma~\ref{LEM:lsn}.
Here ${\geqslant} = {\to^*_\RR}$ and ${>} = {\to^+_\REL{\RRd}{\RR}}$.
Since $\ell_\SN(s \to t) = s$, we have $\alpha = \beta$ in
Figures~\ref{FIG:lpeaks}\subref{FIG:p}
and~\ref{FIG:lpeaks}\subref{FIG:vll}.
We have ${>} \subseteq {\geqslant}$. Hence
$\alpha \geqslant \gamma$ and $\alpha \geqslant \delta$ in
Figure~\ref{FIG:lpeaks}\subref{FIG:p} and, 
if $l_2 \to r_2$ in 
local peak \eqref{EQ:peak} is linear,
also in Figure~\ref{FIG:lpeaks}\subref{FIG:vll} as 
$\overline{\gamma}$ is empty or
$\overline{\gamma} = \gamma$ in this case.
If $l_2 \to r_2$ is not linear then it must be duplicating and hence 
$\alpha > \gamma_i$ for $1 \leqslant i \leqslant n$.
Because
$\alpha \geqslant \delta$,
$\ell^\RRd_\SN$ is an LL-labeling for~$\RR$.
\qed
\end{proof}

To combine the previous lemma with the rule labeling we study how
different labelings can be combined.

\begin{lemma}
\label{LEM:lllex}
Let $\ell_1$ be an LL-labeling and let $\ell_2$ be a weak LL-labeling.
Then $\ell_1 \times \ell_2$ and $\ell_2 \times \ell_1$ are LL-labelings.
\end{lemma}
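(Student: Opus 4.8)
The plan is to follow the same two-phase structure as in the proof of Lemma~\ref{LEM:llex}: first establish that $\ell_1 \times \ell_2$ (and symmetrically $\ell_2 \times \ell_1$) is a labeling, then verify the three conditions of Definition~\ref{DEF:ll}. Monotonicity and stability, together with ${>_{12}} \subseteq {\geqslant_{12}}$ and compatibility of $(\geqslant_{12},>_{12})$, follow verbatim from the argument already given in Lemma~\ref{LEM:llex}, since those properties of the lexicographic product do not depend on which kind of labeling $\ell_1$ and $\ell_2$ are; so I would just cite that proof. The remaining work is purely about the peak conditions, and here the asymmetry between $\ell_1$ (full LL) and $\ell_2$ (weak LL) is what makes the two orders $\ell_1 \times \ell_2$ and $\ell_2 \times \ell_1$ both work.

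For condition~\ref{DEF:ll1} (the \XP case) and the ``$\beta \geqslant \delta$'' half of condition~\ref{DEF:ll2}: both $\ell_1$ and $\ell_2$ give $\alpha_j \geqslant_j \gamma_j$, $\beta_j \geqslant_j \delta_j$ componentwise (a weak LL-labeling already supplies the $\XP$ inequalities and the $\alpha \geqslant \overline\gamma$, $\beta \geqslant \delta$ parts of~\ref{DEF:ll2} for \emph{all} permutations), so by the very definition of $\geqslant_{12}$ we get $(\alpha_1,\alpha_2) \geqslant_{12} (\gamma_1,\gamma_2)$ and likewise for $\beta,\delta$ — and the order of the two components is irrelevant for these, hence they hold for both $\ell_1 \times \ell_2$ and $\ell_2 \times \ell_1$. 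This also dispatches all of condition~\ref{DEF:ll2} since weak LL gives its inequalities for every permutation.

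The crux is condition~\ref{DEF:ll3}: there must exist \emph{one} permutation of the steps of $u \xR{} v$ with $\alpha > \overline{\gamma}$ in the product order, i.e.\ $(\alpha_1,\alpha_2) \geqslant_{12} (\gamma_1^{(1)},\gamma_2^{(1)})$ and $(\alpha_1,\alpha_2) >_{12} (\gamma_i^{(1)},\gamma_i^{(2)})$ for $2 \leqslant i \leqslant n$. I would take the permutation witnessing condition~\ref{DEF:ll3} for the \emph{LL-labeling} $\ell_1$; for that permutation $\alpha_1 \geqslant_1 \gamma_1^{(1)}$ and $\alpha_1 >_1 \gamma_i^{(1)}$ for $i \geqslant 2$. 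The first component already gives $>_{12}$ for $i \geqslant 2$ (since $x_1 >_1 y_1$ alone implies $(x_1,x_2) >_{12} (y_1,y_2)$) and $\geqslant_{12}$ for $i = 1$ — and this argument only uses that the LL-labeling sits in the position that is compared \emph{first} lexicographically. That is why $\ell_1 \times \ell_2$ works directly. For $\ell_2 \times \ell_1$ the LL-labeling is in the second component, so I instead argue: weak LL gives $\alpha_2 \geqslant_2 \gamma_i^{(2)}$ for all $i$ and this permutation, while LL gives $\alpha_1 \geqslant_1 \gamma_1^{(1)}$ and $\alpha_1 >_1 \gamma_i^{(1)}$ for $i \geqslant 2$; then $(\alpha_2,\alpha_1) \geqslant_{12} (\gamma_2^{(1)},\gamma_1^{(1)})$ via the ``$x_1 \geqslant_1 y_1$ and $x_2 \geqslant_2 y_2$'' disjunct, and $(\alpha_2,\alpha_1) >_{12} (\gamma_i^{(2)},\gamma_i^{(1)})$ via the ``$x_1 \geqslant_1 y_1$ and $x_2 >_2 y_2$'' disjunct, for $i \geqslant 2$. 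So the condition holds for both orders. The main obstacle — and the only subtle point — is making sure the \emph{same} permutation simultaneously witnesses condition~\ref{DEF:ll3} for $\ell_1$ and the (all-permutations) condition~\ref{DEF:ll2} bounds for $\ell_2$; this is exactly why the definition of weak LL-labeling was strengthened to quantify over all permutations, as remarked after Definition~\ref{DEF:wll}, so there is nothing left to prove there. Symmetry of the statement in $\ell_1 \times \ell_2$ versus $\ell_2 \times \ell_1$ is then just the observation that the argument above was carried out for both placements.
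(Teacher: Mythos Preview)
Your plan is correct and matches the paper's proof: both cite Lemma~\ref{LEM:llex} for the labeling properties and then verify Definition~\ref{DEF:ll}(\ref{DEF:ll1})--(\ref{DEF:ll3}) componentwise, using for condition~(\ref{DEF:ll3}) the permutation supplied by the LL-labeling $\ell_1$ together with the all-permutations bound from the weak LL-labeling $\ell_2$. Two small points to tighten when you write it up: in the $\ell_1 \times \ell_2$ case for $i=1$ the first component alone does not yield $\geqslant_{12}$ (you also need $\alpha_2 \geqslant_2 \gamma_{2,1}$, which you clearly intend from your closing remark), and your indexing of the $\gamma$'s drifts between two conventions.
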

\begin{proof}
By the proof of Lemma~\ref{LEM:llex} $\ell_1 \times \ell_2$ and
$\ell_2 \times \ell_1$ are labelings.
The only interesting case of \XVLL is when
$l_2 \to r_2$ in local peak~\eqref{EQ:peak} is non-linear, i.e.,
$\overline{\gamma}$ contains more than one element.
First we show that $\ell_1 \times \ell_2$ is an LL-labeling. 
Here labels according to $\ell_1$ are suffixed with the subscript $1$
and similarly for $\ell_2$.
Recall Figure~\ref{FIG:lpeaks}\subref{FIG:vll}.
Let us first deal with Definition~\ref{DEF:ll}(\ref{DEF:ll2}).
We have $\alpha_1 \geqslant \overline{\gamma}_1$,
$\beta_1 \geqslant \delta_1$,
$\alpha_2 \geqslant \overline{\gamma}_2$ and $\beta_2 \geqslant \delta_2$,
which yields
$(\beta_1,\beta_2) \geqslant (\delta_1,\delta_2)$,
$(\alpha_1,\alpha_2) \geqslant (\gamma_{1i},\gamma_{2i})$ for all
$1\leqslant i\leqslant n$,
by the definition of the lexicographic product.
Next we consider Definition~\ref{DEF:ll}(\ref{DEF:ll3}).
By assumption we have  $\alpha_1 > \overline{\gamma}_1$, and 
$\alpha_2 \geqslant \overline{\gamma}_2$,
which yields the desired
$(\alpha_1,\alpha_2) \geqslant (\gamma_{11},\gamma_{21})$,
$(\alpha_1,\alpha_2) > (\gamma_{1i},\gamma_{2i})$ 
for $2 \leqslant i \leqslant n$.
In the proof for $\ell_2 \times \ell_1$ the assumptions yield
$(\beta_2,\beta_1) \geqslant (\delta_2,\delta_1)$ and
$(\alpha_2,\alpha_1) \geqslant (\gamma_{2i},\gamma_{1i})$ for
$1\leqslant i\leqslant n$
for Definition~\ref{DEF:ll}(\ref{DEF:ll2}) and additionally
$(\alpha_2,\alpha_1) > (\gamma_{2i},\gamma_{1i})$ for
$2 \leqslant i \leqslant n$ for Definition~\ref{DEF:ll}(\ref{DEF:ll3}).
\qed
\end{proof}

\begin{remark}
\label{REM:wlllex}
If $\ell_1$ and $\ell_2$ are weak LL-labelings then so are
$\ell_1 \times \ell_2$ and $\ell_2 \times \ell_1$.
Furthermore, LL-labelings are also weak LL-labelings by definition.
In particular LL-labelings can be composed lexicographically.
\end{remark}

From Theorem~\ref{THM:ll} and Lemmata~\ref{LEM:llsn}
and~\ref{LEM:lllex} we obtain the following result.

\begin{corollary}
\label{COR:rtd}
Let $\RR$ be a left-linear TRS. If $\RRdnd$ is terminating and all
critical peaks of $\RR$ are weakly LL-decreasing then $\RR$ is
confluent.
\end{corollary}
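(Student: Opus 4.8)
The plan is to assemble Corollary~\ref{COR:rtd} from the three ingredients already in hand: Theorem~\ref{THM:ll}, Lemma~\ref{LEM:llsn}, and Lemma~\ref{LEM:lllex}. First I would unpack the hypothesis ``all critical peaks of $\RR$ are weakly LL-decreasing'': by definition of $\Phi$-decreasingness (with $\Phi$ instantiated to ``weak LL''), this means there is a weak LL-labeling $\ell$ for $\RR$ such that all critical peaks of $\RR$ are decreasing for $\ell$. By Remark~\ref{REM:lvswll} the rule labeling $\ell^i_\RL$ from Lemma~\ref{LEM:lrl} is available, but what we really need is the labeling from Lemma~\ref{LEM:llsn}: since $\RRdnd$ is terminating by assumption, that lemma yields that $\ell^\RRd_\SN$ is an \emph{LL-labeling} for $\RR$ (not merely a weak one).

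Next I would combine the two. Set $\ell' = \ell^\RRd_\SN \times \ell$. Since $\ell^\RRd_\SN$ is an LL-labeling and $\ell$ is a weak LL-labeling, Lemma~\ref{LEM:lllex} tells us that $\ell'$ is again an LL-labeling for $\RR$. The point of taking the lexicographic product in this order is that the first component $\ell^\RRd_\SN$ already supplies the strict decrease $\alpha > \overline{\gamma}$ required by Definition~\ref{DEF:ll}(\ref{DEF:ll3}) in the non-linear \XVLL case, so the composite inherits condition~(\ref{DEF:ll3}), while conditions~(\ref{DEF:ll1}) and~(\ref{DEF:ll2}) are preserved because both factors satisfy them --- this is exactly the content of the proof of Lemma~\ref{LEM:lllex}.

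It then remains to check that the critical peaks of $\RR$ are decreasing for $\ell'$, so that Theorem~\ref{THM:ll} applies and delivers confluence of $\RR$. This is where the monotonicity argument inherent in the lexicographic order is used: whenever a local peak is closed by a valley that is decreasing for the second component $\ell$, the same valley is decreasing for $\ell' = \ell^\RRd_\SN \times \ell$, because $x_1 \geqslant_1 y_1$ together with $x_2 >_2 y_2$ (resp.\ $x_2 \geqslant_2 y_2$) gives $(x_1,x_2) >_{12} (y_1,y_2)$ (resp.\ $\geqslant_{12}$), so every strict/weak comparison witnessing decreasingness for $\ell$ lifts to $\ell'$ verbatim. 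Hence the critical peaks, being decreasing for $\ell$ by assumption, are decreasing for $\ell'$.

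The only genuinely delicate point --- and thus the place I would be most careful --- is making sure the lexicographic order does not \emph{destroy} decreasingness of the joining valleys for the critical peaks: a priori the first component $\ell^\RRd_\SN$, whose order is $\to^*_\RR$ / $\to^+_{\REL{\RRd}{\RR}}$, could introduce labels along the valley that are incomparable to or larger than the peak labels under the composite order. The observation that saves us is the one just made: every comparison needed for $\ell$-decreasingness is of the form ``$\geqslant_2$'' or ``$>_2$'', and under the lexicographic definition such a comparison is implied by the corresponding ``$\geqslant_1$'' or is upgraded for free whenever the first components are $\geqslant_1$-related, which they always are along an $\RR$-reduction since $\geqslant_1 = \to^*_\RR$. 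So the lifting is automatic, no new side conditions arise, and the corollary follows by chaining Lemma~\ref{LEM:llsn}, Lemma~\ref{LEM:lllex}, and Theorem~\ref{THM:ll}.
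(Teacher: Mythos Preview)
Your proposal is correct and follows essentially the same route as the paper: invoke Lemma~\ref{LEM:llsn} to obtain the LL-labeling $\ell^\RRd_\SN$, combine it with the assumed weak LL-labeling $\ell$ via Lemma~\ref{LEM:lllex}, and conclude by Theorem~\ref{THM:ll}. The paper condenses your careful discussion of why decreasingness lifts to $\ell^\RRd_\SN \times \ell$ into the single observation that $\ell^\RRd_\SN(s \to t) \geqslant \ell^\RRd_\SN(t \to u)$ whenever $s \to_\RR t \to_\RR u$, which is exactly your point that the first-component labels never increase along an $\RR$-reduction since ${\geqslant_1} = {\to_\RR^*}$.
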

\begin{proof}
By Lemma~\ref{LEM:llsn} $\ell^\RRd_\SN$ is an LL-labeling.
By assumption the critical peaks of $\RR$ are decreasing for some
weak LL-labeling $\ell$. By Lemma~\ref{LEM:lllex}
also $\ell^\RRd_\SN \times \ell$ is an LL-labeling. 
It remains to show decreasingness of the critical peaks for 
$\ell^\RRd_\SN \times \ell$. This is obvious since for terms $s$, $t$,
$u$ with $s \to_\RR t \to_\RR u$ we have
$\ell^\RRd_\SN(s \to t) \geqslant \ell^\RRd_\SN(t \to u)$.
Hence decreasingness for $\ell$ implies decreasingness for
$\ell^\RRd_\SN \times \ell$.
Confluence of $\RR$ follows from Theorem~\ref{THM:ll}.
\qed
\end{proof}

We revisit the example from the introduction.

\begin{example}
\label{EX:OO03d}
Recall the TRS $\RR$ from Example~\ref{EX:OO03}. The polynomial
interpretation 
\begin{xalignat*}{4}
+_\Nat(x,y) &= x + y &
\m{s}_\Nat(x) &= x + 1 & 
\times_\Nat(x,y) &= x^2 + xy + y^2 & 
\m{sq}_\Nat(x) &= 3x^2 + 1
\end{xalignat*}
shows termination of $\RRdnd$.
It is easy to check that $\ell^i_\RL$ with
$i(3) = i(6) = 2$, $i(4) = i(10) = 1$, and $i(l \to r) = 0$
for all other rules $l \to r \in \RR$ establishes decreasingness of
the 34 critical peaks. 
We consider two selected critical peaks (where the applied 
rewrite rule is indicated above the arrow in parentheses). The peaks
\newcommand{\xrr}[2]{\xr[#1]{\makebox[3mm]{$\scriptstyle #2$}}}
\newcommand{\xll}[2]{\xl[#1]{\makebox[3mm]{$\scriptstyle #2$}}}
\begin{align*}
t_1 &= x + ((y + z) + w) \xll{0}{(1)} x + (y + (z + w)) \xrr{0}{(1)}
(x + y) + (z + w) = u_1 \\
t_2 &= \m{s}(x) \times \m{s}(x) \xll{2}{(3)} \m{sq}(\m{s}(x)) \xrr{1}{(4)}
(x \times x) + \m{s}(x + x) = u_2
\end{align*}
can be joined decreasingly as follows:
\begin{align*}
t_1 &\xrr{0}{(2)} x + (y + (z + w)) \xll{0}{(2)} u_1 \\
t_2 &\xrr{1}{(10)} (x \times \m{s}(x)) + \m{s}(x)
\xrr{0}{(9)} (x + (x \times x)) + \m{s}(x)
\xrr{0}{(2)} x + ((x \times x) + \m{s}(x)) \\
&\xrr{0}{(8)} x + (\m{s}(x \times x) + x)
\xll{0}{(2)} (x + \m{s}(x \times x)) + x
\xll{0}{(5)} (\m{s}(x \times x) + x) + x \\
&\xll{0}{(1)} \m{s}(x \times x) + (x + x)
\xll{0}{(8)} u_2
\end{align*}
\end{example}

The next example is concise and constitutes a minimal example
to familiarize the reader with Corollary~\ref{COR:rtd}.

\begin{example}
\label{EX:mot}
Consider the TRS $\RR$ consisting of the three rules
\begin{xalignat*}{3}
1\colon~\m{b} &\to \m{a} &
2\colon~\m{a} &\to \m{b} &
3\colon~\m{f}(\m{g}(x,\m{a})) &\to \m{g}(\m{f}(x),\m{f}(x))
\end{xalignat*}
We have $\RRd = \{ 3 \}$ and $\RRnd = \{ 1, 2 \}$.
Termination of $\RRdnd$ can be established by LPO with precedence
$\m{a} \sim \m{b}$ and $\m{f} > \m{g}$. 
The rule labeling that takes the rule numbers as labels shows the
only critical peak decreasing, i.e.,
$\m{f}(\m{g}(x,\m{b})) \xl[2]{} \m{f}(\m{g}(x,\m{a})) \to_3
\m{g}(\m{f}(x),\m{f}(x))$ and
$\m{f}(\m{g}(x,\m{b})) \to_1 \m{f}(\m{g}(x,\m{a})) \to_3
\m{g}(\m{f}(x),\m{f}(x))$.
Hence we obtain the confluence of $\RR$ by Corollary~\ref{COR:rtd}.
\end{example}

\begin{remark}
Using $\ell^i_\RL(\cdot) = 0$ as weak LL-labeling, 
Corollary~\ref{COR:rtd} gives a condition (termination of $\RRdnd$)
such that $t \to^= u$ or $u \to^= t$ for all critical pairs $t \cp u$ 
implies confluence of a left-linear TRS $\RR$. This partially answers
one question in the RTA list of open problems~\#13.%
\footnote{\url{%
http://www.cs.tau.ac.il/~nachum/rtaloop/problems/13.html%
}}
\end{remark}

\subsubsection{Measuring the context above the contracted redex}
\label{LAB:ll:con}

In~\cite[Example~20]{vO08a} van Oostrom suggests to count function
symbols above the contracted redex, demands that this measurement
decreases for variables that are duplicated, and combines
this with the rule labeling. Consequently local peaks according to
Figure~\ref{FIG:lpeaks}\subref{FIG:vll} are decreasing. Below we
exploit this idea but incorporate the following beneficial
generalizations. First, we do not restrict to counting function
symbols (which has been adopted and extended by Aoto in~\cite{A10})
but represent the constraints as a relative termination problem.
This abstract formulation allows to strictly subsume the criteria
from~\cite{vO08a,A10} (see Section~\ref{ASS:main})
because more advanced techniques than counting symbols can be applied
for proving termination. Additionally, our setting also allows to
weaken these constraints significantly
(cf.\ Lemma~\ref{LEM:sstar}).

The next example motivates the need for an LL-labeling
that does not require termination of $\RRdnd$. 

\begin{example}
\label{EX:mot2}
Consider the TRS $\RR$ consisting of the six rules
\begin{xalignat*}{3}
\m{f}(\m{h}(x)) &\to \m{h}(\m{g}(\m{f}(x),x,\m{f}(\m{h}(\m{a})))) &
\m{f}(x) &\to \m{a} &
\m{a} &\to \m{b} \\
\m{h}(x) &\to \m{c} &
\m{b} &\to \bot &
\m{c} &\to \bot
\end{xalignat*}
Since the duplicating rule admits an infinite sequence,
Corollary~\ref{COR:rtd} cannot succeed.
\end{example}

In the sequel we let $\GG$ be the signature consisting of unary
function symbols $\seq{f}$ for every $n$-ary function symbol
$f \in \FF$. 

\begin{definition}
\label{DEF:con}
Let $x \in \VV$. We define a partial mapping $\star$ from
terms in the original signature and positions
$\FVTERMS \times \NAT_+^*$ to terms in $\TERMS{\GG}{\VV}$
as follows:
\[
\ST{f(\seq{t}),p} = \begin{cases}
f_i(\ST{t_i,q}) & \text{if $p = iq$} \\
x & \text{if $p = \epsilon$}
\end{cases}
\]
For a TRS $\RR$ we abbreviate
$\REL{\STG{\RR}}{\STE{\RR}}$ by $\ST{\RR}$.
Here, for ${\gtrsim} \in \{ {>}, {=} \}$,
$\STC{\RR}$ consists of all rules $\ST{l,p} \to \ST{r,q}$
such that $l \to r \in \RR$, $l|_p = r|_q = y \in \VV$,
and $|r|_y \gtrsim 1$.
\end{definition}

The next example illustrates the transformation \ST{$\cdot$}.

\begin{example}
\label{EX:star}
Consider the TRS $\RR$ from Example~\ref{EX:mot2}.
The relative TRS $\ST{\RR} = \REL{\STG{\RR}}{\STE{\RR}}$ 
consists of the TRS $\STG{\RR}$ with rules
\begin{xalignat*}{2}
\m{f_1}(\m{h_1}(x)) &\to \m{h_1}(\m{g_1}(\m{f_1}(x))) &
\m{f_1}(\m{h_1}(x)) &\to \m{h_1}(\m{g_2}(x))
\end{xalignat*}
and the TRS $\STE{\RR}$ which is empty. 
\end{example}

Due to the next lemma a termination proof of $\ST{\RR}$ yields an
LL-labeling.

\begin{lemma}
\label{LEM:star}
Let $\RR$ be a TRS and $\ell_\star(s \xr{\pi} t) = \star(s,p_\pi)$.
Then $(\ell_\star,\geqslant,>)$ is an LL-labeling,
provided $(\geqslant,>)$ is a monotone reduction pair,
${\STG{\RR}} \subseteq {>}$, and
${\STG{\RR} \cup \STE{\RR}} \subseteq {\geqslant}$.
\end{lemma}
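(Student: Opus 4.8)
The plan is to verify the three defining properties of an LL-labeling (Definition~\ref{DEF:ll}) for $\ell_\star$, after first checking that $(\ell_\star,\geqslant,>)$ is a labeling in the sense of Definition~\ref{DEF:lab}. For the labeling property I would argue that if $\Gamma = s \xr{p,l\to r} t$ and $C$ is a context with $C|_{p'} = \HOLE$, then $\ell_\star(C[\Gamma\sigma]) = \star(C[s\sigma], p'p)$, and unfolding the definition of $\star$ along $p'$ gives that this equals $C'[\star(s\sigma,p)]$ for the unary context $C'$ obtained by applying the $(\cdot)_i$ renaming to the function symbols of $C$ along $p'$; moreover one checks $\star(s\sigma,p) = \star(s,p)\tau$ for the substitution $\tau$ that maps the distinguished variable $x$ to $\star(s\sigma,p)$ appropriately (the variable positions of $l$ are below $p$ only through the matched subterm, so substitution commutes with $\star$ up to renaming the hole variable). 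Since $(\geqslant,>)$ is a monotone reduction pair, it is closed under contexts and substitutions, so $\ell_\star(\Gamma) \geqslant \ell_\star(\Delta)$ (resp.\ $>$) is preserved by wrapping in $C'$ and instantiating, giving monotonicity and stability.

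For the LL-labeling conditions, recall that for the \XP case (Figure~\ref{FIG:lpeaks}\subref{FIG:p}) and the \XVLL case (Figure~\ref{FIG:lpeaks}\subref{FIG:vll}) the key structural fact is that all four (groups of) steps take place inside the common term $s$, and the $\alpha$-step is the redex $l_1 \to r_1$ contracted at position $p$ in $s$. In the \XP case the step labeled $\gamma$ is the \emph{same} redex pattern $\langle p, l_1\to r_1\rangle$ contracted in $u = s[r_2\sigma]_q$; since $p \parallel q$, the context above position $p$ is unchanged when passing from $s$ to $u$, hence $\star(s,p) = \star(u,p)$, i.e.\ $\alpha = \gamma$ (and symmetrically $\beta = \delta$), so $\alpha \geqslant \gamma$ and $\beta \geqslant \delta$ follow from ${>}\subseteq{\geqslant}$ applied reflexively. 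Similarly, $\beta = \delta$ in the \XVLL case because the $\delta$-step is $l_2\to r_2$ applied at $q$ both in $s$ and in $t_1$ (with $q < $ the positions of the $\alpha$-copies, the context above $q$ is untouched).

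The heart of the argument is condition~\ref{DEF:ll2} and~\ref{DEF:ll3} for the $\gamma_i$ in Figure~\ref{FIG:lpeaks}\subref{FIG:vll}. Here $s|_q$ is an instance $l_2\theta$, the variable $x = l_2|_{q'}$ occurs $m = |l_2|_x$ times in $l_2$ and $n = |r_2|_x$ times in $r_2$, and the $\gamma_i$-steps are the $n$ copies of the redex $l_1\to r_1$ contracted inside the $n$ copies of $x\theta$ sitting in $v$ (at the $x$-positions of $r_2$), while the $\alpha$-step is that same redex contracted inside the single copy of $x\theta$ living at position $qq'$ in $s$. The plan is to show $\star(s,qq') \mathrel{(>)} \star(v, q\,o_j r')$ for each $x$-position $o_j$ of $r_2$, where $r'$ is the position of the contracted redex within $x\theta$: decomposing along $q$, the part of the context strictly above $q$ cancels on both sides, reducing the claim to comparing $\star(l_2\theta, q') = \star(l_2, q')\theta'$ against $\star(r_2\theta, o_j r') = \star(r_2, o_j)\theta'$ for a suitable induced substitution $\theta'$ (here I use that $\star$ commutes with substitution below a non-hole prefix, which is the same computation as in the labeling-property step). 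Now $\star(l_2,q') \to \star(r_2,o_j)$ is \emph{by construction} a rule of $\STC{\RR}$: it is in $\STG{\RR}$ when $n = |r_2|_x > 1$ and in $\STG{\RR}$ or $\STE{\RR}$ when $n = 1$. Since ${\STG{\RR}}\subseteq{>}$ and ${\STG{\RR}\cup\STE{\RR}}\subseteq{\geqslant}$ and $(\geqslant,>)$ is closed under contexts and substitutions, I get $\star(s,qq') > \star(v,\dots)$ for every copy when the variable is duplicated ($n>1$), which gives $\alpha > \overline{\gamma}$ (condition~\ref{DEF:ll3}, and note the stronger $\alpha > \gamma_i$ for \emph{all} $i$, in particular for every permutation), and $\alpha \geqslant \gamma_i$ always (condition~\ref{DEF:ll2}, for all permutations, since the bound is uniform over the copies). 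When $l_2\to r_2$ is left-linear in $x$ so $m = 1$, $\overline{\gamma}$ has at most one element and $\alpha \geqslant \gamma$ suffices, covered by the $\STE{\RR}$/$\STG{\RR}$ inclusion into $\geqslant$; the case $n = 0$ leaves $\overline{\gamma}$ empty and there is nothing to prove.

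\textbf{Main obstacle.} The delicate point is the bookkeeping in the claim that $\star$ commutes with substitution and context-decomposition, i.e.\ that $\star(C[l_2\theta]_q, qq'r')$ factors as $D[\star(l_2,q')\theta']$ for a unary context $D$ depending only on $C$ and the prefix $q'$, and likewise for $r_2$ with the \emph{same} $D$ and $\theta'$ across all $x$-copies — this uniformity (same context, same substitution for every occurrence of $x$) is exactly what makes the single rule $\star(l_2,q') \to \star(r_2,o_j)$ of $\STC{\RR}$, after one application of monotonicity and stability, dominate all $n$ of the $\gamma_i$ simultaneously, and hence what yields the quantifier "for all permutations" in conditions~\ref{DEF:ll2} and~\ref{DEF:ll3} for free. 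I would isolate this as a small auxiliary computation (essentially: $\star$ only looks at the spine from the root down to position $p$, and copies of a variable $x\theta$ below that spine all contribute the identical suffix $\star(x\theta, r')$, which is why the comparison reduces to the fixed pair of $\GG$-terms coming from $l_2$ and $r_2$).
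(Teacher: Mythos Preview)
Your approach matches the paper's: verify the labeling axioms via closure of the reduction pair under $\GG$-contexts, observe $\alpha=\gamma$ and $\beta=\delta$ in the parallel case (and $\beta=\delta$ in the variable case) because $\star$ only reads the spine above the redex, and reduce the comparison $\alpha$ vs.\ $\gamma_i$ to a single $\STG{\RR}$- or $\STE{\RR}$-rule applied under the common prefix context and the common tail substitution. One slip to fix: you write $\star(s,qq')$ and $\star(l_2\theta,q')$ where you need $\star(s,qq'r')$ and $\star(l_2\theta,q'r')$; the tail $r'$ (the position of the $l_1$-redex inside $x\theta$) is precisely what makes the substitution $\theta' = [x \mapsto \star(x\theta,r')]$ nontrivial and uniform across all occurrences of $x$, which is the point of your ``main obstacle'' paragraph.
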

\begin{proof}
Because $(\geqslant,>)$ is a monotone reduction pair,
$(\ell_\star,\geqslant,>)$ is a labeling for~$\RR$.
Note that monotonicity and stability are with respect to the
signature~$\GG$.
To see that the constraints of Definition~\ref{DEF:ll} are satisfied
we argue as follows. For Figure~\ref{FIG:lpeaks}\subref{FIG:p} we have
$\alpha = \gamma$ and $\beta = \delta$ because the steps drawn at
opposing sides in the diagram take place at the same positions and
the function symbols above these positions stay the same. 
Next we consider Figure~\ref{FIG:lpeaks}\subref{FIG:vl}, i.e., the
right-linear case. Recall the local peak
\eqref{EQ:peak}. 
Again we have $\beta = \delta$ because $q < p$.
To see $\alpha \geqslant \gamma$ 
consider the step $s \xr{q,l_2\to r_2} u$ and
let $q'$ be the unique position in $\VPos(l_2)$ such that
$qq'r = p$ with $x = l_2|_{q'}$ for some position $r$.
If $|r_2|_x = 0$ then there is
no step and we are done. Otherwise let $q''$ be the position in
$r_2$ with $|r_2|_{q''} = x$.
By construction $\STE{\RR}$ contains the rule
$\star(l_2,q') \to \star(r_2,q'')$. 
Combining the assumption $\STE{\RR} \subseteq {\geqslant}$ with
monotonicity and stability of $\ell_\star$ yields
$\star(s,p) \geqslant \star(u,qq''r)$, i.e., $\alpha \geqslant \gamma$.
Next we consider Figure~\ref{FIG:lpeaks}\subref{FIG:vll} for
the duplicating case. Recall the local peak \eqref{EQ:peak}. 
Again we have $\beta = \delta$ because $q < p$.
To see $\alpha > \overline{\gamma}$
(for any permutation of the steps)
consider the step $s \xr{q,l_2\to r_2} u$ and
let $q'$ be the unique position in $\VPos(l_2)$ such that $qq'r = p$
for some position $r$. Let
$x = l_2|_{q'}$ and $Q = \{ \seq{q'} \}$ with $r_2|_{q_i'} = x$.
Then $P = \{ qq_i'r \mid q_i' \in Q \}$
is the set of descendants of $p$.
By construction $\STG{\RR}$ contains all rules
$\star(l_2,q') \to \star(r_2,q_i')$ for $1 \leqslant i \leqslant n$.
Combining the assumption $\STG{\RR} \subseteq {>}$ with monotonicity
and stability of $\ell_\star$ yields
$\star(s,p) > \star(u,p_i')$ for $p_i' \in P$.
Since $u \xR{P}{} v$ we obtain $\alpha > \gamma_i$ for
$1 \leqslant i \leqslant n$  and hence
the desired $\alpha > \overline{\gamma}$.
\qed
\end{proof}

\begin{remark}
It is also possible to formulate Lemma~\ref{LEM:star} as a relative
termination criterion without the use of a monotone reduction pair.
However, the monotone reduction pair may admit more labels to be
comparable (in the critical diagrams) because of the inclusions
${\STG{\RR}} \subseteq {>}$ and
${\STG{\RR} \cup \STE{\RR}} \subseteq {\geqslant}$.
\end{remark}

From Lemma~\ref{LEM:star} we obtain the following corollary.

\begin{corollary}
\label{COR:rt}
Let $\RR$ be a left-linear TRS and let $\ell$ be a weak LL-labeling.
Let $\ell_\star\ell$ denote $\ell \times \ell_\star$ or
$\ell_\star \times \ell$.
Let $({\geqslant},{>})$ be a monotone reduction pair
showing termination of $\ST{\RR}$.
If the
critical peaks of $\RR$ are decreasing for $\ell_\star\ell$
then $\RR$ is confluent.
\end{corollary}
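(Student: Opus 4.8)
The plan is to derive Corollary~\ref{COR:rt} by combining the two main ingredients already established: Lemma~\ref{LEM:star}, which turns a termination proof of $\ST{\RR}$ into an LL-labeling $\ell_\star$, and Lemma~\ref{LEM:lllex}, which says that lexicographically combining an LL-labeling with a weak LL-labeling (in either order) yields an LL-labeling. First I would invoke Lemma~\ref{LEM:star}: since $(\geqslant,>)$ is a monotone reduction pair with ${\STG{\RR}} \subseteq {>}$ and ${\STG{\RR} \cup \STE{\RR}} \subseteq {\geqslant}$ — which is exactly what a termination proof of $\ST{\RR} = \REL{\STG{\RR}}{\STE{\RR}}$ via this reduction pair provides — we get that $(\ell_\star,\geqslant,>)$ is an LL-labeling for $\RR$. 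Then, since $\ell$ is a weak LL-labeling by hypothesis, Lemma~\ref{LEM:lllex} gives that both $\ell \times \ell_\star$ and $\ell_\star \times \ell$ are LL-labelings; in either case $\ell_\star\ell$ is an LL-labeling.

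Next I would apply Theorem~\ref{THM:ll}: a left-linear TRS whose critical peaks are LL-decreasing is confluent. Since the critical peaks of $\RR$ are by assumption decreasing for $\ell_\star\ell$, and $\ell_\star\ell$ is an LL-labeling, the critical peaks are LL-decreasing, so $\RR$ is confluent. That is essentially the whole argument — it is a clean application of the two lemmas followed by the main theorem.

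The one point that needs a word of care (and where a referee might object) is whether ``$(\geqslant,>)$ shows termination of $\ST{\RR}$'' is literally the hypothesis needed by Lemma~\ref{LEM:star}. Strictly, showing relative termination of $\REL{\STG{\RR}}{\STE{\RR}}$ with a monotone reduction pair means (via Theorem~\ref{THM:rt}, applied once) orienting $\STG{\RR} \cup \STE{\RR}$ by $\geqslant$ and $\STG{\RR}$ by $>$, leaving $\REL{(\STG{\RR}\setminus{>})}{(\STE{\RR}\setminus{>})} = \REL{\varnothing}{\cdots}$, which is trivially terminating. So I would phrase the proof as: ``By Theorem~\ref{THM:rt}, a monotone reduction pair showing termination of $\ST{\RR}$ satisfies ${\STG{\RR}} \subseteq {>}$ and ${\STG{\RR} \cup \STE{\RR}} \subseteq {\geqslant}$; hence by Lemma~\ref{LEM:star} $\ell_\star$ is an LL-labeling, by Lemma~\ref{LEM:lllex} $\ell_\star\ell$ is an LL-labeling, and by Theorem~\ref{THM:ll} $\RR$ is confluent.'' If the intended reading is that $\ST{\RR}$-termination is \emph{defined} to mean these inclusions hold for the given pair, then even the appeal to Theorem~\ref{THM:rt} is unnecessary and the proof is a two-line citation chain.

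The main obstacle, such as it is, is not mathematical depth but making sure the bookkeeping about which of $\ell$ and $\ell_\star$ sits in the first component is handled uniformly — and here Lemma~\ref{LEM:lllex} was deliberately stated for both orders precisely so that the notation $\ell_\star\ell$ covering $\ell \times \ell_\star$ or $\ell_\star \times \ell$ goes through without a case split. Thus I expect essentially no difficulty; the proof is a short corollary-style assembly of Lemma~\ref{LEM:star}, Lemma~\ref{LEM:lllex}, and Theorem~\ref{THM:ll}.
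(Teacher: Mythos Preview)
Your proposal is correct and matches the paper's proof essentially line for line: invoke Lemma~\ref{LEM:star} to obtain that $\ell_\star$ is an LL-labeling, then Lemma~\ref{LEM:lllex} to conclude that $\ell_\star\ell$ (in either order) is an LL-labeling, and finish with Theorem~\ref{THM:ll}. The paper's version is terser and does not spell out the bookkeeping about Theorem~\ref{THM:rt}, but the argument is identical.
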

\begin{proof}
The function $\ell_\star$ is an LL-labeling by Lemma~\ref{LEM:star}.
Lemma~\ref{LEM:lllex} yields that $\ell_\star\ell$ is an 
LL-labeling. By assumption the critical peaks are decreasing for
$\ell_\star\ell$ and hence Theorem~\ref{THM:ll} yields the
confluence of $\RR$.
\qed
\end{proof}

The next example illustrates the use of Corollary~\ref{COR:rt}.

\begin{example}
\label{EX:mot2c}
We show confluence of the TRS $\RR$ from Example~\ref{EX:mot2}.
Termination of $\ST{\RR}$ (cf.\ Example~\ref{EX:star}) is
easily shown, e.g., the polynomial interpretation
\begin{xalignat*}{3}
\m{f_1}_\Nat(x) &= 2x &
\m{g_1}_\Nat(x) &= \m{g_2}_\Nat(x) = x &
\m{h_1}_\Nat(x) &= x + 1
\end{xalignat*}
orients both rules in $\STG{\RR}$ strictly.
To show decreasingness of the three critical peaks
(two of which are symmetric) we use the labeling
$\ell_\star \times \ell^i_\RL$ with 
${i(\m{f}(\m{h}(x)) \to
\m{h}(\m{g}(\m{f}(x),x,\m{f}(\m{h}(\m{a}))))) = 1}$
and all other rules receive label $0$.
For the moment we label a step $s \xr{\pi} t$ with 
the interpretation of $\star(s,p_\pi)$.
E.g., a step $\m{f}(\m{h}(\m{b})) \to \m{f}(\m{h}(\bot))$ is
labeled $2x + 2$ since
$\star(\m{f}(\m{h}(\m{b})),11) = \m{f_1}(\m{h_1}(x))$ and
$[\m{f_1}(\m{h_1}(x))]_\Nat = 2x + 2$.
The critical peak
${\m{h}(\m{g}(\m{f}(x),x,\m{f}(\m{h}(\m{a})))) \xl[x,1]{}
\m{f}(\m{h}(x)) \xr[{x,0}]{} \m{a}}$ is closed decreasingly by 
\[
\m{h}(\m{g}(\m{f}(x),x,\m{f}(\m{h}(\m{a}))))
\xr[{x,0}]{} \m{c} 
\xr[{x,0}]{}
\bot \xl[x,0]{} \m{b}
\xl[x,0]{} \m{a} 
\]
and the critical peak $\m{h}(\m{g}(\m{f}(x),x,\m{f}(\m{h}(\m{a})))) 
\xl[x,1]{} \m{f}(\m{h}(x)) \xr[2x,0]{} \m{f}(\m{c})$ is closed
decreasingly by
\[
\m{h}(\m{g}(\m{f}(x),x,\m{f}(\m{h}(\m{a})))) 
\xr[x,0]{} \m{c} 
\xr[x,0]{} \bot
\xl[x,0]{} \m{b} 
\xl[x,0]{} \m{a} 
\xl[x,0]{} \m{f}(\m{c})
\]
which allows to prove confluence of $\RR$ by Corollary~\ref{COR:rt}. 
\end{example}

By definition of $\alpha > \overline{\gamma}$
(cf.\ Definition~\ref{DEF:ll})
we observe that the definition of $\ST{\RR}$ can be relaxed. 
If $l_2 \to r_2$ with ${l_2}|_{q'} = x \in \VV$ and
$\{ \seq{q'} \}$ are the positions of the variable $x$ in $r_2$ then
it suffices if $n-1$ instances of $\ST{l_2,q'} \to \ST{r_2,q'_i}$ are
put in $\STG{\RR}$ while one $\ST{l_2,q'} \to \ST{r_2,q'_j}$ can be
put in $\STE{\RR}$ (since the steps labeled $\overline{\gamma}$ in
Figure~\ref{FIG:lpeaks}\subref{FIG:vll} are at parallel
positions we can choose the first closing step such that
$\alpha \geqslant \gamma_1$). This improved version of $\ST{\RR}$ is
denoted by $\STT{\RR} = \REL{\STTG{\RR}}{\STTE{\RR}}$.
We obtain the following variant of Lemma~\ref{LEM:star}.

\begin{lemma}
\label{LEM:sstar}
Let $\RR$ be a TRS. Then $(\ell_\star,\geqslant,>)$
is an LL-labeling, provided
$(\geqslant,>)$ is a monotone reduction pair,
${\STTG{\RR}} \subseteq {>}$, and
${\STTG{\RR} \cup \STTE{\RR}} \subseteq {\geqslant}$.
\qed
\end{lemma}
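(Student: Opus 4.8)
The plan is to mirror the proof of Lemma~\ref{LEM:star}, adjusting only the argument for the duplicating case of Figure~\ref{FIG:lpeaks}\subref{FIG:vll}, since everything else is literally unchanged. First I would observe that $(\ell_\star,\geqslant,>)$ is a labeling because $(\geqslant,>)$ is a monotone reduction pair (monotonicity and stability being with respect to the signature $\GG$), exactly as before. For Figure~\ref{FIG:lpeaks}\subref{FIG:p} we again get $\alpha=\gamma$ and $\beta=\delta$ since the opposing steps occur at identical positions with identical function symbols above them, and in the linear case of Figure~\ref{FIG:lpeaks}\subref{FIG:vl} the inclusion ${\STTE{\RR}}\subseteq{\geqslant}$ gives $\alpha\geqslant\gamma$ via the single rule $\star(l_2,q')\to\star(r_2,q'')$; here $\STTE{\RR}$ and $\STE{\RR}$ coincide on non-duplicating rules, so nothing changes.

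The heart of the argument is the duplicating case. I would fix the local peak~\eqref{EQ:peak} with $l_2\to r_2$ duplicating, $q<p$, let $q'$ be the unique position in $\VPos(l_2)$ with $qq'r=p$, set $x=l_2|_{q'}$, and let $q'_1,\dots,q'_n$ enumerate the positions of $x$ in $r_2$ with $n\geqslant 2$. By the definition of $\STT{\RR}$, we may assume (renumbering) that $\star(l_2,q')\to\star(r_2,q'_1)$ lies in $\STTE{\RR}$ while $\star(l_2,q')\to\star(r_2,q'_j)$ lies in $\STTG{\RR}$ for $2\leqslant j\leqslant n$. Applying monotonicity and stability of $\ell_\star$ together with ${\STTE{\RR}}\subseteq{\geqslant}$ and ${\STTG{\RR}}\subseteq{>}$ to the context and substitution induced by the step $s\xr{q,l_2\to r_2}u$ yields $\star(s,p)\geqslant\star(u,p'_1)$ and $\star(s,p)>\star(u,p'_j)$ for $2\leqslant j\leqslant n$, where $p'_i=qq'_ir$. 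Since the positions $p'_1,\dots,p'_n$ are pairwise parallel (they are descendants of $p$ through distinct occurrences of $x$), we can order the steps of $u\xR{}v$ so that the one at $p'_1$ comes first; this gives $\alpha\geqslant\gamma_1$ and $\alpha>\gamma_i$ for $2\leqslant i\leqslant n$, i.e.\ $\alpha>\overline{\gamma}$ for this permutation, establishing Definition~\ref{DEF:ll}(\ref{DEF:ll3}). For Definition~\ref{DEF:ll}(\ref{DEF:ll2}) note $\STTE{\RR}\cup\STTG{\RR}\subseteq{\geqslant}$ gives $\star(s,p)\geqslant\star(u,p'_i)$ for every $i$, hence $\alpha\geqslant\overline{\gamma}$ for all permutations; and $\beta=\delta$ follows from $q<p$ as before.

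The only subtle point — and the step I expect to require the most care — is the claim that the choice of which descendant goes into $\STTE{\RR}$ can be synchronised with the choice of permutation of the closing parallel step $u\xR{}v$. Concretely, one must check that the parallel rewrite step admits \emph{any} ordering of its constituent redex patterns (which holds because they sit at pairwise parallel positions, so $u^{\pi_{\sigma(1)}\cdots\pi_{\sigma(n)}}$ is independent of the permutation $\sigma$), and that the redex pattern contracted at $p'_1$ is precisely the one matched by the $\STTE{\RR}$-rule instance. This is the content of the parenthetical remark preceding the lemma, and I would spell it out by invoking the descendant relation from the variable-overlap description in the preliminaries. Everything else is a routine transcription of the proof of Lemma~\ref{LEM:star}, so the lemma can reasonably be stated with a \qed and the argument deferred or left to the reader, as the excerpt does.
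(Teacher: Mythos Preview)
Your proposal is correct and follows precisely the approach the paper intends: the paper omits the proof (marking the lemma with \qed) because it is a straightforward adaptation of the proof of Lemma~\ref{LEM:star}, with the only change being the duplicating case, where the relaxed partition $\STT{\RR}$ allows one descendant rule to land in $\STTE{\RR}$ and hence yields $\alpha \geqslant \gamma_1$ rather than $\alpha > \gamma_1$, matching Definition~\ref{DEF:ll}(\ref{DEF:ll3}) exactly. Your treatment of the permutation issue is also on target; note that for $\ell_\star$ the labels of the parallel closing steps are in fact permutation-invariant (the context above each $p'_i$ is unaffected by rewriting at the parallel positions $p'_j$ with $j \neq i$), which makes both conditions~\ref{DEF:ll2} and~\ref{DEF:ll3} immediate once the pointwise comparisons $\star(s,p) \geqslant \star(u,p'_i)$ and $\star(s,p) > \star(u,p'_j)$ for $j \geqslant 2$ are established.
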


Obviously any $\STT{\RR}$ inherits termination from $\ST{\RR}$.
The next example shows that the reverse statement does not hold. In
Section~\ref{IMP:main} we show how the intrinsic indeterminism
of $\STT{\RR}$ is eliminated in the implementation.

\begin{example}
\label{EX:OO03sstar}
Consider the TRS $\RR$ from Example~\ref{EX:OO03}.
The TRS $\STG{\RR}$ consists of the rules
\begin{xalignat*}{3}
\m{sq_1}(x) &\to \m{\times_1}(x)
& \m{sq_1}(\m{s_1}(x)) &\to \m{+_1}(\m{\times_1}(x))
& \m{\times}_1(x) &\to \m{+_1}(x)
\\
\m{sq_1}(x) &\to \m{\times_2}(x)
& \m{sq_1}(\m{s_1}(x)) &\to \m{+_1}(\m{\times_2}(x))
& \dagger\colon \m{\times}_1(x) &\to \m{+_2}(\m{\times_1}(x))
\\
&
& \m{sq_1}(\m{s_1}(x)) &\to \m{+_2}(\m{s_1}(\m{+_1}(x)))
& \dagger\colon \m{\times_2}(y) &\to \m{+_1}(\m{\times_2}(y))
\\
&
& \m{sq_1}(\m{s_1}(x)) &\to \m{+_2}(\m{s_1}(\m{+_2}(x)))
& \m{\times_2}(y) &\to \m{+_2}(y)
\end{xalignat*}
while $\STE{\RR}$ consists of the rules
\begin{xalignat*}{3}
\m{+_1}(x) &\to \m{+_1}(\m{+_1}(x)) 
& \m{+_1}(x) &\to \m{+_2}(x)
& \m{+_1}(x) &\to \m{+_1}(\m{s_1}(x))
\\
\m{+_2}(\m{+_1}(y)) &\to \m{+_1}(\m{+_2}(y))
& \m{+_2}(y) &\to \m{+_1}(y)
& \m{+_2}(\m{s_1}(y)) &\to \m{+_2}(y)
\\
\m{+_2}(\m{+_2}(z)) &\to \m{+_2}(z)
& \m{\times_1}(x) &\to \m{\times_2}(x)
& \m{\times_2}(\m{s_1}(y)) &\to \m{+_2}(\m{\times_2}(y))
\\
\m{+_1}(\m{+_1}(x)) &\to \m{+_1}(x)
& \m{\times_2}(y) &\to \m{\times_1}(y)
& \m{\times_1}(\m{s_1}(x)) &\to \m{+_1}(\m{\times_1}(x))
\\
\m{+_1}(\m{+_2}(y)) &\to \m{+_2}(\m{+_1}(y))
& \m{+_1}(\m{s_1}(x)) &\to \m{+_1}(x)
\\
\m{+_2}(z) &\to \m{+_2}(\m{+_2}(z))
& \m{+_2}(y) &\to \m{+_2}(\m{s_1}(y))
\end{xalignat*}
Let $\STD{\RR}$ denote the rules in $\STG{\RR}$ marked with $\dagger$.
Termination of $\ST{\RR}$ cannot be established
(because $\STD{\RR}$ is non-terminating) but we stress that
moving these rules into $\STE{\RR}$ yields a valid $\STT{\RR}$
which can be proved terminating by the polynomial interpretation with
\begin{xalignat*}{2}
\m{sq_1}_\Nat(x) &= x + 2 &
\m{\times_1}_\Nat(x) &= \m{\times_2}_\Nat(x) = x + 1
\end{xalignat*}
that interprets the remaining function symbols by the identity
function. We remark that Corollary~\ref{COR:rt} with the labeling
from Lemma~\ref{LEM:sstar} establishes confluence of $\RR$.
Since all reductions in the 34 joining sequences have only $+$ above
the redex and $\m{+_1}_\Nat(x) = \m{+_2}_\Nat(x) = x$,
the $\ell_\star$ labeling attaches $x$ to any of these steps. The
rule labeling that assigns $i(3) = i(6) = 2$, $i(4) = i(10) = 1$,
and $0$ to all other rules shows the 34 critical peaks decreasing.
\end{example}

\subsubsection{Measuring the contracted redex}
\label{LAB:ll:red}

Instead of the labeling $\ell_\star$, which is based on the
context above the contracted redex, one can also use the
contracted redex itself for labeling.

\begin{lemma}
\label{LEM:red}
Let $\RR$ be a TRS and $\ell_\red(s \xr{\pi} t) = s|_{p_\pi}$.
Then $(\ell_\red,\geqslant,>)$ is a weak
LL-labeling, provided $(\geqslant,>)$ is a monotone reduction pair
with $\RR \subseteq {\geqslant}$.
\end{lemma}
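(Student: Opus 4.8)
The plan is to check, in sequence, that $(\ell_\red,\geqslant,>)$ is a labeling in the sense of Definition~\ref{DEF:lab} and that it satisfies clauses~\ref{DEF:ll1} and~\ref{DEF:ll2} of Definition~\ref{DEF:ll}. For the first part, a monotone reduction pair already supplies compatibility of $\geqslant$ and $>$ and well-foundedness of $>$, so only monotonicity and stability of $\ell_\red$ remain. Here the observation to record is that for a step $\Gamma$ contracting a redex at position $p$ in $s$, and for any context $C$ with $C|_{p'} = \HOLE$ and substitution $\sigma$, one has $\ell_\red(C[\Gamma\sigma]) = C[s\sigma]|_{p'p} = (s\sigma)|_p = (s|_p)\sigma = \ell_\red(\Gamma)\sigma$; that is, the label reacts to surrounding contexts and substitutions just by applying $\sigma$, and does not depend on $C$ at all. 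Monotonicity and stability then follow at once from closure of $\geqslant$ and $>$ under substitutions.

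For the parallel case, Figure~\ref{FIG:lpeaks}\subref{FIG:p}, I would note that the two steps drawn on opposite sides of the diagram act at the same position on terms that coincide at that position (contracting the redex at $p$ leaves the subterm at the parallel position $q$ untouched, and symmetrically). Hence $\alpha = s|_p = u|_p = \gamma$, both equal to $l_1\sigma$, and $\beta = s|_q = t|_q = \delta$, both equal to $l_2\sigma$, so clause~\ref{DEF:ll1} holds trivially by reflexivity of $\geqslant$.

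The variable overlap, Figure~\ref{FIG:lpeaks}\subref{FIG:vll}, carries the actual content. Starting from the peak~\eqref{EQ:peak} with $q < p$ and $p\backslash q \notin \FPos(l_2)$, I would fix the position $q' \in \VPos(l_2)$ with $qq' \leqslant p$, put $x = l_2|_{q'}$ and $r = p \backslash qq'$, and trace residuals. On the $t$-side: since $l_2$ is left-linear the $l_2$-redex reappears at position $q$ in $t$, so $\delta = t|_q$; from the $l_1$-step below $q$ we get $s|_q \to_\RR t|_q$, hence $\beta = s|_q \geqslant t|_q = \delta$ using $\RR \subseteq {\geqslant}$ together with closure of $\geqslant$ under contexts and substitutions (which yields ${\to_\RR} \subseteq {\geqslant}$). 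On the $u$-side: the residuals of $p$ are the positions $qq_i'r$, where $q_1',\dots,q_n'$ enumerate the occurrences of $x$ in $r_2$; these are pairwise parallel, so the redex sitting at each of them stays $l_1\sigma = s|_p$ throughout the parallel step $u \xR{} v$, regardless of the order of contraction. Therefore $\alpha = \gamma_i = l_1\sigma$ for $1 \leqslant i \leqslant n$ and for every permutation, which gives $\alpha \geqslant \overline\gamma$ and completes clause~\ref{DEF:ll2}.

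I expect the only real obstacle to be the position-and-substitution bookkeeping in the variable-overlap case, specifically pinning down that it is left-linearity of $l_2$ that (i) keeps the residual $l_2$-redex at $q$ and (ii) places all residuals of the $l_1$-redex at pairwise parallel positions carrying the unchanged instance $l_1\sigma$. The same observation also explains why one obtains only a \emph{weak} LL-labeling: since $\alpha = \gamma_i$, the strict inequality demanded by clause~\ref{DEF:ll3} cannot hold once $n > 1$.
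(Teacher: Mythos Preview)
Your proof is correct and follows essentially the same approach as the paper's own argument: both establish that $\ell_\red$ is a labeling via the monotone reduction pair, obtain $\alpha = \gamma$ and $\beta = \delta$ in the parallel case, and in the variable-overlap case derive $\alpha = \gamma_i$ (same redex $l_1\sigma$ at all residuals) together with $\beta \geqslant \delta$ from $\RR \subseteq {\geqslant}$ and closure under contexts and substitutions. Your version simply spells out the position bookkeeping and the reason why only closure under substitutions is needed for monotonicity and stability of $\ell_\red$, points the paper leaves implicit.
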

\begin{proof}
Because $(\geqslant,>)$ is a monotone reduction pair,
$(\ell_\red,\geqslant,>)$ is a labeling for $\RR$.
To see that the constraints of Definition~\ref{DEF:wll} are satisfied
we argue as follows.
For Figure~\ref{FIG:lpeaks}\subref{FIG:p} we have $\alpha = \gamma$ and
$\beta = \delta$. For Figure~\ref{FIG:lpeaks}\subref{FIG:vll} we have
$\alpha = \gamma_1 = \cdots = \gamma_n$ (since
the same redex is contracted) and $\beta \geqslant \delta$ by the
assumption $\RR \subseteq {\geqslant}$ and monotonicity and stability
of $\geqslant$.
\qed
\end{proof}

The following definition collects the constraints, such that variable
overlaps can be made decreasing.

\begin{definition}
For a TRS $\RR$ let $\RR^\red = \{ l \to x \mid
\text{$l \to r \in \RR$ and $|r|_x > 1$} \}$.
\end{definition}

Due to the next result a termination proof of $\RR^\red/\RR$
enables a weak LL-labeling to establish confluence.

\begin{corollary}
\label{COR:red}
Let $\RR$ be a left-linear TRS and let $\ell$ be a weak LL-labeling.
Let $(\geqslant,>)$ be a simple monotone reduction pair showing
termination of $\RR^\red/\RR$. If the critical peaks of $\RR$ are
decreasing for $\ell_\red \times \ell$ then $\RR$ is confluent.
\end{corollary}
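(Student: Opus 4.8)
The plan is to combine the building blocks established earlier in the same way Corollary~\ref{COR:rtd} and Corollary~\ref{COR:rt} are proven. First I would note that, by Lemma~\ref{LEM:red}, $\ell_\red$ is a weak LL-labeling for $\RR$ with $(\geqslant,>)$ the given simple monotone reduction pair (simplicity is not needed for this step, only $\RR\subseteq{\geqslant}$, which is implied by $\RR\subseteq\RR^\red/\RR$-compatibility; in fact $\RR\subseteq{\geqslant}$ follows from the termination-of-$\RR^\red/\RR$ hypothesis together with the reduction-pair conditions once one orients $\RR$). Since $\ell$ is a weak LL-labeling by assumption, Remark~\ref{REM:wlllex} gives that $\ell_\red\times\ell$ is again a weak LL-labeling. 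So the statement we are after is exactly Corollary~\ref{COR:rtd} with the single modification that the relative-termination side condition "$\RRdnd$ terminating" is replaced by "$\RR^\red/\RR$ terminating", and the role of the duplication-counting labeling $\ell^\RRd_\SN$ is taken over by $\ell_\red$.

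The crux is therefore to upgrade $\ell_\red$ from a weak LL-labeling to a genuine LL-labeling by lexicographically composing it with an auxiliary labeling that enforces Definition~\ref{DEF:ll}(\ref{DEF:ll3}) on the duplicating case, using the termination of $\RR^\red/\RR$. Concretely, I would introduce the labeling $\ell' = \ell_\red' $ that measures the contracted redex but with respect to the relation pair $({\to_\RR^*},{\to_{\RR^\red/\RR}^+})$ — i.e.\ label a step $s\xr{\pi}t$ by $s|_{p_\pi}$, ordered by $\to_\RR^*$ (preorder) and $\to_{\RR^\red/\RR}^+$ (well-founded order, by the termination hypothesis). One checks, exactly as in Lemma~\ref{LEM:llsn} with $\RRd$ replaced by $\RR^\red$ and using Theorem~\ref{THM:rt} / Geser's lemma, that for a duplicating rule $l_2\to r_2$ the step $s\xr{q,l_2\to r_2}u$ strictly decreases the label of each descendant redex (the context $C$ below sees $l_2\sigma\to x\sigma\in\RR^\red$ followed by steps of $\RR$, so $s|_{p}\to_{\RR^\red/\RR}^+ u|_{p_i'}$ for each descendant position $p_i'$), giving $\alpha> \overline\gamma$ in Figure~\ref{FIG:lpeaks}\subref{FIG:vll}, while in the linear subcase $\overline\gamma$ is empty or a single element $\gamma$ with $\alpha\geqslant\gamma$ via $\RR\subseteq{\geqslant}$. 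Thus $\ell'$ is an LL-labeling. Then by Lemma~\ref{LEM:lllex}, $\ell'\times(\ell_\red\times\ell)$ is an LL-labeling.

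It remains to transport decreasingness of the critical peaks from $\ell_\red\times\ell$ to $\ell'\times\ell_\red\times\ell$, which is the routine closing move used already in the proof of Corollary~\ref{COR:rtd}: whenever $s\to_\RR t\to_\RR u$ we have $\ell'(s\to t)\geqslant\ell'(t\to u)$ because the $\geqslant$ of $\ell'$ is $\to_\RR^*$ and rewriting is monotone, so the first component never obstructs a decreasing closure, and any diagram that is decreasing for $\ell_\red\times\ell$ lifts verbatim to one that is decreasing for $\ell'\times\ell_\red\times\ell$. Confluence of $\RR$ then follows from Theorem~\ref{THM:ll}. The main obstacle — and the only place where the simplicity hypothesis on $(\geqslant,>)$ genuinely enters — is verifying that the auxiliary relative-termination problem $\RR^\red/\RR$ is really what is needed: one must be careful that when $\RR^\red$ replaces a redex by a variable, the surrounding context may itself be reducible, so the well-foundedness argument must go through $\to_{\RR^\red/\RR}^+$ rather than a single step, and simplicity of the reduction pair ($f(s_1,\dots,s_n)\geqslant s_i$) is exactly what is needed to guarantee that the $\RR^\red$-step $l\sigma\to x\sigma$ is weakly decreasing, so that Geser's incremental criterion (Theorem~\ref{THM:rt}) applies and the composite order is well-founded. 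Everything else is bookkeeping already carried out in Lemmata~\ref{LEM:lsn}, \ref{LEM:llsn}, \ref{LEM:lllex} and the cited corollaries.
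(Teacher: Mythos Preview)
Your proposal has a genuine gap in the construction of the auxiliary labeling $\ell'$. You claim that for a duplicating rule $l_2 \to r_2$ we obtain $\alpha > \overline{\gamma}$ via $s|_p \to_{\RR^\red/\RR}^+ u|_{p_i'}$. But in the variable overlap of Figure~\ref{FIG:lpeaks}\subref{FIG:vll}, the label $\alpha$ belongs to the \emph{inner} step $s \xr{p,l_1\to r_1} t$, so $\alpha = s|_p = l_1\sigma$, and each $\gamma_i$ is the label of a step contracting a copy of the very same redex $l_1\sigma$ at a descendant position, hence $\gamma_i = l_1\sigma$ as well. Thus $\alpha = \gamma_i$ identically; no choice of order on these labels can yield $\alpha > \overline{\gamma}$. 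The $\RR^\red$ step $l_2\sigma \to x\sigma$ happens at position $q < p$, i.e., \emph{above} the redex measured by $\ell_\red$, so it does not touch $s|_p$ at all. Your ``routine closing move'' also fails: for $\ell'$ labeling by $s|_{p_\pi}$, consecutive steps $s \to t \to u$ need not satisfy $s|_p \to_\RR^* t|_{p'}$, so decreasingness does not transport from $\ell_\red \times \ell$ to $\ell' \times \ell_\red \times \ell$.

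The paper's proof takes a different route and explicitly does \emph{not} produce an LL-labeling (see the Remark following the corollary). Instead it argues directly that the duplicating \XVLL diagram is decreasing because $\beta > \gamma_i$ for all $i$: one has $\beta = l_2\sigma$, and from $l_2 \to x \in \RR^\red$ together with stability one gets $l_2\sigma > x\sigma$; then simplicity of the reduction pair yields $x\sigma \geqslant x\sigma|_r = l_1\sigma = \gamma_i$, whence $\beta > \gamma_i$. This is precisely where simplicity is used (to descend from $x\sigma$ to its subterm $l_1\sigma$), not in any Geser-style argument. Since $\beta > \gamma_i$ makes every step of $u \xR{} v$ sit in $\xr[\Vee\beta]{}$, the diagram is decreasing even though the LL-labeling condition $\alpha > \overline{\gamma}$ fails. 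This asymmetry is also why only $\ell_\red \times \ell$, not $\ell \times \ell_\red$, works here.
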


\begin{proof}
Note that $\ell_\red\times\ell$ is a weak LL-labeling
(cf.\ Remark~\ref{REM:wlllex}),
which shows the peaks in Figure~\ref{FIG:lpeaks}\subref{FIG:p} and
Figure~\ref{FIG:lpeaks}\subref{FIG:vl} decreasing.
For the duplicating case of Figure~\ref{FIG:lpeaks}\subref{FIG:vll} we 
inspect the labels with regard to $\ell_\red$.
Consider the local peak~\eqref{EQ:peak}.
Clearly, $\beta = l_2\sigma$ and $\alpha = l_1\sigma$.
Since $\gamma_i = \alpha$, we want to establish $\beta > \alpha$.
To this end let $q' \in \VPos(l_2)$
such that $qq'r = p$ and $x = l_2|_{q'}$.
Note that $l_2 \to x \in \RR^\red$ because we are in the
duplicating case. Hence the
relative termination assumption
gives $l_2 > x$, and $l_2\sigma > x\sigma$ is obtained by stability. 
Now as $x\sigma|_{r} = l_1\sigma$ the desired $\beta > \alpha$ follows
from simplicity of the reduction pair since
$l_2\sigma > x\sigma \geqslant l_1\sigma$.
Combining $\ell_\red$ lexicographically with a weak LL-labeling $\ell$
into $\ell_\red\times\ell$ maintains decreasingness.
\qed
\end{proof}

\begin{remark}
Note that the labeling $\ell_\red\times\ell$ from Corollary~\ref{COR:red}
is not an LL-labeling. The point is that there are multiple ways of
ensuring decreasingness of Figure~\ref{FIG:lpeaks}\subref{FIG:vll}.
For LL-labelings, we use $\alpha > \overline\gamma$, while in
Corollary~\ref{COR:red}, $\beta > \gamma_i$ for $1 \leqslant i \leqslant n$
does the job. This is also the reason why $\ell \times \ell_\red$
cannot be used in Corollary~\ref{COR:red}. Consider the
TRS with the rules $1:\m{f}(x) \to \m{g}(x,x)$ and 
$2:\m{a }\to \m{b}$. Let $\ell_\RL$
be the rule labeling attaching the rule numbers as labels.
Then the variable overlap is not decreasing for
$\ell_\RL \times \ell_\red$.
\end{remark}

We demonstrate Corollary~\ref{COR:red} on the TRS from
Example~\ref{EX:mot}.

\begin{example}
Consider the TRS from Example~\ref{EX:mot}. The polynomial interpretation
\begin{xalignat*}{3}
\m{g}_\Nat(x,y) &= 2x + 2y + 1 &
\m{a}_\Nat = \m{b}_\Nat &= 0 &
\m{f}_\Nat(x) &= x^2
\end{xalignat*}
establishes relative termination of
$\{ \m{f}(\m{g}(x,\m{a})) \to x \}/\RR$ and
shows the critical peak decreasing when
labeling steps with the 
pair obtained by the interpretation of the redex 
and the rule labeling, i.e.,
$t = \m{f}(\m{g}(x,\m{b})) \xl[0,2]{} \m{f}(\m{g}(x,\m{a}))
\xr[(2x+1)^2,3]{} \m{g}(\m{f}(x),\m{f}(x)) = u$ 
for the peak and
$t \xr[0,1]{} \m{f}(\m{g}(x,\m{a})) \xr[(2x+1)^2,3]{} u$
for the join.
\end{example}

\subsubsection{Exploiting Persistence}
\label{LAB:ll:per}

In this section we show how to exploit persistence of
confluence~\cite{AT97,FZM11} to enhance the applicability of L-labelings
to certain duplicating left-linear TRSs. Compared to 
Sections~\ref{LAB:ll:dup}--\ref{LAB:ll:red},
where variable overlaps were closed decreasingly
by a relative termination criterion, here persistence
arguments are employed to avoid reasoning about variable overlaps at
duplicating variable positions at all.
To this end we recall order-sorted TRSs.

\begin{definition}
Let $S$ be a set of sorts equipped with a partial order $\leq$.
A signature $\FF$ and a set of variables~$\VV$ are
$S$-sorted if
every $n$-ary function symbol $f \in \x{F}$ is equipped with a
sort declaration $\alpha_1 \times \dots \times \alpha_n \to \alpha$ where 
$\seq{\alpha}, \alpha \in S$ and every variable $x \in \x{V}$ has exactly
one sort $\alpha \in S$.
We write $S(f) = \alpha$,
$S(f,i) = \alpha_i$ for $1 \leqslant i \leqslant n$, and
$S(x) = \alpha$, respectively.
We let $\x{V}_\alpha = \{ x \in \x{V} \mid S(x) = \alpha \}$ and
require that $\x{V}_\alpha$ is infinite for all $\alpha \in S$.
The set of $S$-sorted terms, $\x{T}_S(\x{F},\x{V})$,
is the union of the sets $\x{T}_\alpha(\x{F},\x{V})$
for $\alpha \in S$ that are inductively
defined as follows: $\x{V}_\alpha \subseteq \x{T}_\alpha(\x{F},\x{V})$ and
$f(\seq{t}) \in \x{T}_\alpha(\x{F},\x{V})$ whenever 
$f \in \x{F}$ has sort declaration
$\alpha_1 \times \dots \times \alpha_n \to \alpha$ and
$t_i \in \x{T}_{{\leq}\alpha_i}(\x{F},\x{V})$ for all
$1 \leqslant i \leqslant n$.
Here $\x{T}_{{\leq}\alpha}(\x{F},\x{V})$ is the union of all
$\x{T}_\beta(\x{F},\x{V})$ for $\beta \leq \alpha$.
\end{definition}

The notion of $S$-sorted terms properly extends many-sorted terms.
Indeed, if we let $\leq$ be the identity relation then
$\x{T}_{{\leq}\alpha}(\x{F},\x{V}) = \x{T}_{\alpha}(\x{F},\x{V})$,
which means that the $i$-th argument of $f$ in an $S$-sorted
term must have sort $S(f,i)$.

\begin{definition}
We extend $S(\cdot)$ and $S(\cdot,\cdot)$ to
$S$-sorted terms $t$ and non-root positions of $t$.
If $t = f(\seq{t})$ then $S(t) = S(f)$,
$S(t,i) = S(f,i)$, and $S(t,ip) = S(t_i,p)$ for $p \neq \epsilon$.
If $t = x \in \x{V}$ then $S(t) = S(x)$.
\end{definition}

\begin{example}
Let $S = \{ 0, 1, 2 \}$ with $0 \leq 1$ and consider the
sort declarations $\m{f} : 1 \to 2$ and $x : 0$. Then
$t = \m{f}(x) \in \x{T}_S(\{ \m{f} \},\{ x \})$,
$S(t) = 2$, $S(t,1) = 1$, and 
$S(t|_1) = 0 \leq 1$.
\end{example}

One easily observes that $S(t,p)$ defines
the maximal sort induced by the context $t[\square]_p$:
a term $t[u]_p$ is $S$-sorted if and only if
$u \in \x{T}_{{\leq}S(t,p)}(\x{F},\x{V})$. Consequently,
we have $S(t|_p) \leq S(t,p)$
for all non-root positions $p$ of $t$.

We are particularly interested in the case where rewriting restricted
to $S$-sorted terms coincides with ordinary rewriting with initial
terms restricted to $S$-sorted ones. This property is captured
by $S$-compatible TRSs.

\begin{definition}
A TRS $\RR$ is \emph{$S$-compatible} if for every rule
$l \to r \in \x{R}$ there exists a sort $\alpha \in S$ such that
$l \in \x{T}_\alpha(\x{F},\x{V})$ and
$r \in \x{T}_{{\leq}\alpha}(\x{F},\x{V})$, and
$S(l,p) = S(l|_p)$ for all $p \in \VPos(l)$.
\end{definition}

The following lemma is well-known (e.g.~\cite{W92}) and easy
to prove.

\begin{lemma}
If $\RR$ is $S$-compatible then $\x{T}_S(\x{F},\x{V})$
and $\x{T}_{{\leq}\alpha}(\x{F},\x{V})$ for every
$\alpha \in S$ are closed under rewriting by $\RR$.
\qed
\end{lemma}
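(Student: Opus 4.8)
The plan is to prove closure under a single $\to_\RR$-step, from which closure under $\to_\RR^*$ is immediate, and to treat the two claims at once: since $\x{T}_S(\x{F},\x{V}) = \bigcup_{\alpha \in S}\x{T}_{{\leq}\alpha}(\x{F},\x{V})$, it suffices to show that $\x{T}_{{\leq}\alpha}(\x{F},\x{V})$ is closed under $\to_\RR$ for every $\alpha \in S$. So fix $\alpha$, a term $t \in \x{T}_{{\leq}\alpha}(\x{F},\x{V})$, and a step $t = t[l\sigma]_p \to_\RR t[r\sigma]_p = s$ with $l \to r \in \RR$; the goal is $s \in \x{T}_{{\leq}\alpha}(\x{F},\x{V})$.

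First I would record the routine \emph{substitution lemma}: call a substitution $\sigma$ \emph{sort-respecting} if $x\sigma \in \x{T}_{{\leq}S(x)}(\x{F},\x{V})$ for all $x \in \Var$; then, for sort-respecting $\sigma$, $u \in \x{T}_{{\leq}\beta}(\x{F},\x{V})$ implies $u\sigma \in \x{T}_{{\leq}\beta}(\x{F},\x{V})$. This is a structural induction on $u$: the variable case is the definition of sort-respecting together with transitivity of $\leq$, and for $u = f(\seq{u})$ one applies the induction hypothesis to each argument and the sort declaration of $f$.

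The crux is the second step: the substitution $\sigma$ occurring in the step may be assumed sort-respecting. Since all variables of $r$ occur in $l$, replacing $\sigma$ by the identity outside $\Var(l)$ changes neither $l\sigma$ nor $r\sigma$ and is sort-respecting there; so it remains to treat $x \in \Var(l)$. Pick $q \in \VPos(l)$ with $l|_q = x$. As the left-hand side of a rule is not a variable, $q \neq \epsilon$, and every position strictly above $q$ carries a function symbol, so substitution leaves those symbols unchanged and hence $S(l\sigma, q) = S(l, q)$. Now $t|_p = l\sigma$ is a subterm of the $S$-sorted term $t$, hence $S$-sorted; applying the recorded characterisation ``$t'[u']_q$ is $S$-sorted iff $u' \in \x{T}_{{\leq}S(t',q)}(\x{F},\x{V})$'' to $l\sigma = (l\sigma)[x\sigma]_q$ gives $x\sigma \in \x{T}_{{\leq}S(l\sigma,q)}(\x{F},\x{V}) = \x{T}_{{\leq}S(l,q)}(\x{F},\x{V}) = \x{T}_{{\leq}S(l|_q)}(\x{F},\x{V}) = \x{T}_{{\leq}S(x)}(\x{F},\x{V})$, where the middle equality is exactly the $S$-compatibility requirement $S(l,q) = S(l|_q)$ for $q \in \VPos(l)$.

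It then remains to conclude. By $S$-compatibility fix $\beta \in S$ with $l \in \x{T}_\beta(\x{F},\x{V})$ and $r \in \x{T}_{{\leq}\beta}(\x{F},\x{V})$; since $l$ is not a variable its root symbol has output sort $\beta$, so $l\sigma$ --- which has the same root and is $S$-sorted --- satisfies $S(l\sigma) = \beta$, while the substitution lemma yields $r\sigma \in \x{T}_{{\leq}\beta}(\x{F},\x{V})$. If $p = \epsilon$ then $t = l\sigma$, so $\beta = S(l\sigma) \leq \alpha$ and $s = r\sigma \in \x{T}_{{\leq}\beta}(\x{F},\x{V}) \subseteq \x{T}_{{\leq}\alpha}(\x{F},\x{V})$. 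If $p \neq \epsilon$ then, since $t = t[l\sigma]_p$ is $S$-sorted, the characterisation gives $\beta = S(l\sigma) \leq S(t,p)$, whence $r\sigma \in \x{T}_{{\leq}\beta}(\x{F},\x{V}) \subseteq \x{T}_{{\leq}S(t,p)}(\x{F},\x{V})$ and so $s = t[r\sigma]_p$ is $S$-sorted; as replacing a proper subterm does not change the root symbol, $S(s) = S(t) \leq \alpha$, i.e.\ $s \in \x{T}_{{\leq}\alpha}(\x{F},\x{V})$. The only genuinely non-bookkeeping point is the crux above --- deriving sort-respectingness of $\sigma$ from $S$-sortedness of $l\sigma$ --- which relies precisely on the clause $S(l,p) = S(l|_p)$ in the definition of $S$-compatibility; everything else is structural induction and the already-established behaviour of $S(t,p)$.
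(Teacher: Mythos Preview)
Your proof is correct. The paper itself does not give a proof of this lemma at all: it states the result with an immediate \qed, calling it ``well-known (e.g.~[W92]) and easy to prove,'' so there is nothing to compare against beyond noting that your argument is the standard one---extracting a sort-respecting substitution from $S$-sortedness of $l\sigma$ via the clause $S(l,p) = S(l|_p)$, then pushing $r\sigma$ back into the context using the characterisation of $S(t,p)$.
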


The following result is a special case of \cite[Theorem 6.2]{FZM11}.

\begin{theorem}
\label{THM:persistence}
An $S$-compatible left-linear TRS $\RR$
is confluent on $\x{T}(\x{F},\x{V})$
if and only if it is confluent on $\x{T}_S(\x{F},\x{V})$.
\qed
\end{theorem}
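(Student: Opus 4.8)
The plan is to establish the theorem as an instance of \cite[Theorem~6.2]{FZM11}, which is explicitly invoked, so the real work is to identify a sort structure $S$ that makes $\RR$ $S$-compatible while being fine enough to be useful, and to verify the hypotheses of that cited result. Concretely, I would first fix the order-sorted signature: since \cite[Theorem~6.2]{FZM11} states an equivalence for an arbitrary $S$-compatible left-linear TRS, the statement of Theorem~\ref{THM:persistence} here is just the specialization to the single fixed (given) $S$-structure, so no construction of $S$ is needed---it is part of the data. Thus the proof reduces to checking that the preconditions of the general persistence theorem are met.

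The key steps I would carry out, in order. First, note that $\RR$ is left-linear by hypothesis, which is the left-linearity requirement in \cite[Theorem~6.2]{FZM11}. Second, invoke the lemma just stated (closure of $\x{T}_S(\x{F},\x{V})$ and each $\x{T}_{{\leq}\alpha}(\x{F},\x{V})$ under $\to_\RR$) so that rewriting does not escape the $S$-sorted world; this is exactly what is needed for the ``if'' direction to even make sense, and it is where $S$-compatibility is used. Third, observe that $\x{T}(\x{F},\x{V}) = \bigcup_{\alpha}\x{T}_{{\leq}\alpha}(\x{F},\x{V})$ only after forgetting sorts, so confluence on the unsorted term set must be related to confluence on each sorted component; here I would recall the standard argument that a rewrite-closed ``full'' set decomposes into order-sorted strata, and that \cite{FZM11} shows confluence is preserved when passing between these provided left-linearity holds (the left-linearity is what rules out the classical Toyama-style counterexamples to modularity/persistence). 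Fourth, one direction is immediate: confluence on $\x{T}(\x{F},\x{V})$ trivially implies confluence on the subset $\x{T}_S(\x{F},\x{V})$, again using that the latter is closed under rewriting so that joining sequences stay inside it. The converse is the content of \cite[Theorem~6.2]{FZM11}, which I would simply cite.

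I expect the main obstacle to be purely expository rather than mathematical: making precise in what sense ``confluent on $\x{T}_S(\x{F},\x{V})$'' is a genuinely weaker hypothesis than confluence on all terms, and explaining why the $S$-compatibility condition $S(l,p)=S(l|_p)$ for $p\in\VPos(l)$ is exactly what guarantees that an unsorted rewrite step can be mirrored by a sorted one after a sort-respecting substitution (so that a local peak on an arbitrary term is an instance of a local peak on a well-sorted term). Since the paper has already done the groundwork---defining $S$-sorted terms, $S$-compatibility, and the closure lemma---the honest proof here is one or two sentences: $\RR$ is left-linear and $S$-compatible, hence \cite[Theorem~6.2]{FZM11} applies verbatim, and the ``only if'' direction is trivial because $\x{T}_S(\x{F},\x{V}) \subseteq \x{T}(\x{F},\x{V})$ is closed under $\to_\RR$. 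I would present it at that level of brevity, with a pointer to \cite{FZM11} for the substantive implication, rather than reproving the persistence result.
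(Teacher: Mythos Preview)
Your proposal is correct and matches the paper's approach exactly: the paper states Theorem~\ref{THM:persistence} as a special case of \cite[Theorem~6.2]{FZM11} and gives no further proof (the \qed follows the statement directly). Your elaboration on why the hypotheses transfer is accurate but already more detailed than what the paper itself provides.
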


\begin{example}
\label{EX:pl}
Consider the duplicating TRS $\RR$ with rules
\begin{xalignat*}{2}
1\colon~\m{f}(\m{a}) &\to \m{f}(\m{b}) &
2\colon~\m{f}(x) &\to \m{g}(\m{f}(x),\m{f}(x))
\end{xalignat*}
Recall that L-labelings (in particular, rule labelings) that are not
LL-labelings are not applicable to non-linear TRSs because the variable
overlap diagram (Figure~\ref{FIG:lpeaks}\subref{FIG:vll})
is not decreasing.
Let $S = \{ 0, 1 \}$ with the following sort declarations:
\begin{xalignat*}{5}
x &: 0 &
\m{a} &: 0 &
\m{b} &: 0 &
\m{f} &: 0 \to 1 &
\m{g} &: 1 \times 1 \to 1
\end{xalignat*}
The TRS $\RR$ is $S$-compatible and hence we may
restrict rewriting to $S$-sorted terms
without affecting confluence by Theorem~\ref{THM:persistence}.
This has the beneficial effect that
variable overlaps are ruled out. To see how, note that
no subterms of sort $1$ can appear inside terms of sort $0$.
Consider the left-hand side $\m{f}(x)$ of $\RR$. We have
$S(\m{f}(x),1) = 0$, so that any term substituted for $x$
must have sort $0$. Further note that both left-hand sides
have sort $1$. Consequently, no rule application may be nested below
$\m{f}(x) \to \m{g}(\m{f}(x),\m{f}(x))$
and hence variable overlaps are ruled out.
Therefore, we may use L-labelings
to show confluence of $\RR$ even though $\RR$ is not linear,
and in fact the rule labeling 
which takes the rule numbers as labels
allows us to join the sole (modulo symmetry) critical peak 
${t = \m{f}(\m{b}) \xl[1]{} \m{f}(\m{a}) \xr[2]{}
\m{g}(\m{f}(\m{a}),\m{f}(\m{a})) = u}$ decreasingly:
${t \xr[2]{} \m{g}(\m{f}(\m{b}),\m{f}(\m{b}))
\xl[1]{} \m{g}(\m{f}(\m{b}),\m{f}(\m{a})) \xl[1]{} u}$.
\end{example}

Formally, we define
$\x{T}_{{\trianglelefteq}\alpha}(\x{F},\x{V}) = \{ t \mid
\text{$t \trianglelefteq t'$ for some
$t' \in \x{T}_{{\leq}\alpha}(\x{F},\x{V})$} \}$, to capture
which terms may occur as subterms of terms of sort $\alpha$
or below.

\begin{theorem}
\label{THM:pl}
Let $\RR$ be a left-linear $S$-compatible TRS
such that the variable $l|_p$ occurs at most once in $r$ 
whenever
$l \to r \in \RR$ and $l' \to r' \in \RR$ with
$l' \in \x{T}_{{\trianglelefteq}S(l,p)}(\x{F},\x{V})$ for
some $p \in \VPos(l)$.
Then $\RR$ is confluent if all its critical peaks are L-decreasing.
\end{theorem}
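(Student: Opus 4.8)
The plan is to combine persistence of confluence with the decreasing diagrams technique, using the hypothesis of the theorem to eliminate the only obstructive local peaks. Since $\RR$ is left-linear and $S$-compatible, Theorem~\ref{THM:persistence} reduces confluence of $\RR$ on $\x{T}(\x{F},\x{V})$ to confluence on $\x{T}_S(\x{F},\x{V})$, and since $\x{T}_S(\x{F},\x{V})$ is closed under $\to_\RR$ the latter restricts to an ARS on $\x{T}_S(\x{F},\x{V})$. Fixing an L-labeling $\ell$ such that the critical peaks of $\RR$ are decreasing for $\ell$ (available by L-decreasingness), I would then show that $\langle \x{T}_S(\x{F},\x{V}), \to_\RR \rangle$, with steps labeled by $\ell$, is decreasing; confluence then follows from Theorem~\ref{THM:dd}.

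The heart of the argument is the usual case analysis on a local peak of $S$-sorted terms of the form~\eqref{EQ:peak}. The \XP case is decreasing directly by Definition~\ref{DEF:l}, and the \XO case is handled exactly as in the proof of Theorem~\ref{THM:l}: such a peak is an instance of a critical peak, which is decreasing by assumption, so monotonicity and stability of $\ell$ conclude. The remaining case is \XV, and here the idea is to show that the duplicating subcase cannot occur among $S$-sorted terms; hence the peak always has the shape of Figure~\ref{FIG:lpeaks}\subref{FIG:vl} (using left-linearity of $\RR$, so that $m = 1$ on the left-hand side), and is therefore decreasing by Definition~\ref{DEF:l}. This is precisely where the hypothesis enters: although $\RR$ need not be linear, the sort restriction rules out the nested rewriting that would give a genuinely duplicating variable overlap.

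Concretely, for the \XV case let $q'$ be the unique position in $\VPos(l_2)$ with $qq' \leqslant p$, put $x = l_2|_{q'}$, and let $l_1 \to r_1$ be the rule used at $p$, so that the redex $s|_p = l_1\sigma$ is a subterm of $s|_{qq'} = (l_2\sigma)|_{q'} = \sigma(x)$. As $s$ is $S$-sorted, so is $\sigma(x)$, and $\sigma(x) \in \x{T}_{{\leq}S(l_2,q')}(\x{F},\x{V})$ because $S((l_2\sigma)|_{q'}) \leqslant S(l_2\sigma,q') = S(l_2,q')$. Now $l_1$ is $S$-sorted by $S$-compatibility of its rule and has the same root sort as $l_1\sigma$, so replacing the occurrence of $l_1\sigma$ inside $\sigma(x)$ by $l_1$ yields an $S$-sorted term in $\x{T}_{{\leq}S(l_2,q')}(\x{F},\x{V})$ that contains $l_1$ as a subterm; hence $l_1 \in \x{T}_{{\trianglelefteq}S(l_2,q')}(\x{F},\x{V})$. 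Instantiating the hypothesis with $l \to r := l_2 \to r_2$, $p := q'$, and $l' \to r' := l_1 \to r_1$ then yields that $x = l_2|_{q'}$ occurs at most once in $r_2$, which together with $|l_2|_x = 1$ puts us in the \XVL case as claimed.

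The step I expect to be the main obstacle is this last sort-theoretic bookkeeping: one must argue carefully that the left-hand side $l_1$ itself — not merely its instance $l_1\sigma$ — belongs to $\x{T}_{{\trianglelefteq}S(l_2,q')}(\x{F},\x{V})$, which rests on the observation that $S(t,p)$ is exactly the maximal sort admissible at position $p$ and that replacing a subterm by an $S$-sorted term of equal sort preserves $S$-sortedness. Verifying that $S(l_2\sigma,q') = S(l_2,q')$ (the path to $q'$ runs entirely through the pattern of $l_2$) and keeping the various sort inequalities straight is the fiddly part; everything else reduces to the already established L-labeling cases and the two earlier theorems.
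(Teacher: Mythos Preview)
Your proposal is correct and follows essentially the same route as the paper: reduce to $S$-sorted terms via Theorem~\ref{THM:persistence}, replay the case analysis of Theorem~\ref{THM:l}, and use the sort hypothesis to force every \XV peak into the \XVL shape. The paper dismisses the key step with ``it is easy to see that $l' \in \x{T}_{{\trianglelefteq}S(l,q')}(\x{F},\x{V})$'', whereas you spell out the sort bookkeeping (same-root-sort replacement of $l_1\sigma$ by $l_1$ inside $\sigma(x)$), which is exactly the content of that claim.
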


\begin{proof}
By Theorem~\ref{THM:persistence} we may restrict rewriting to
$S$-sorted terms. The proof follows that of Theorem~\ref{THM:l},
except in the analysis of local peaks, where right-linearity of $\RR$
is used, which is not among our assumptions. Instead, we argue as
follows: Since $\RR$ is left-linear, any local peak has the shape
\XP, \XO, or \XVLL. In the
latter case, the step ${s \xr{q, l' \to r'} t}$ is nested below
${s \xr{p, l \to r} u}$, and it is easy to see that this implies
${l' \in \x{T}_{{\trianglelefteq}S(l,q')}(\x{F},\x{V})}$ for
some variable position $q'$ of $l$ such that $pq' \leqslant q$.
Consequently the variable $x = l|_{q'}$ occurs at most once in $r$
by assumption, and the parallel step (which contains one rewrite
step for every occurrence of $x$ in $r$) is empty or a single step,
resulting in a decreasing diagram.
\qed
\end{proof}

As a refinement of Theorem~\ref{THM:pl}, instead of ruling out
duplicating \XVLL overlaps completely, we
can also add additional constraints on the labeling for the
remaining variable overlaps.

\begin{definition}
Let $\ell$ be a weak LL-labeling for an $S$-compatible TRS $\RR$.
We call $\ell$ \emph{persistent} if whenever
rules $l \to r, l' \to r' \in \RR$ satisfy
$l' \in \x{T}_{{\trianglelefteq}S(l,p)}(\x{F},\x{V})$ for
some $p \in \VPos(l)$, either $|r|_{l|_p} \leqslant 1$
or $\beta > \overline\gamma$ in Figure~\ref{FIG:lpeaks}\subref{FIG:vll}
for all resulting variable overlaps with $l' \to r'$ below $l \to r$.
We call $\RR$ persistent LL-decreasing if there is a
persistent, weak LL-labeling $\ell$ such that all critical peaks
of $\RR$ are decreasing with respect to $\ell$.
\end{definition}

\begin{theorem}
\label{THM:pll}
Let $\RR$ be a left-linear TRS. 
If the critical peaks of $\RR$ are persistent LL-decreasing then
$\RR$ is confluent.
\end{theorem}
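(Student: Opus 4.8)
The plan is to adapt the proof of Theorem~\ref{THM:pl}: there, duplicating \XVLL overlaps were excluded outright because the contracted variable occurs at most once in the right-hand side; here that guarantee is only partial, and where it fails the persistence condition on the labeling must take its place. Fix a persistent weak LL-labeling $\ell$ witnessing that all critical peaks of $\RR$ are decreasing. Since $\ell$ is persistent, $\RR$ is $S$-compatible for some set of sorts $S$, so by Theorem~\ref{THM:persistence} it suffices to show confluence of $\RR$ on $\x{T}_S(\x{F},\x{V})$. I would do this via Theorem~\ref{THM:dd}, proving that the ARS $\langle \x{T}_S(\x{F},\x{V}), \to_\RR \rangle$ with steps labeled by $\ell$ is decreasing; note that $\x{T}_S(\x{F},\x{V})$ is closed under $\to_\RR$ by $S$-compatibility.

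Because $\RR$ is left-linear, every local peak on $S$-sorted terms has shape \XP, \XO, or \XVLL. The first two cases are handled exactly as in the proofs of Theorems~\ref{THM:l} and~\ref{THM:ll}: \XP is decreasing by condition~\ref{DEF:ll1} of Definition~\ref{DEF:wll}, and a \XO peak is an instance of a critical peak, which is decreasing by hypothesis, so monotonicity and stability of $\ell$ transfer decreasingness. For a \XVLL peak, i.e.\ local peak~\eqref{EQ:peak} with $q < p$ and $p \backslash q \notin \FPos(l_2)$, the step $s \xr{p, l_1 \to r_1} t$ is nested below $s \xr{q, l_2 \to r_2} u$; as in the proof of Theorem~\ref{THM:pl} this forces $l_1 \in \x{T}_{{\trianglelefteq}S(l_2,q')}(\x{F},\x{V})$, where $q' \in \VPos(l_2)$ is the unique variable position with $qq' \leqslant p$. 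Set $x = l_2|_{q'}$ and invoke the persistence condition for the outer rule $l_2 \to r_2$ and the inner rule $l_1 \to r_1$.

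If $|r_2|_x \leqslant 1$, the parallel step $u \xR{} v$ is empty or a single step and the diagram is closed decreasingly by conditions~\ref{DEF:ll1} and~\ref{DEF:ll2} of Definition~\ref{DEF:wll}, exactly as in Theorem~\ref{THM:pl}. Otherwise persistence yields $\beta > \overline\gamma$, so after reordering the pairwise parallel (hence freely commuting) steps of $u \xR{} v$ we may assume $\beta \geqslant \gamma_1$ and $\beta > \gamma_i$ for $2 \leqslant i \leqslant n$; moreover, as $\ell$ is a weak LL-labeling, $\alpha \geqslant \gamma_i$ for all $i$ and $\beta \geqslant \delta$. I would then check that $t \xl[\alpha]{} s \xr[\beta]{} u$ is joined decreasingly by $t \xr[\delta]{} v$ together with $u \xR{} v$ executed in the order $\gamma_2, \dots, \gamma_n, \gamma_1$: the step $\delta$ fills the $\xr[\Veq\beta]{=}$ slot (as $\delta \leqslant \beta$), the steps $\gamma_2, \dots, \gamma_n$ fill the leading $\xr[\Vee\beta]{*}$ slot on the $u$-side (as $\gamma_i < \beta$), and the final step $\gamma_1$ fills the $\xr[\Veq\alpha]{=}$ slot (as $\gamma_1 \leqslant \alpha$), which is precisely the shape required by the definition of decreasing. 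This settles \XVLL, hence all local peaks, and Theorem~\ref{THM:dd} gives the confluence of $\RR$.

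I expect the duplicating subcase of \XVLL to be the only real obstacle: one must make sure the asymmetric bound $\beta > \overline\gamma$ (strict for all but one of the joining steps) can be fit into the rigid left-to-right shape of the decreasingness condition. This works precisely because a weak LL-labeling additionally supplies $\alpha \geqslant \gamma_i$ for every step, so the single step for which only $\beta \geqslant \gamma_1$ is available is absorbed by the central $\xr[\Veq\alpha]{=}$ slot rather than the $\xr[\Vee\beta]{*}$ prefix, and commutation of the parallel steps lets it be moved to that position.
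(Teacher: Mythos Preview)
Your approach is correct and matches the paper's (very terse) argument: restrict to $S$-sorted terms via Theorem~\ref{THM:persistence}, then verify that every local peak is decreasing, with the persistence clause supplying the missing strict decrease in the duplicating \XVLL case. The paper's own proof says only that ``the additional constraints ensure that the resulting diagram is decreasing,'' so your explicit slot-filling analysis is a genuine addition.

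There is, however, one small gap in your \XVLL argument. After fixing the permutation $\pi_1,\dots,\pi_n$ of the parallel steps from $u$ for which $\beta \geqslant \gamma_1$ and $\beta > \gamma_i$ ($i \geqslant 2$), you propose to \emph{re-execute} these steps in the order $\pi_2,\dots,\pi_n,\pi_1$ and continue to call their labels $\gamma_2,\dots,\gamma_n,\gamma_1$. For a general weak LL-labeling the label of a step depends on its source term, so after reordering the labels change; in particular nothing guarantees that the new label of the $\pi_i$-step (now performed from a different source) is still strictly below $\beta$. Condition~\ref{DEF:ll1} of Definition~\ref{DEF:wll} only tells you that swapping two parallel steps does not \emph{increase} the label of the step being pushed later---the wrong direction for your purpose.

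The fix is simply not to reorder. Keep the permutation $\gamma_1,\gamma_2,\dots,\gamma_n$ as given by the persistence condition and fill the slots on the $u$-side as
\[
u \xr[\Veq\alpha]{=} u_1 \xr[\Vee\alpha\beta]{*} v,
\]
using $\gamma_1$ for the $\xr[\Veq\alpha]{=}$ slot (legitimate since $\alpha \geqslant \gamma_1$ by Definition~\ref{DEF:wll}(\ref{DEF:ll2}) applied to this very permutation) and $\gamma_2,\dots,\gamma_n$ for the trailing $\xr[\Vee\alpha\beta]{*}$ slot (each $\gamma_i < \beta$). The $\xr[\Vee\beta]{*}$ prefix is empty. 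This avoids any appeal to label-preservation under commutation and completes the proof cleanly.
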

\begin{proof}
The proof follows along the lines of the proof of Theorem~\ref{THM:pl}.
In the case of a duplicating variable-left-linear overlap, the
additional constraints ensure that the resulting diagram is
decreasing.
\qed
\end{proof}

\begin{example}
\label{EX:pll}
Suppose we extend the TRS from Example~\ref{EX:pl} with the rule
$\m{a} \to \m{b}$, using the same sorts:
\begin{xalignat*}{3}
1\colon~\m{f}(x) &\to \m{g}(\m{f}(x),\m{f}(x)) &
2\colon~\m{f}(\m{a}) &\to \m{f}(\m{b}) &
3\colon~\m{a} &\to \m{b}
\end{xalignat*}
Theorem~\ref{THM:pl} is no longer applicable, because
rule $3$ may be nested below rule $1$, which is duplicating.
However, by the preceding remark, any rule labeling
with $\ell^i_\RL(1) > \ell^i_\RL(3)$ will make the corresponding
variable overlaps decreasing.
\end{example}

\begin{remark}
Note that Theorem~\ref{THM:pll} does not subsume Theorem~\ref{THM:pl},
because the former demands a weak LL-labeling whereas the latter
requires only an L-labeling. If we were to restrict the L-labeling and
weak LL-labeling conditions to those variable overlaps that are consistent
with the sort declarations, then Theorem~\ref{THM:pll} would subsume
Theorem~\ref{THM:pl}. We chose not to do so because all our labelings
are weak LL-labelings.
\end{remark}

The following example shows that considering order-sorted
instead of many-sorted signatures is beneficial.

\begin{example}
Consider the duplicating TRS $\RR$ given by the rules
\begin{xalignat*}{3}
1\colon~\m{h}(\m{a},\m{a}) &\to \m{f}(\m{a}) &
2\colon~\m{f}(\m{a}) &\to \m{a} &
3\colon~\m{f}(x) &\to \m{h}(x,x) 
\end{xalignat*}
Furthermore, let $\x S = \{ 0, 1 \}$ with $1 > 0$ and
take the sort declarations
\begin{xalignat*}{3}
\m{h} &: 0 \times 0 \to 1 &
\m{f} &: 0 \to 1 &
\m{a} &: 0
\end{xalignat*}
Considering only $\x S$-sorted terms, no rule can be nested below the
duplicating rule $\m{f}(x) \to \m{h}(x,x)$.
Basically, there is one critical peak,
$\m h(\m a,\m a) \xl{3} \m f(\m a) \xr{2} \m a$, which is 
decreasingly joinable as $\m h(\m a,\m a) \xr{1} \m f(\m a) \xr{2} \m a$
by the rule labeling (using rule numbers as labels), and confluence follows
by Theorem~\ref{THM:pl}. Due to the rule $\m{f}(\m{a}) \to \m{a}$, any
many-sorted sort declaration for $\RR$ must assign the same
sorts to $\m{a}$ and the argument and result types of $\m{f}$.
Therefore, $\m{f}(x) \to \m{h}(x,x)$ may be nested below itself,
and Theorems~\ref{THM:pl} and~\ref{THM:pll} would fail in connection
with the rule labeling.
\end{example}

\section{Labelings for Parallel Rewriting}
\label{LAB:par:main}

In this section, rather than labeling individual rewrite steps, we
will label parallel rewrite steps instead. This is inspired by the
parallel moves lemma, which says that any peak $t \parfrom s \parto u$
of two non-overlapping parallel rewrite steps can be joined in a
diamond as $t \parto {\cdot} \parfrom u$, and diamonds are
comparatively easy to label decreasingly, as we saw in
Section~\ref{CR:l}.

The main problem is to label parallel steps such that variable
overlaps are decreasing. The multiset
of the single steps' labels does not work since 
$\{\alpha \} \not\geqslant_\mul 
\{\alpha, \dots, \alpha \}$.
Hence we use sets to label parallel steps which we denote 
by capital Greek
letters. Sets of labels are ordered by the Hoare preorder
of $({\geqslant},{>})$, which we denote by $({\geqslant_H},{>_H})$ 
and is defined by
\begin{alignat*}{1}
\Gamma >_H \Delta
&\quad\iff\quad \Gamma \neq \varnothing \wedge
\forall \beta \in \Delta\:\exists\,\alpha \in \Gamma\:
(\alpha > \beta)
\\
\Gamma \geqslant_H \Delta
&\quad\iff\quad
\forall \beta \in \Delta\:\exists\,\alpha \in \Gamma\:
(\alpha \geqslant \beta)
\end{alignat*}
For readability we drop the subscript $H$ when attaching labels
to rewrite steps as in $\xR[\Vee\Gamma]{}$.

\begin{example}
Let $\geqslant$ denote the natural order on $\Nat$.
Then $\{ 1 \} \geqslant_H \{ 0 , 1 \}$ and
$\{ 1 \} \geqslant_H \{ 1, 1, 1 \} = \{ 1 \}$ but
$\{ 5, 4 \} \not>_H \{ 5, 3 \}$.
\end{example}

The following lemma states obvious properties of Hoare preorders which
we implicitly use in the sequel.

\begin{lemma}
Let $(\geqslant_H,>_H)$ be a Hoare preorder.
\begin{enumerate}
\item
If $(\geqslant,>)$ is a monotone reduction pair then
$(\geqslant_H,>_H)$ is a monotone reduction pair.
\item
If $\Gamma \supseteq \Gamma'$ then $\Gamma \geqslant_H \Gamma'$.
\item
If $\Gamma >_H \Gamma'$ and $\Delta >_H \Delta'$ then
$\Gamma \cup \Delta >_H \Gamma' \cup \Delta'$.
\item
If $\Gamma \geqslant_H \Gamma'$ and $\Delta \geqslant_H \Delta'$ then
$\Gamma \cup \Delta \geqslant_H \Gamma' \cup \Delta'$.
\qed
\end{enumerate}
\end{lemma}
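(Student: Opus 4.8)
The plan is to prove each item by tracing the existential witnesses in the definitions of $\geqslant_H$ and $>_H$ back to the corresponding property of $(\geqslant,>)$. Throughout I would use that the label sets attached to parallel steps are finite; this is needed in item~1 (and is implicit in the intended application) --- without finiteness $>_H$ is not even irreflexive, as witnessed by $\Gamma >_H \Gamma$ for $\Gamma = \{\,1,2,3,\dots\,\}$ over $\Nat$.

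For item~1 I would check the four ingredients of a monotone reduction pair in turn. Reflexivity and transitivity of $\geqslant_H$, and transitivity of $>_H$, are routine witness chases (in the strict case nonemptiness of the left argument is simply carried along). Compatibility ${\geqslant_H}\cdot{>_H}\cdot{\geqslant_H}\subseteq{>_H}$ is obtained by composing three witness selections and appealing once to compatibility of $(\geqslant,>)$: given $\Gamma \geqslant_H \Delta >_H \Delta' \geqslant_H \Gamma'$, nonemptiness of $\Gamma$ follows from that of $\Delta$, and for each $\gamma' \in \Gamma'$ one picks $\delta' \geqslant \gamma'$ in $\Delta'$, then $\delta > \delta'$ in $\Delta$, then $\gamma \geqslant \delta$ in $\Gamma$, and concludes $\gamma > \gamma'$. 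Closure under contexts and substitutions is immediate because these act elementwise on a set of labels, so each witness is preserved. \textbf{The only point that is not pure bookkeeping is well-foundedness of $>_H$.} The clean route is to note that $\Gamma >_H \Delta$ with $\Gamma,\Delta$ finite implies $\Gamma >_\mul \Delta$ for the multiset extension of $>$ --- take all of $\Gamma$ as the removed sub-multiset and $\Delta$ as the added one, every element of which is dominated by some element of $\Gamma$, together with $\Gamma \neq \varnothing$; since the multiset extension of a well-founded order is well-founded, $>_H$ is well-founded on finite sets, and irreflexivity of $>_H$ then comes for free. (Alternatively, from a hypothetical infinite $>_H$-descending chain one can build a finitely branching forest whose level $i$ is the finite set $\Gamma_i$ and where each element's parent is a chosen witness in $\Gamma_{i-1}$, and extract by König's lemma an infinite $>$-descending chain, contradicting well-foundedness of $>$.)

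Items~2--4 are one-line witness chases that I would dispatch directly. For item~2, if $\Gamma \supseteq \Gamma'$ then each $\beta \in \Gamma' \subseteq \Gamma$ serves as its own witness by reflexivity of $\geqslant$, so $\Gamma \geqslant_H \Gamma'$. For items~3 and~4, given $\Gamma \mathbin{R_H} \Gamma'$ and $\Delta \mathbin{R_H} \Delta'$ with $R$ either $\geqslant$ or $>$, any $\beta \in \Gamma' \cup \Delta'$ receives the witness supplied for it (in $\Gamma$ if $\beta \in \Gamma'$, in $\Delta$ if $\beta \in \Delta'$), which lies in $\Gamma \cup \Delta$; in the strict case the remaining nonemptiness requirement of $>_H$ holds since $\Gamma \cup \Delta \supseteq \Gamma \neq \varnothing$. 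I expect no difficulty beyond the well-foundedness argument flagged above.
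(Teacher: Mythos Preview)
The paper gives no proof of this lemma at all: it is introduced as ``obvious properties of Hoare preorders'' and closed immediately with \qed. Your proposal is therefore strictly more detailed than the paper's treatment, and the arguments you sketch are correct.

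Your observation about finiteness is a genuine point the paper glosses over: as you note, without restricting to finite label sets $>_H$ is not even irreflexive, so item~1 fails as literally stated. In the paper's setting the sets arise as $\plab(t \xR{P} t')$ for finite $P$, so the restriction is implicit, but you are right to flag it. Your two routes to well-foundedness (embedding $>_H$ into the multiset extension, or a K\"onig-style extraction of an infinite $>$-chain) are both standard and valid; the multiset embedding is the cleaner of the two and exactly the argument one would expect. The remaining items, as you say, are one-line witness chases and need no further comment.
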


As we have seen in Section~\ref{CR:ll}, constructing LL-labelings
is quite a bit harder than constructing L-labelings, because of the
duplicated steps in the \XVLL case
(Figure~\ref{FIG:lpeaks}\subref{FIG:vll}). 
Here, we use weak LL-labelings for labeling single and parallel
rewrite steps.
Throughout this section we assume a given left-linear
TRS $\RR$, and a weak LL-labeling $\lab$ with corresponding labeling
function for parallel steps $\plab$, as introduced in the following
definition.

\begin{definition}
\label{DEF:plab}
We lift a weak LL-labeling $\ell$ to parallel steps $t \xR{P} t'$
as follows. For each $\pi \in P$, we have
a rewrite step $t \xr{\pi} t^\pi$. We label $t \xR{P} t'$ by
$\plab(t \xR{P} t') = \{ \lab(t \xr{\pi} t^\pi) \mid \pi \in P \}$.
\end{definition}

So a parallel rewrite step is labeled by the set of
the labels of the single steps making up the parallel step. We
indicate labels along with the step, writing $t \xR[\Gamma]{P} t'$.

The next example shows that the labels change when decomposing a
parallel step into 
a sequence of single steps, i.e., the label of the parallel
step may be different from the union of labels of the single steps.
However, the proof of Lemma~\ref{LEM:pmanip} reveals that for weak
LL-labelings the labels never increase when 
sequencing a parallel step.

\begin{example}
Consider the rule $\m{a} \to \m{b}$ and the extension of the
source labeling $\lab(s \to t) = s$ to parallel steps.
Then $\m{f}(\m{a},\m{a}) \xR[\{ \m{f}(\m{a},\m{a}) \}]{}
\m{f}(\m{b},\m{b})$ but $\m{f}(\m{a},\m{a})
\xR[\{ \m{f}(\m{a},\m{a}) \}]{} \m{f}(\m{b},\m{a})
\xR[\{ \m{f}(\m{b},\m{a}) \}]{} \m{f}(\m{b},\m{b})$.
Clearly $\{ \m{f}(\m{a},\m{a}) \} \neq
\{ \m{f}(\m{a},\m{a}), \m{f}(\m{b},\m{a}) \}$.
This effect is intrinsic to labelings that take the context of the 
rewrite step into account. On the other hand, the rule labeling
gives
$\m{f}(\m{a},\m{a}) \xR[\{ 1 \}]{} \m{f}(\m{b},\m{b})$ and
$\m{f}(\m{a},\m{a}) \xR[\{ 1 \}]{} \m{f}(\m{b},\m{a})
\xR[\{ 1 \}]{} \m{f}(\m{b},\m{b})$ with $\{ 1 \} = \{ 1, 1 \}$,
because the labels are independent of the context.
\end{example}

\begin{figure}
\centering
\begin{tikzpicture}[]
\node (s) at (0,0) {$s$};
\node (sp) at (-2,-0.5) {$s^\pi$};
\node (sP) at (-1,-2.5) {$s^{P'}$};
\node (t1) at (-3,-3) {$t_1$};
\node (t2) at (4,-3) {$t_2$};
\node (t2p) at (2,-3.5) {$t_2^\pi$};
\node (t2P) at (3,-5.5) {$t_2^{P'}$};
\node (u) at (1,-6) {$u$};
\draw[To] (s) -- node[sloped,above]{$\scriptstyle Q$}
node[sloped,below]{$\scriptstyle \Delta$} (t2);
\draw[To] (s) -- node[sloped,above]{$\scriptstyle P$}
 node[sloped,below]{$\scriptstyle \Gamma$} (t1);
\draw[To] (s) -- node[sloped,above]{$\scriptstyle P'$}
 node[sloped,below]{$\scriptstyle \Veq\Gamma$} (sP);
\draw[to] (s) -- node[sloped,above]{$\scriptstyle \pi$}
 node[sloped,below]{$\scriptstyle \Veq\Gamma$} (sp);
\draw[to,densely dashed] (sP) -- (t1);
\draw[To] (sp) -- node[sloped,above]{$\scriptstyle P'$}
 node[sloped,below]{$\scriptstyle \Veq\Gamma$} (t1);
\draw[To] (t1) -- node[sloped,above]{$\scriptstyle Q$}
 node[sloped,below]{$\scriptstyle \Veq\Delta$} (u);
\draw[To] (t2) -- node[sloped,above]{$\scriptstyle P$}
 node[sloped,below]{$\scriptstyle \Veq\Gamma$} (u);
\draw[to] (t2) -- node[sloped,above]{$\scriptstyle \pi$}
 node[sloped,below]{$\scriptstyle \Veq\Gamma$} (t2p);
\draw[To] (t2) -- node[sloped,above]{$\scriptstyle P'$}
 node[sloped,below]{$\scriptstyle \Veq\Gamma$} (t2P);
\draw[To,densely dashed] (t2p) -- (u);
\draw[to,densely dashed] (t2P) -- (u);
\draw[To] (sp) -- node[sloped,above]{$\scriptstyle Q$}
 node[sloped,below]{$\scriptstyle \Veq\Delta$} (t2p);
\draw[To,densely dashed] (sP) -- (t2P);
\end{tikzpicture}
\caption{Weak LL-labeling applied to parallel steps.}
\label{FIG:pmanip1}
\end{figure}

The following lemma is the key to show that even for parallel rewriting
overlaps due to Figure~\ref{FIG:lpeaks}\subref{FIG:p} (parallel) and 
Figure~\ref{FIG:lpeaks}\subref{FIG:vll} \XVLL
are decreasing.

\begin{lemma}
\label{LEM:pmanip}
\mbox{}
\begin{enumerate}
\item 
\label{LEM:pmanip1}
Let $t_1 \xL[\Gamma]{P} s \xR[\Delta]{Q} t_2$ with $P \parallel Q$.
Then there is a term $u$ such that $s \xR[\Gamma \cup \Delta]{P \cup Q} u$
and $t_1 \xR[\Delta']{Q} u \xL[\Gamma']{P} t_2$, where
$\Gamma \geqslant_H \Gamma'$ and $\Delta \geqslant_H \Delta'$.
\item 
\label{LEM:pmanip2}
Let $s \xR{} s'$ and $\sigma(x) \xR{} \sigma'(x)$ for all $x \in \x{V}$,
so that there are parallel rewrite steps
$s\sigma' \xL[\Gamma]{P} s\sigma \xR[\Delta]{Q} s'\sigma$.
Then $s\sigma' \xR[\Delta']{Q} s'\sigma' \xL[\Gamma']{} s'\sigma$
and $\Gamma \geqslant_H \Gamma'$, ${\Delta \geqslant_H \Delta'}$.
Furthermore, if $\sigma(x) = \sigma'(x)$ for all $x \in \Var(s'|_Q)$
then $s\sigma \xR[\Sigma]{} s'\sigma'$ for some
$\Sigma \subseteq \Gamma \cup \Delta$.
\end{enumerate}
\end{lemma}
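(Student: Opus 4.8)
The plan is to handle the two parts separately, using the weak LL-labeling conditions (Definition~\ref{DEF:wll}(\ref{DEF:ll1}) and (\ref{DEF:ll2})) together with monotonicity and stability of $\lab$. The picture in Figure~\ref{FIG:pmanip1} is the guide: a parallel step is decomposed into a single step $\pi$ followed by the remaining parallel step $P'$, and one argues inductively on $|P|$.

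For part~(\ref{LEM:pmanip1}), first I would treat the base case where $P$ or $Q$ is empty — there nothing happens and we take $u = t_1$ or $u = t_2$. For the inductive step, pick $\pi \in P$ and write $P = \{\pi\} \cup P'$. Since $P \parallel Q$, the single step $s \xr{\pi} s^\pi$ and the parallel step $s \xR{Q} t_2$ take place at parallel positions, so each step of $Q$ occurs inside a subterm that is untouched by $\pi$ and vice versa; hence $s^\pi \xR{Q} t_2^\pi$ and $t_2 \xr{\pi} t_2^\pi$ and the labels of the $Q$-steps, resp.\ of $\pi$, are unchanged because the surrounding contexts for those redexes are literally the same (this is exactly case~\XP applied step-by-step, where $\alpha = \gamma$ and $\beta = \delta$ in Figure~\ref{FIG:lpeaks}\subref{FIG:p}). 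Then $P' \parallel Q$ still, so the induction hypothesis applied to $t_1 \xL[\Gamma']{P'} s^\pi \xR[\Delta]{Q} t_2^\pi$ gives the term $u$, the step $s^\pi \xR{P'\cup Q} u$, and the bound for the $P'$- and $Q$-labels; prepending $\pi$ reconstitutes $s \xR{P\cup Q} u$. The set-labels behave well under this decomposition precisely because taking unions of sets is idempotent, so re-splitting $P$ into $\{\pi\}$ and $P'$ and recombining does not enlarge $\Gamma$; the Hoare preorder then only decreases labels along the closing steps. The claim $s \xR[\Gamma\cup\Delta]{P\cup Q} u$ follows because each redex in $P \cup Q$ still matches $s$ at its original position with its original label.

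For part~(\ref{LEM:pmanip2}), the core is the variable-overlap case~\XVLL. Here $s \xR{} s'$ is a single step, say $s = C[l\tau]_p \xr{} C[r\tau]_p = s'$ for some redex pattern, and the parallel steps $P$, $Q$ come from $\sigma$-vs-$\sigma'$ differences scattered over the variable positions of $s$. One separates the positions: those inside the contracted redex of $s \to s'$ versus those in the context or in parts of $s$ not affected. For positions strictly below variable positions of the left-hand side $l$, Definition~\ref{DEF:wll}(\ref{DEF:ll2}) gives $\beta \geqslant \delta$ and, crucially, $\alpha \geqslant \overline{\gamma}$ for every permutation, which translated to set-labels says exactly $\{\lab(s\sigma \to s'\sigma)\} \geqslant_H \{\text{labels of the copies created below the redex}\}$ — this is where using \emph{sets} rather than multisets pays off, since duplicated labels collapse. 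Combining this with case~\XP for the positions outside the redex (handled as in part~(\ref{LEM:pmanip1})) and with monotonicity/stability of $\lab$ to propagate comparisons through the surrounding context $C$, one closes the diagram $s\sigma' \xR[\Delta']{Q} s'\sigma' \xL[\Gamma']{} s'\sigma$ with $\Gamma \geqslant_H \Gamma'$ and $\Delta \geqslant_H \Delta'$. For the ``furthermore'' part, if $\sigma$ and $\sigma'$ already agree on $\Var(s'|_Q)$ then the residual of $s \to s'$ after performing $Q$ needs no further instantiation change, so a single step from $s\sigma$ lands directly on $s'\sigma'$; its label is the label of the contracted redex in some context, which — again by \XVLL/\XP and monotonicity — is dominated (as a singleton set) by $\Gamma \cup \Delta$, giving $\Sigma \subseteq \Gamma \cup \Delta$ (here using $\{\text{one label}\} \subseteq$-into rather than $\geqslant_H$; one checks the label genuinely appears among the $\Gamma \cup \Delta$ labels).

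The main obstacle I anticipate is bookkeeping in part~(\ref{LEM:pmanip2}): carefully identifying which occurrences of variable-redexes in $s\sigma$ are duplicated, halved, or untouched by the step $s \to s'$, and matching them up correctly across the two ways of closing the diagram so that the Hoare-preorder inequalities come out right. In particular, when $x = l|_{q'}$ occurs several times in $r$, the single step $s\sigma \to s'\sigma$ spawns one $P$-step per occurrence, and one must invoke condition~(\ref{DEF:ll2}) for \emph{the right permutation} (in fact for all of them, which is why the definition quantifies over permutations) to get $\Gamma \geqslant_H \Gamma'$. The parallel-moves intuition makes the shape of the diagram clear, but verifying the labels requires threading stability and monotonicity of $\lab$ through each nested context, and keeping track of descendants of positions exactly as in the variable-overlap discussion following~\eqref{EQ:peak}.
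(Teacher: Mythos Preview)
Your overall plan has the right shape, but part~(\ref{LEM:pmanip1}) contains a concrete error that makes the argument fail as written. You assert that after performing $\pi$, ``the labels of the $Q$-steps, resp.\ of $\pi$, are unchanged because the surrounding contexts for those redexes are literally the same'', citing ``$\alpha = \gamma$ and $\beta = \delta$'' for case~\XP. This is not what a weak LL-labeling provides: Definition~\ref{DEF:wll}(\ref{DEF:ll1}) only gives $\alpha \geqslant \gamma$ and $\beta \geqslant \delta$, not equality. For the source labeling $\ell_\SN(s \to t) = s$ of Lemma~\ref{LEM:lsn}, the label of a step at position $q$ changes whenever \emph{any} part of the source changes, including at a parallel position $p$; the context above $q$ is \emph{not} literally the same after a step at $p$, since it contains position~$p$. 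The example right after Definition~\ref{DEF:plab} illustrates exactly this phenomenon. Consequently, when you write ``the induction hypothesis applied to $t_1 \xL[\Gamma']{P'} s^\pi \xR[\Delta]{Q} t_2^\pi$'', the label $\Delta$ is unjustified: the step $s^\pi \xR{Q} t_2^\pi$ carries some $\Delta_0$ with $\Delta \geqslant_H \Delta_0$, and establishing that inequality already requires invoking condition~(\ref{DEF:ll1}) for each pair $(\pi,\pi')$ with $\pi' \in Q$ --- effectively a second induction on $|Q|$ that your scheme on $|P|$ alone does not supply. The paper handles this cleanly by inducting on $|P|+|Q|$ and applying the induction hypothesis to several auxiliary peaks rooted at $s$ (where the constituent labels are literally subsets of $\Gamma$ or $\Delta$ by Definition~\ref{DEF:plab}), then transporting the resulting $\Veq\Gamma$ and $\Veq\Delta$ bounds along the diagram in Figure~\ref{FIG:pmanip1}.

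For part~(\ref{LEM:pmanip2}) your setup is garbled: $s \xR{} s'$ is a parallel step, not a single one, and $Q$ records the redex patterns of $s \to s'$, not ``$\sigma$-vs-$\sigma'$ differences''. The paper first reduces to $|Q|=1$ by an outer induction on~$|Q|$ (reusing the diagram-chasing of part~(\ref{LEM:pmanip1})), then --- using left-linearity of~$\RR$ to assume $s$ linear, so that each step in $P$ modifies a single variable of~$\sigma$ --- performs an inner induction on $|P|$; only in the base case $|P|=|Q|=1$ does one invoke condition~(\ref{DEF:ll1}) or~(\ref{DEF:ll2}). Your sketch of the ``furthermore'' clause is on the right track, but the inclusion $\Sigma \subseteq \Gamma \cup \Delta$ comes from observing that the surviving substitution steps form a set $P' \subseteq P$ parallel to $Q$, so $\plab(s\sigma \xR{P'} s'') \subseteq \Gamma$ directly from Definition~\ref{DEF:plab}, and then merging with the $Q$-step via part~(\ref{LEM:pmanip1}).
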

\begin{proof}
{\itshape\ref{LEM:pmanip1}}.
First note that since $P \parallel Q$, a term $u$ with $s \xR{P\cup Q} u$
exists. We have
\begin{align*}
\plab(s \xR{P \cup Q} u)
&= \{ \lab(s \xr{\pi} s^\pi) \mid \pi \in P \cup Q\} \\
&= \{ \lab(s \xr{\pi} s^\pi) \mid \pi \in P \}
\cup \{ \lab(s \xr{\pi} s^\pi) \mid \pi \in Q \} \\
&= \plab(s \xR{P} t_1) \cup \plab(s \xR{Q} t_2) = \Gamma \cup \Delta
\end{align*}
by definition.
To establish $t_1 \xR[\Delta']{Q} u \xL[\Gamma']{P} t_2$, we
use induction on $|P|+|Q|$. We consider several base
cases.  If $|P| = 0$ or $|Q| = 0$ then the result follows by definition of 
parallel rewriting. If $|P| = |Q| = 1$ the result follows
from the fact that $\ell$ is a weak LL-labeling,
Definition~\ref{DEF:ll}(\ref{DEF:ll1})
(Figure~\ref{FIG:lpeaks}\subref{FIG:p}).
For the induction step, assume without loss of generality that $|P| > 1$
and let $P = \{ \pi \} \uplus P'$.
The proof is illustrated in Figure~\ref{FIG:pmanip1}.
The parallel $P$-step can be decomposed into a $\pi$-step and a $P'$-step.
Since $\{ \pi \}, P' \subseteq P$, the labels are less than or equal
to $\Gamma$.
Then we apply the induction hypothesis to the peaks
\begin{enumerate}
\renewcommand{\theenumi}{\roman{enumi}}%
\item
$s^{P'} \xL[\Veq\Gamma]{P'} s \xR[\Veq\Gamma]{\{ \pi \}} s^\pi$
yielding $s^\pi \xR[\Veq\Gamma]{P'} t_1$,
\item
$s^\pi \xL[\Veq\Gamma]{\{ \pi \}} s \xR[\Delta]{Q} t_2$
yielding $t_2 \xR[\Veq\Gamma]{\{ \pi \}} t_2^\pi$
and $s^\pi \xR[\Veq\Delta]{Q} t_2^\pi$,
\item
$s^{P'} \xL[\Veq\Gamma]{P'} s \xR[\Delta]{Q} t_2$
yielding $t_2 \xR[\Veq\Gamma]{P'} t_2^{P'}$,
which we merge with
$t_2 \xR[\Veq\Gamma]{\{ \pi \}} t_2^\pi$
to obtain $t_2 \xR[\Veq\Gamma]{P} u$, noting that the union of
two sets from $\Veq\Gamma$ is again in $\Veq\Gamma$, and finally
\item
$t_1 \xL[\Veq\Gamma]{P'} s^\pi \xR[\Veq\Delta]{Q} t_2^\pi$
yielding $t_1 \xR[\Veq\Delta]{Q} u$.
\end{enumerate}

\begin{figure}[t]
\subfloat[\label{FIG:pmanip2:r} split base step]{
\begin{tikzpicture}[scale=0.75]
\node (s) at (0,0) {$s\sigma$};
\node (sp) at (2,-0.5) {$s^\pi\sigma$};
\node (sP) at (1,-2.5) {$s^{Q'}\!\sigma$};
\node (t1) at (3,-3) {$s'\sigma$};
\node (t2) at (-4,-3) {$s\sigma'$};
\node (t2p) at (-2,-3.5) {$s^\pi\sigma'$};
\node (t2P) at (-3,-5.5) {$s^{Q'}\!\sigma'$};
\node (u) at (-1,-6) {$s'\sigma'$};
\draw[To] (s) -- node[sloped,above]{$\scriptstyle P$}
node[sloped,below]{$\scriptstyle \Gamma$} (t2);
\draw[To] (s) -- node[sloped,above]{$\scriptstyle Q'$}
 node[sloped,below]{$\scriptstyle \Veq\Delta$} (sP);
\draw[to] (s) -- node[sloped,above]{$\scriptstyle \pi$}
 node[sloped,below]{$\scriptstyle \Veq\Delta$} (sp);
\draw[to,densely dashed] (sP) -- (t1);
\draw[To] (sp) -- node[sloped,above]{$\scriptstyle Q'$}
 node[sloped,below]{$\scriptstyle \Veq\Delta$} (t1);
\draw[To] (t1) -- node[sloped,below]{$\scriptstyle \Veq\Gamma$} (u);
\draw[to] (t2) -- node[sloped,above]{$\scriptstyle \pi$}
 node[sloped,below]{$\scriptstyle \Veq\Delta$} (t2p);
\draw[To] (t2) -- node[sloped,above]{$\scriptstyle Q'$}
 node[sloped,below]{$\scriptstyle \Veq\Delta$} (t2P);
\draw[To,densely dashed] (t2p) -- (u);
\draw[to,densely dashed] (t2P) -- (u);
\draw[To] (sp) -- node[sloped,below]{$\scriptstyle \Veq\Gamma$} (t2p);
\draw[To,densely dashed] (sP) -- (t2P);
\end{tikzpicture}
}
\hfill
\subfloat[\label{FIG:pmanip2:l} split substitution]{
\begin{tikzpicture}[scale=0.75]
\node (s) at (0,0) {$s\sigma$};
\node (sp) at (-2,-0.5) {$s\sigma^\pi$};
\node (sP) at (-1,-2.5) {$s\sigma^{P'}$};
\node (t1) at (-3,-3) {$s\sigma'$};
\node (t2) at (4,-3) {$s'\sigma$};
\node (t2p) at (2,-3.5) {$s'\sigma^\pi$};
\node (t2P) at (3,-5.5) {$s'\sigma^{P'}$};
\node (u) at (1,-6) {$s'\sigma'$};
\draw[to] (s) -- node[sloped,above]{$\scriptstyle Q$}
node[sloped,below]{$\scriptstyle \Delta$} (t2);
\draw[To] (s) -- node[sloped,above]{$\scriptstyle P'$}
 node[sloped,below]{$\scriptstyle \Veq\Gamma$} (sP);
\draw[to] (s) -- node[sloped,above]{$\scriptstyle \pi$}
 node[sloped,below]{$\scriptstyle \Veq\Gamma$} (sp);
\draw[to,densely dashed] (sP) -- (t1);
\draw[To] (sp) -- node[sloped,above]{$\scriptstyle P'$}
 node[sloped,below]{$\scriptstyle \Veq\Gamma$} (t1);
\draw[to] (t1) -- node[sloped,above]{$\scriptstyle Q$}
 node[sloped,below]{$\scriptstyle \Veq\Delta$} (u);
\draw[To] (t2) -- node[sloped,below]{$\scriptstyle \Veq\Gamma$} (t2p);
\draw[To] (t2) -- node[sloped,below]{$\scriptstyle \Veq\Gamma$} (t2P);
\draw[To,densely dashed] (t2p) -- (u);
\draw[to,densely dashed] (t2P) -- (u);
\draw[to] (sp) -- node[sloped,above]{$\scriptstyle Q$}
 node[sloped,below]{$\scriptstyle \Veq\Delta$} (t2p);
\draw[to,densely dashed] (sP) -- (t2P);
\end{tikzpicture}
}
\caption{Weak LL-labeling applied to nested parallel steps.}
\label{FIG:pmanip2}
\end{figure}

{\itshape\ref{LEM:pmanip2}}.
The existence of parallel rewrite steps $s\sigma' \xR{} s'\sigma'$
and $s'\sigma \xR{} s'\sigma'$ follows easily from the definition of
parallel steps.
We establish $\Gamma \geqslant_H \Gamma'$ and $\Delta \geqslant_H \Delta'$
by induction on $|Q|$. The reasoning for the induction step ($|Q| > 1$) is
very similar to the induction step in item~\ref{LEM:pmanip1},
cf.~Figure~\ref{FIG:pmanip2}\subref{FIG:pmanip2:r}:
Taking $Q = \{ \pi \} \uplus Q'$, we split
$s\sigma \xR[\Delta]{Q} s'\sigma$ into
$s\sigma \xR[\Veq\Delta]{\{\pi\}} s^\pi\sigma$ and
$s\sigma \xR[\Veq\Delta]{Q'} s^{Q'}\!\sigma$.
We apply the induction hypothesis to the peaks
\begin{enumerate}
\renewcommand{\theenumi}{\roman{enumi}}%
\item
$s\sigma' \xL[\Gamma]{P} s\sigma \xR[\Veq\Delta]{\{\pi\}} s^\pi\sigma$
yielding $s\sigma' \xR[\Veq\Delta]{\{\pi\}} s^\pi\sigma'$ and
$s^\pi\sigma \xR[\Veq\Gamma]{} s^\pi\sigma'$,
\item
$s\sigma' \xL[\Gamma]{P} s\sigma \xR[\Veq\Delta]{Q'} s^{Q'}\sigma$
yielding $s\sigma' \xR[\Veq\Delta]{Q'} s^{Q'}\!\sigma'$, which
can be merged with $s\sigma' \xR[\Veq\Delta]{\{\pi\}} s^\pi\sigma'$
to obtain $s\sigma' \xR[\Veq\Delta]{Q} s'\sigma'$, and finally
\item
$s^\pi\sigma' \xL[\Veq\Gamma]{} s^\pi\sigma \xR[\Veq\Delta]{Q'} s'\sigma$
yielding $s'\sigma \xR[\Veq\Gamma]{} s'\sigma'$,
where $s^\pi\sigma \xR[\Veq\Delta]{Q'} s'\sigma$
is obtained from part~\ref{LEM:pmanip1} of this lemma
applied to $s^{Q'}\!\sigma \xL[\Veq\Delta]{Q'} s\sigma
\xR[\Veq\Delta]{\{\pi\}} s^\pi\sigma$.
\end{enumerate}
This concludes the induction step.
If $|Q| = 0$, there is nothing to show, so only the base case $|Q| = 1$
remains.
Note that because $\RR$ is left-linear, we may assume without
loss of generality that $s$ is linear. Therefore, every rewrite step
of $s\sigma \xR{P} s\sigma'$ can be performed by modifying $\sigma$.
For $P' \subseteq P$, we write $\sigma^{P'}$ for the substitution
$\tau$ that satisfies $s\sigma \xR{P'} s\tau$,
and proceed by induction on $|P|$.
For the induction step ($|P| > 1$), the argument is again almost the
same as before, cf.\ Figure~\ref{FIG:pmanip2}\subref{FIG:pmanip2:l}.
Let $P = \{ \pi \} \uplus P'$.
We split $s\sigma \xR[\Gamma]{P} s\sigma'$ into
$s\sigma \xR[\Veq\Gamma]{\{\pi\}} s\sigma^\pi$ and
$s\sigma \xR[\Veq\Gamma]{P'} s\sigma^{P'}$.
Next we apply the induction hypothesis to the peaks
\begin{enumerate}
\renewcommand{\theenumi}{\roman{enumi}}%
\item
$s\sigma^\pi \xL[\Veq\Gamma]{\{\pi\}} s\sigma \xr[\Delta]{Q} s'\sigma$
yielding $s\sigma^\pi \xr[\Veq\Delta]{Q} s'\sigma^\pi$ and
$s'\sigma \xR[\Veq\Gamma]{} s'\sigma^\pi$,
\item
$s\sigma^{P'} \xL[\Veq\Gamma]{P'} s\sigma \xr[\Delta]{Q} s'\sigma$
yielding $s'\sigma \xR[\Veq\Gamma]{} s'\sigma^{P'}$, which
can be merged with $s'\sigma \xR[\Veq\Gamma]{} s'\sigma^\pi$
to obtain $s'\sigma \xR[\Veq\Gamma]{} s'\sigma'$, and finally
\item
$s\sigma' \xL[\Veq\Gamma]{P'} s\sigma^\pi \xr[\Veq\Delta]{Q} s'\sigma^\pi$
yielding $s\sigma' \xr[\Veq\Delta]{Q} s'\sigma'$,
where $s\sigma^\pi \xR[\Veq\Gamma]{P'} s\sigma'$
is obtained from part~\ref{LEM:pmanip1} of this lemma
applied to $s\sigma^\pi \xL[\Veq\Gamma]{\{\pi\}} s\sigma
\xR[\Veq\Gamma]{P'} s\sigma^{P'}$.
\end{enumerate}
This concludes the induction step.
If $|P| = 0$ then there is nothing to show. Finally, if $|P| = |Q| = 1$,
then we are left with a parallel or variable overlap, and we conclude
by Definition~\ref{DEF:wll}(\ref{DEF:ll1}) or \ref{DEF:wll}(\ref{DEF:ll2}),
respectively. This concludes the proof that $\Gamma \geqslant_H \Gamma'$
and $\Delta \geqslant_H \Delta'$.
Now if $\sigma(x) = \sigma'(x)$ for all $x \in \Var(s'|_Q)$,
then $s'\sigma \xR{P'} s'\sigma'$ satisfies $P' \parallel Q$.
Performing the same rewrite steps on $s\sigma$, we obtain a
parallel rewrite step $s\sigma \xR{P'} s''$ with
$P' \subseteq P$ and therefore
${\Gamma'' = \plab(s\sigma \xR{P'} s'') \subseteq
\plab(s\sigma \xR{P} s\sigma') = \Gamma}$.
Finally, using the first part of this lemma, we can combine the
two parallel steps from $s\sigma$ into a single one,
$s\sigma \xR[\Gamma'' \cup \Delta]{P' \cup Q} s'\sigma'$
with
$\Sigma = \Gamma'' \cup \Delta \subseteq \Gamma \cup \Delta$
as claimed.
\qed
\end{proof}

Only Definition~\ref{DEF:ll}(\ref{DEF:ll1}) was used in the proof
of Lemma~\ref{LEM:pmanip}(\ref{LEM:pmanip1}). This fact
can be exploited for an alternative characterization of weak LL-labelings.

\begin{corollary}
Let $\ell$ be a labeling. Then $\ell$ is a weak LL-labeling if and
only if
\begin{enumerate}
\item
in Figure~\ref{FIG:lpeaks}\subref{FIG:p}, $\alpha \geqslant \gamma$
and $\beta \geqslant \delta$, and
\item
in Figure~\ref{FIG:lpeaks}\subref{FIG:vll},
$\beta \geqslant \delta$ and $\{ \alpha \} \geqslant_H \plab(u \xR{} v)$.
\end{enumerate}
\end{corollary}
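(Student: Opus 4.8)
The plan is to isolate the single place where the defining conditions of a weak LL-labeling (Definition~\ref{DEF:wll}(\ref{DEF:ll1}) and~(\ref{DEF:ll2})) and the two conditions of the corollary actually differ. Condition~(1) of the corollary is literally Definition~\ref{DEF:wll}(\ref{DEF:ll1}), and the requirement $\beta \geqslant \delta$ for Figure~\ref{FIG:lpeaks}\subref{FIG:vll} appears on both sides (and does not depend on how $u \xR{} v$ is performed, since $\beta$ and $\delta$ label the two single steps on the right of that diagram). So, assuming Definition~\ref{DEF:wll}(\ref{DEF:ll1}) throughout, everything reduces to showing, for the parallel step $u \xR{P} v$ of Figure~\ref{FIG:lpeaks}\subref{FIG:vll} with $P$ a set of redex patterns at pairwise parallel positions, that
\[
\bigl(\alpha \geqslant \gamma_i \text{ for all } i \text{ and all permutations of the steps of } u \xR{} v\bigr)
\quad\Longleftrightarrow\quad
\{\alpha\} \geqslant_H \plab(u \xR{} v),
\]
where $\plab(u \xR{} v) = \{\lab(u \xr{\pi} u^\pi) \mid \pi \in P\}$ as in Definition~\ref{DEF:plab}.

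First I would dispatch the easy direction, from left to right. Every element of $\plab(u \xR{} v)$ is $\lab(u \xr{\pi} u^\pi)$ for some $\pi \in P$, and this is precisely the label $\gamma_1$ of the permutation that contracts $\pi$ first; hence the left-hand side, read at $i = 1$, already gives $\alpha \geqslant \lab(u \xr{\pi} u^\pi)$ for every $\pi \in P$, i.e.\ $\{\alpha\} \geqslant_H \plab(u \xR{} v)$.

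For the converse I would fix an arbitrary permutation $u = u_0 \xr{\pi_1} u_1 \xr{\pi_2} \dots \xr{\pi_n} u_n = v$, so that $u_{j-1} = u^{\pi_1\cdots\pi_{j-1}}$ and $\gamma_j = \lab(u_{j-1} \xr{\pi_j} u_j)$, and bound each $\gamma_j$ by $\lab(u \xr{\pi_j} u^{\pi_j})$, which lies in $\plab(u \xR{} v)$ and is therefore $\leqslant \alpha$. To this end, note that since the positions of $P$ are pairwise parallel the redex pattern $\pi_j$ is left intact by any of the preceding steps, so contracting $\pi_j$ and $\pi_k$ (for $k < j$) at $u^{\pi_1\cdots\pi_{k-1}}$ is a genuine instance of the parallel peak of Figure~\ref{FIG:lpeaks}\subref{FIG:p}. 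Peeling off $\pi_1, \dots, \pi_{j-1}$ one at a time and invoking the inequality $\alpha \geqslant \gamma$ of Definition~\ref{DEF:wll}(\ref{DEF:ll1}) at each step, transitivity of $\geqslant$ yields the chain
\[
\lab(u \xr{\pi_j} u^{\pi_j}) \;\geqslant\; \lab(u^{\pi_1} \xr{\pi_j} u^{\pi_1\pi_j}) \;\geqslant\; \dots \;\geqslant\; \lab(u^{\pi_1\cdots\pi_{j-1}} \xr{\pi_j} u_j) = \gamma_j ,
\]
whence $\alpha \geqslant \gamma_j$; the cases $n = 0$ and $j = 1$ require no peeling.

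The hard part will be nothing deep but the bookkeeping in the converse: one must check carefully that each intermediate peak really has the exact shape of Figure~\ref{FIG:lpeaks}\subref{FIG:p}, which hinges on the elementary fact that a rewrite step at a position parallel to $p_{\pi_j}$ leaves $\pi_j$ (its position and its rule) unchanged, so that Definition~\ref{DEF:wll}(\ref{DEF:ll1}) is applicable. This is exactly the observation already exploited in the proof of Lemma~\ref{LEM:pmanip}(\ref{LEM:pmanip1}), and it confirms the remark preceding the corollary that only Definition~\ref{DEF:wll}(\ref{DEF:ll1}) is involved.
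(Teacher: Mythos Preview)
Your proposal is correct and follows essentially the same route as the paper. The forward direction is identical (choose a permutation with $\pi$ first to read off $\alpha \geqslant \gamma_1$); for the converse the paper invokes Lemma~\ref{LEM:pmanip}(\ref{LEM:pmanip1}) on the peaks ${\cdot} \xL{\{\pi_j\}} u \xR{\{\pi_1,\dots,\pi_{j-1}\}} u_{j-1}$ to obtain $u_{j-1} \xR[\Veq\{\alpha\}]{\{\pi_j\}} u_j$, whereas you inline that lemma's content as the explicit chain $\lab(u \xr{\pi_j} u^{\pi_j}) \geqslant \cdots \geqslant \gamma_j$, which is exactly the peeling argument behind Lemma~\ref{LEM:pmanip}(\ref{LEM:pmanip1}) restricted to singletons on one side --- as you yourself observe in your final paragraph.
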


\begin{proof}
Assume that $\ell$ is a weak LL-labeling. The first condition of
this lemma is identical to Definition~\ref{DEF:ll}(\ref{DEF:ll1}).
For the second condition, $\beta \geqslant \delta$ follows from
Definition~\ref{DEF:ll}(\ref{DEF:ll2}). To establish
$\{ \alpha \} \geqslant_H \plab(u \xR{P} v)$, we need to show that
$\alpha \geqslant \lab(u \xR{\pi} u^\pi)$ for all $\pi \in P$. For
each $\pi$, we can arrange that $\lab(u \xR{\pi} u^\pi) = \gamma_1$
by choosing $u \xR{\pi} u^\pi$ as the first step in the permutation
of $u \xR{} v$, and then $\alpha \geqslant \gamma_1$ follows from
Definition~\ref{DEF:ll}(\ref{DEF:ll2}), establishing the claim.

Next assume that $\ell$ satisfies the conditions of this lemma.
Then the condition of Definition~\ref{DEF:ll}(\ref{DEF:ll1}) holds.
To show the conditions of Definition~\ref{DEF:ll}(\ref{DEF:ll2}), note that
$\beta \geqslant \delta$ holds by assumption. Consider the parallel
rewrite step $u \xR{P} v$ and a permutation $\seq{\pi}$ of $P$.
We can decompose $u \xR{P} v$ into
$u = u_0 \xr[\gamma_1]{\pi_1} u_1
\xr[\gamma_2]{\pi_2} \cdots
\xr[\gamma_n]{\pi_n} u_n = v.$
By Lemma~\ref{LEM:pmanip}(\ref{LEM:pmanip1}) applied to the peaks
\[
{\cdot} \xL[\Veq\{ \alpha \}]{\{ \pi_i \}} u
\xR[\Veq\{ \alpha \}]{\{ \pi_1, \dots, \pi_{i-1} \}} u_{i-1}
\]
we obtain
$u_{i-1} \xR[\Veq\{\alpha\}]{\{ \pi_i \}} u_i$, i.e.,
$\{ \alpha \} \geqslant_H \{ \gamma_i \}$, which is
equivalent to $\alpha \geqslant \gamma_i$.
Hence $\alpha \geqslant \overline{\gamma}$.
\qed
\end{proof}

The following lemma is used to reduce the number of parallel peaks that
have to be considered in the proof of
Theorem~\ref{thm-main-basic}.

\begin{lemma}
\label{LEM:pmerge}
Let $s \xR[\Vee\Gamma]{*} \cdot \xR[\Veq\Delta]{{}} \cdot
\xR[\Vee\Gamma\Delta]{*} t$ and $s \xR[\Vee\Gamma]{*} \cdot
\xR[\Veq\Delta]{{}} \cdot \xR[\Vee\Gamma\Delta]{*} u$
be two rewrite sequences
such that all rewrite steps in the sequence to $t$ are at or below
a position
$p$ and the rewrite steps in the sequence to $u$ are parallel to
$p$. Then the two rewrite sequences can be merged into
$s \xR[\Vee\Gamma]{*} {\cdot} \xR[\Veq\Delta]{{}} {\cdot}
\xR[\Vee\Gamma\Delta]{*} u[t|_p]_p$.
\end{lemma}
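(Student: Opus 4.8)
The plan is to assemble the claimed sequence by performing a suitable reordering of \emph{all} the steps of the two given sequences, controlling the labels along the way with Lemma~\ref{LEM:pmanip}(\ref{LEM:pmanip1}). Three observations drive the argument. First, since every step of the sequence to $t$ is at or below $p$ while every step of the sequence to $u$ is parallel to $p$, each step of the first sequence is parallel to each step of the second, the first sequence modifies only the subterm at $p$ (so $t = s[t|_p]_p$), and the second modifies only positions parallel to $p$ (so $u|_p = s|_p$); hence performing the modifications of both sequences in any interleaved order is legitimate and yields exactly $u[t|_p]_p$. Second, by iterating Lemma~\ref{LEM:pmanip}(\ref{LEM:pmanip1}) one obtains: if a parallel step at positions $Q$ is \emph{relocated later}, i.e.\ performed from a term reached from its original source by a run of parallel steps whose positions are all parallel to $Q$, then the label set of the relocated step is $\geqslant_H$-dominated by the original label set. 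Third, since $\geqslant$ and $>$ are compatible (and $> \subseteq \geqslant$), a parallel step whose label set is $\geqslant_H$-dominated by that of an $\xR[\Vee\Gamma]{}$-, $\xR[\Veq\Delta]{}$-, or $\xR[\Vee\Gamma\Delta]{}$-step is again a step of the same kind, and a union of two $\xR[\Veq\Delta]{}$-admissible label sets is $\xR[\Veq\Delta]{}$-admissible.

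Concretely, write the sequences as $s \xR[\Vee\Gamma]{*} a_1 \xR[\Veq\Delta]{} b_1 \xR[\Vee\Gamma\Delta]{*} t$ (steps at or below $p$) and $s \xR[\Vee\Gamma]{*} a_2 \xR[\Veq\Delta]{} b_2 \xR[\Vee\Gamma\Delta]{*} u$ (steps parallel to $p$). I would build the merged run in three phases. Phase one: perform $s \xR[\Vee\Gamma]{*} a_1$ unchanged, then from $a_1$ perform copies of the steps $s \xR[\Vee\Gamma]{*} a_2$ (legal since they act parallel to $p$ and $a_1$ differs from $s$ only inside the subtree at $p$), reaching a term $c$; by the second and third observations these copies are still $\xR[\Vee\Gamma]{}$-steps. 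Phase two: from $c$, the copy of the $b_1$-step (positions $\geqslant p$) and the copy of the $b_2$-step (positions parallel to $p$) act at mutually parallel positions, hence merge into a single parallel step to some $d$ whose label set is, by Definition~\ref{DEF:plab}, the union of the two relocated label sets; each of these is $\geqslant_H$-dominated by $\Delta$, so the merged step is an $\xR[\Veq\Delta]{}$-step. Phase three: from $d$, append copies of $b_1 \xR[\Vee\Gamma\Delta]{*} t$ and then of $b_2 \xR[\Vee\Gamma\Delta]{*} u$; these stay $\xR[\Vee\Gamma\Delta]{}$-steps, and the target of the whole run is $u[t|_p]_p$ by the first observation. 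This produces $s \xR[\Vee\Gamma]{*} c \xR[\Veq\Delta]{} d \xR[\Vee\Gamma\Delta]{*} u[t|_p]_p$, as required.

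The hard part will be the second observation: that relocating a parallel step to a later point in a surrounding run of mutually parallel steps can only decrease its label set in the Hoare preorder. This should follow by induction from Lemma~\ref{LEM:pmanip}(\ref{LEM:pmanip1}): applying it to the peak $t_1 \xL[\Gamma]{P} s \xR[\Delta]{Q} t_2$, where $P$ collects the intervening parallel-position steps and $Q$ is the step being relocated, the commuted step $t_1 \xR[\Delta']{Q} u$ \emph{is} the relocated step, and the lemma supplies $\Delta \geqslant_H \Delta'$; one then iterates over the individual steps of $P$. A further point to get right is that in phase two the two merged copies act from the \emph{same} term $c$, so that their label sets genuinely combine via the set definition of $\plab$, and that each is ``relocated later'' with respect to its own sequence (the $b_1$-copy past the relocated $a_2$-run, the $b_2$-copy past $a_1$ together with $b_1$, all at positions parallel to the copy in question). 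The remaining items — availability of the matching redex for each relocated step, and the position arithmetic behind $t = s[t|_p]_p$, $u|_p = s|_p$, and the final term equalling $u[t|_p]_p$ — are routine.
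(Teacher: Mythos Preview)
Your proposal is correct and follows essentially the same approach as the paper: both interleave the two sequences segment by segment, using Lemma~\ref{LEM:pmanip}(\ref{LEM:pmanip1}) repeatedly to control labels, arriving at the merged sequence through the intermediate terms $a_2[a_1|_p]_p$ and $b_2[b_1|_p]_p$ (the paper's $u_1[t_1|_p]_p$ and $u_2[t_2|_p]_p$). Your write-up is considerably more explicit than the paper's one-line proof; the only slip is the parenthetical remark that the $b_2$-copy is relocated ``past $a_1$ together with $b_1$''---since in phase two both copies act from $c$, the $b_2$-copy is relocated only past the (relocated) $s \xR{*} a_1$ run, which is what you actually need for the label bound.
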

\begin{proof}
Let the two sequences be
$s \xR[\Vee\Gamma]{*} t_1 \xR[\Veq\Delta]{{}} t_2
\xR[\Vee\Gamma\Delta]{*} t$ and
$s \xR[\Vee\Gamma]{*} u_1 \xR[\Veq\Delta]{{}} u_2
\xR[\Vee\Gamma\Delta]{*} u$. Using
Lemma~\ref{LEM:pmanip}(\ref{LEM:pmanip1}) repeatedly, we can derive a
sequence
\[
s \xR[\Vee\Gamma]{*}
u_1[t_1|_p]_p \xR[\Veq\Delta]{{}} u_2[t_2|_p]_p
\xR[\Vee\Gamma\Delta]{*}
u[t|_p]_p
\]
which establishes the claim.
\qed
\end{proof}

In order to perform a critical pair analysis for parallel rewrite steps, we
need parallel critical pairs~\cite{T81,G96}.

\begin{definition}
Let $l \to r$ be a rule in a TRS $\RR$ and $P$ be a non-empty set of
pairwise parallel redex patterns such that every $\pi \in P$
critically overlaps with $l$.
By choosing variants of rules from $\RR$ appropriately, we may assume
that the sets $\Var(l_\pi)$ for $\pi \in P$ and $\Var(l)$ are pairwise
disjoint. Assume that the unification problem
$\{ l|_{p_\pi} \approx l_\pi \mid \pi \in P \}$
has a solution and let
$\sigma$ be a most general unifier. Then there is a unique term $l_P$
such that $l\sigma \xR{P} l_P$. We call
$l_P \pcp r\sigma$ a \emph{parallel critical pair}, and
$l_P \xL{} l\sigma \xr{} r\sigma$ a \emph{parallel critical peak}.
\end{definition}

Note that every standard critical pair also is a parallel critical pair.
The following lemma states how critical pair analysis for a peak
consisting of a parallel and a root rewrite step is done. It is
a straightforward extension of \cite[Lemma 4.7]{G96}.

\begin{lemma}
\label{LEM:pcp}
Let $\RR$ be a left-linear TRS and $t \xL{P} s \xr{\pi} u$ with
$p_\pi = \epsilon$. Then either $P \perp \pi$ or there are substitutions
$\sigma \parto \sigma'$ and a parallel critical pair $t' \pcp u'$ such
that
$t = t'\sigma' \xL{P \setminus P'} t'\sigma \xL{P'} s \to u'\sigma = u$
with $P' \subseteq P$.
\qed
\end{lemma}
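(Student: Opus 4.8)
Write $\pi = \langle \epsilon, l \to r\rangle$, so that $s = l\tau$ and $u = r\tau$ for the matching substitution $\tau$. The plan is to split $P$ according to how its redex patterns sit relative to $l$, turn the part that clashes with $l \to r$ into a parallel critical pair, and absorb the remaining part into a substitution. Concretely, I would put $P' = \{\pi' \in P \mid p_{\pi'} \in \FPos(l)\}$ and argue: for $\pi' \in P'$ the non-variable subterm $l|_{p_{\pi'}}$ has, via $\tau$ and the matcher of $\pi'$ in $s$, the common instance $s|_{p_{\pi'}} = (l|_{p_{\pi'}})\tau$ with $l_{\pi'}$, so $\pi'$ overlaps $l \to r$ critically; while for $\pi' \in P \setminus P'$ there is a position $q' \in \VPos(l)$ with $q' \leqslant p_{\pi'}$, hence $\pi' \perp \pi$. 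If $P' = \varnothing$ this shows $P \perp \pi$, the first alternative, so from now on assume $P' \neq \varnothing$.

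Next I would build the parallel critical pair from $P'$. After choosing variants so that $\Var(l)$ and the $\Var(l_{\pi'})$ for $\pi' \in P'$ are pairwise disjoint, the union of $\tau$ with the individual matchers of the $\pi' \in P'$ against $s$ solves the unification problem $\{l|_{p_{\pi'}} \approx l_{\pi'} \mid \pi' \in P'\}$; let $\sigma_0$ be a most general unifier and factor this solution as $\sigma_0\sigma$. By the definition of parallel critical pairs, $t' = l_{P'}$ (where $l\sigma_0 \xR{P'} l_{P'}$) and $u' = r\sigma_0$ form a parallel critical pair with parallel critical peak $t' \xL{P'} l\sigma_0 \xr{} u'$. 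Since parallel rewriting is closed under substitutions and redex patterns determine their reducts, instantiating this peak by $\sigma$ gives the $P'$-step from $s$ together with the $\pi$-step from $s$; as $l\sigma_0\sigma = l\tau = s$ and $r\sigma_0\sigma = r\tau = u$, this reads $t'\sigma \xL{P'} s \xr{\pi} u'\sigma = u$, and in particular $t'\sigma = s^{P'}$.

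It remains to feed $P \setminus P'$ into the substitution. Because these patterns are parallel to all of $P'$, they survive the $P'$-step unchanged and still match $t'\sigma = s^{P'}$ at their original positions; contracting them there reaches $t$ (performing $P$ as $P'$ followed by $P \setminus P'$). Using left-linearity of $\RR$ — each variable occurs exactly once in $l$ — every $\pi'' \in P \setminus P'$ lies inside $\sigma(x)$ for a unique variable $x$ of $l$, and since $P$ is pairwise parallel and $\pi''$ lies below the occurrence of $x$, that occurrence cannot be below any $p_{\pi'}$, so $x$ does not occur in any $l|_{p_{\pi'}}$; hence $x$ is untouched by $\sigma_0$ and occurs in $t' = l_{P'}$ only at the position $q'$ of $x$ in $l$. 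Defining $\sigma'$ to agree with $\sigma$ except that $\sigma'(x)$ is obtained by contracting in $\sigma(x)$ the pairwise parallel patterns of $P \setminus P'$ located there, one gets $\sigma \parto \sigma'$ and $t'\sigma \xR{P \setminus P'} t'\sigma' = t$. Together with the previous paragraph this yields $t = t'\sigma' \xL{P \setminus P'} t'\sigma \xL{P'} s \to u'\sigma = u$ with $P' \subseteq P$, as claimed.

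I expect the main obstacle to be the bookkeeping in this last step: verifying precisely that the patterns of $P \setminus P'$ carry over to genuine, pairwise parallel redex patterns of $t'\sigma$, that contracting them reconstructs $t$, and that this contraction alters $t'\sigma$ only inside the relevant $\sigma$-images — which hinges on invoking left-linearity for the unique-occurrence property of variables of $l$ and on the standard fact that a most general unifier of the overlap equations mentions only the variables occurring in those equations. The unifiability claim in the second paragraph is the other point that needs a (routine) argument, and it rests on the variable-disjointness convention built into the definition of parallel critical pairs together with the fact that $s$ itself exhibits a common instance of the equations.
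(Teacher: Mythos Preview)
Your proposal is correct. The paper does not actually prove this lemma; it states it with a \qed and remarks that it is ``a straightforward extension of \cite[Lemma~4.7]{G96}.'' Your argument is precisely the standard one underlying that extension: split $P$ into the function-symbol positions $P'$ of $l$ (giving the parallel critical overlap) and the remainder (absorbed into the substitution via left-linearity), build the mgu from the common instance $s$, and factor. The bookkeeping you flag---that variables below $P\setminus P'$ are parallel to all $p_{\pi'}$ with $\pi'\in P'$, hence untouched by the mgu and occurring uniquely in $t'$---is exactly the place where left-linearity is used, and your reasoning there is sound.
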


Note that left-linearity is essential for the substitutions $\sigma$
and $\sigma'$ to exist in Lemma~\ref{LEM:pcp}.
We are now ready to state and prove the main theorem of this section.

\begin{theorem}
\label{thm-main-basic}
A left-linear TRS $\RR$ is confluent if all its parallel
critical peaks $t \xL[\Gamma]{P} s \xr[\Delta]{} u$
can be joined decreasingly as
\[
t \xr[\Vee\Gamma]{*}
{\cdot}\xR[\Veq\Delta]{}
{\cdot}\xr[\Vee\Gamma\Delta]{*}
{\cdot}\xl[\Vee\Gamma\Delta]{*}
v \xL[\Veq\Gamma]{Q}
{\cdot}\xl[\Vee\Delta]{*}
u
\]
such that $\Var(v|_Q) \subseteq \Var(s|_P)$.
\end{theorem}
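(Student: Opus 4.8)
The plan is to apply the decreasing diagrams theorem (Theorem~\ref{THM:dd}) to the abstract rewrite system $\langle \FVTERMS, \{ \xR[\Gamma]{} \}_\Gamma \rangle$ of parallel rewrite steps, where a step $t \xR[\Gamma]{P} t'$ carries the label $\Gamma = \plab(t \xR{P} t')$ and labels are compared by the Hoare preorder $(\geqslant_H, >_H)$, a compatible pair of a preorder and a well-founded order. Since ${\to_\RR} \subseteq {\xR{}} \subseteq {\to_\RR^*}$, confluence of this ARS is equivalent to confluence of $\RR$, so it suffices to show that every local peak $t \xL[\Gamma]{P} s \xR[\Delta]{Q} t'$ of parallel steps is decreasing. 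Reading an ordinary step $\to_\alpha$ as the parallel step $\xR[\{\alpha\}]{}$, the join shape demanded of parallel critical peaks in the statement is an instance of the decreasing shape, so the hypothesis supplies exactly what is needed to close the critical cases.

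The argument runs by induction on $|P| + |Q|$. If $P \parallel Q$, then Lemma~\ref{LEM:pmanip}(\ref{LEM:pmanip1}) closes the peak by a parallel diamond $t \xR[\Delta']{Q} {\cdot} \xL[\Gamma']{P} t'$ with $\Gamma \geqslant_H \Gamma'$ and $\Delta \geqslant_H \Delta'$, which is decreasing. Otherwise fix a $\leqslant$-minimal position $p$ among the positions of the patterns in $P \cup Q$; as the decreasing shape is symmetric in the two sides of the peak, assume without loss of generality that $p = p_\rho$ for some $\rho \in Q$. Since the patterns of $Q$ occupy pairwise parallel positions, $\rho$ is the only pattern of $Q$ at or below $p$; write $P = P_1 \uplus P_2$ with $P_1 = \{ \pi \in P \mid p_\pi \geqslant p \}$, so that by minimality of $p$ every pattern of $P_2$ and of $Q' = Q \setminus \{ \rho \}$ is parallel to $p$.

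Next I localize at $p$: the patterns $P_1$ together with $\rho$ induce, on the subterm $s|_p$, a peak of a parallel step against a root step (the surrounding context $s[\HOLE]_p$ being reinserted afterwards with the help of monotonicity and stability of the labeling), so Lemma~\ref{LEM:pcp} applies. If $P_1 \perp \rho$ then, as no position is parallel to the root, the patterns of $P_1$ lie strictly inside the substitution of $\rho$'s redex, and this sub-peak — an instance of Figure~\ref{FIG:lpeaks}\subref{FIG:vll} — is closed decreasingly by Lemma~\ref{LEM:pmanip}(\ref{LEM:pmanip2}). Otherwise Lemma~\ref{LEM:pcp} yields substitutions $\sigma \parto \sigma'$ and a parallel critical peak $t_0 \xL[\Gamma_0]{P_0} s_0 \xr[\Delta_0]{} u_0$ of which the sub-peak is the described instance. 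By hypothesis that parallel critical peak is joined decreasingly in the stated shape with $\Var(v_0|_{Q_0}) \subseteq \Var(s_0|_{P_0})$; instantiating the join with $\sigma$ matches the $u_0$-side (which equals $u$), and the remaining substitution rewriting $\sigma \parto \sigma'$ is transported through the join by Lemma~\ref{LEM:pmanip}(\ref{LEM:pmanip2}). The side condition $\Var(v_0|_{Q_0}) \subseteq \Var(s_0|_{P_0})$ is precisely what ensures that the facing parallel step $\xL[\Veq\Gamma_0]{Q_0}$ of the join touches only instances of variables on which $\sigma$ and $\sigma'$ agree, so that the \emph{furthermore} clause of Lemma~\ref{LEM:pmanip}(\ref{LEM:pmanip2}) absorbs it into a single parallel step with labels bounded by $\Gamma_0 \cup \Delta_0$. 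In each case we obtain a decreasing join of the sub-peak using only steps at or below $p$.

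Finally, the induction hypothesis applied to the peak ${\cdot} \xL{P_2} s \xR{Q'} {\cdot}$ (legitimate because $|P_2| + |Q'| < |P| + |Q|$), all of whose steps are parallel to $p$, yields a decreasing join with steps parallel to $p$. Since $t$ and $t'$ each split into their part at $p$ and their part parallel to $p$, the decreasing join at or below $p$ and the decreasing join parallel to $p$ are combined, by repeated application of Lemma~\ref{LEM:pmerge}, into a single decreasing join of the original peak, and reinserting the context $s[\HOLE]_p$ completes the induction. The main obstacle throughout is the label accounting under the Hoare preorder: one must check that decomposing a parallel step into smaller ones, transporting rewriting through a substitution, and merging sequences can only weaken labels — which is exactly what Lemmata~\ref{LEM:pmanip} and~\ref{LEM:pmerge} provide — and that the side condition $\Var(v|_Q) \subseteq \Var(s|_P)$ is threaded correctly through the instantiation and the application of Lemma~\ref{LEM:pmanip}(\ref{LEM:pmanip2}); left-linearity of $\RR$ is essential both for Lemma~\ref{LEM:pcp} and for assuming linear left-hand sides inside Lemma~\ref{LEM:pmanip}(\ref{LEM:pmanip2}). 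A minor point is to verify once and for all that the non-standard join shape, with parallel steps in place of the single middle steps of an ordinary decreasing diagram, still certifies decreasingness of the parallel-step ARS.
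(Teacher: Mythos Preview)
Your overall plan---reduce to a peak of a parallel step against a single root step, invoke Lemma~\ref{LEM:pcp}, close critical overlaps via the hypothesis together with the ``furthermore'' clause of Lemma~\ref{LEM:pmanip}(\ref{LEM:pmanip2}), close orthogonal overlaps by Lemma~\ref{LEM:pmanip}(\ref{LEM:pmanip2}) directly, and reassemble using Lemma~\ref{LEM:pmerge}---matches the paper's proof, and the treatment of the variable condition is correct.

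There is, however, a genuine gap in your inductive structure. You prove ``every peak is decreasing'' by induction on $|P|+|Q|$, and in the inductive step you apply the hypothesis to the residual peak ${\cdot}\xL{P_2} s \xR{Q'}{\cdot}$, asserting that this ``yields a decreasing join with steps parallel to $p$.'' But your induction hypothesis says nothing about \emph{where} the joining steps live: even though the peak's steps are parallel to $p$, the join produced by the hypothesis may freely rewrite the shared subterm $s|_p$. In that case Lemma~\ref{LEM:pmerge} no longer applies, and you cannot combine this join with the one obtained at or below $p$. The paper avoids this issue entirely by \emph{not} using a top-level induction: it treats every minimal position $p \in \min\{p_\pi \mid \pi \in P\cup Q\}$ simultaneously, obtains for each such $p$ a decreasing join confined to positions at or below $p$ (since that join is constructed explicitly from the sub-peak at $p$), and then merges all of them at once with Lemma~\ref{LEM:pmerge}.

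Your approach can be repaired in two ways. Either strengthen the induction statement to ``every peak $t \xL{P} s \xR{Q} t'$ admits a decreasing join whose steps occur only at positions $\geqslant p_\pi$ for some $\pi \in P\cup Q$'' (this is preserved by your argument, since the base case via Lemma~\ref{LEM:pmanip}(\ref{LEM:pmanip1}) stays at $P\cup Q$, the sub-peak join stays at or below $p$, and merging preserves the bound). Or replace $s|_p$ by a fresh variable before invoking the hypothesis on the $P_2$--$Q'$ peak, so that the resulting join is forced to avoid $p$, and then substitute back using stability of the labeling. Either fix is straightforward, but as written the step does not go through.
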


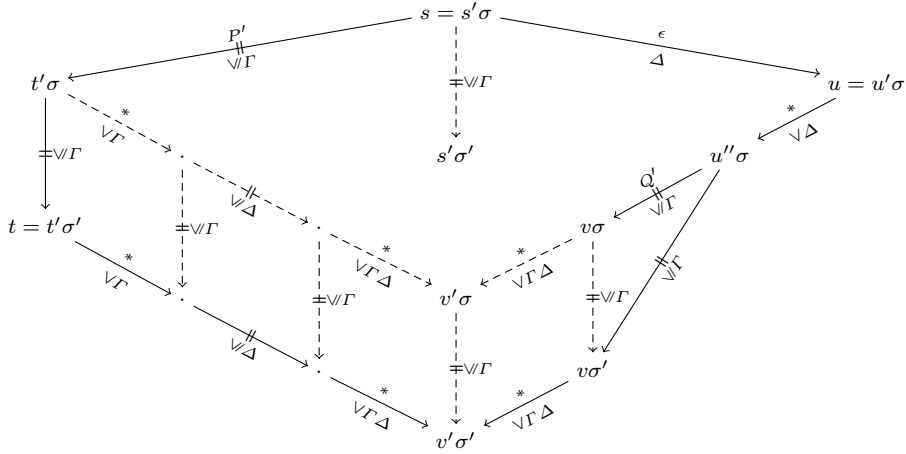
\begin{figure}
\centerline{
\begin{tikzpicture}
\matrix (D) [matrix of math nodes,column sep={1.8cm,between origins},
row sep={.95cm,between origins},ampersand replacement=\&]{
\& \& \& s = s' \sigma \& \& \& \\
t' \sigma \& \& \& \& \& \& u = u' \sigma \\
\& \cdot \& \& s'\sigma' \& \& u'' \sigma \& \\
t = t' \sigma' \& \& \cdot \& \& v\sigma \\ 
\& \cdot \& \& v' \sigma \& \& \& \\
\& \& \cdot \& \& v\sigma' \& \& \\
\& \& \& v' \sigma' \& \& \&\\
};
\draw[To] (D-1-4) -- node[sloped,above] {$\scriptstyle P'$}
node [sloped,below] {$\scriptstyle\Veq\Gamma$} (D-2-1);
\draw[To,densely dashed] (D-1-4) -- node[right]
 {$\scriptstyle\Veq\Gamma$} (D-3-4);
\draw[to] (D-1-4) -- node [sloped,below] {$\scriptstyle\Delta$}
 node [sloped,above] {$\scriptstyle\epsilon$} (D-2-7);
\draw[to,densely dashed] (D-2-1) -- node [sloped,above] {$\scriptstyle *$}
 node [sloped,below] {$\scriptstyle\Vee\Gamma$} (D-3-2);
\draw[To,densely dashed] (D-3-2) -- node [sloped,below]
 {$\scriptstyle\Veq\Delta$} (D-4-3);
\draw[to,densely dashed] (D-4-3) -- node [sloped,above] {$\scriptstyle *$}
 node [sloped,below] {$\scriptstyle\Vee\Gamma\Delta$} (D-5-4);
\draw[to,densely dashed] (D-4-5) -- node [sloped,above] {$\scriptstyle *$}
 node [sloped,below] {$\scriptstyle\Vee\Gamma\Delta$} (D-5-4);
\draw[To] (D-3-6) -- node [sloped,above] {$\scriptstyle Q'$}
 node [sloped,below] {$\scriptstyle\Veq\Gamma$} (D-4-5);
\draw[to] (D-2-7) -- node [sloped,above] {$\scriptstyle *$}
 node [sloped,below] {$\scriptstyle\Vee\Delta$} (D-3-6);
\draw[to] (D-4-1) -- node [sloped,above] {$\scriptstyle *$}
 node [sloped,below] {$\scriptstyle\Vee\Gamma$} (D-5-2);
\draw[To] (D-5-2) -- node [sloped,below]
 {$\scriptstyle\Veq\Delta$} (D-6-3);
\draw[to] (D-6-3) -- node [sloped,above] {$\scriptstyle *$}
 node [sloped,below] {$\scriptstyle\Vee\Gamma\Delta$} (D-7-4);
\draw[to] (D-6-5) -- node [sloped,above] {$\scriptstyle *$}
 node [sloped,below] {$\scriptstyle\Vee\Gamma\Delta$} (D-7-4);
\draw[To] (D-2-1) -- node [right] {$\scriptstyle\Veq\Gamma$} (D-4-1);
\draw[To,densely dashed] (D-3-2) -- node [right]
 {$\scriptstyle\Veq\Gamma$} (D-5-2);
\draw[To,densely dashed] (D-4-3) -- node [right]
 {$\scriptstyle\Veq\Gamma$} (D-6-3);
\draw[To,densely dashed] (D-5-4) -- node [right]
 {$\scriptstyle\Veq\Gamma$} (D-7-4);
\draw[To,densely dashed] (D-4-5) -- node [right]
 {$\scriptstyle\Veq\Gamma$} (D-6-5);
\draw[To] (D-3-6) to node [sloped,below] {$\scriptstyle\Veq\Gamma$}
 (D-6-5);
\end{tikzpicture}
}
\caption{Part of the proof of Theorem~\ref{thm-main-basic}.}
\label{fig-proof-basic}
\end{figure}

\begin{proof}
We show that $\parto$ is decreasing, which implies
confluence of $\RR$.
Let $t \xL[\Gamma]{P} s \xR[\Delta]{Q} u$.
It suffices to show that
\begin{equation}
\label{joinxxx}
t \xr[\Vee\Gamma]{*}
\cdot \xR[\Veq\Delta]{}
\cdot \xr[\Vee\Gamma\Delta]{*}
\cdot \xl[\Vee\Gamma\Delta]{*}
\cdot \xL[\Veq\Gamma]{}
\cdot \xl[\Vee\Delta]{*} u
\end{equation}
Below we show that \eqref{joinxxx} holds whenever $P = \{ \pi \}$ or
$Q = \{ \pi \}$ with $p_\pi = \epsilon$. Then for all
$p \in \min\,\{ p_\pi \mid \pi \in P \cup Q \}$,
$t \xL[\Gamma]{P} s \xR[\Delta]{Q} u$ induces a peak
$t|_p \xL[\Gamma_0]{P'} s|_p \xR[\Delta_0]{Q'} u|_p$,
where $P' = \{ \pi \}$ or $Q' = \{ \pi \}$ for some $\pi$ with
$p_\pi = \epsilon$.
So for each $p$, we obtain a joining sequence for $t|_p$ and $u|_p$
of shape \eqref{joinxxx}. By the monotonicity of labelings,
this results in joining sequences
\[
s[t|_p]_p \xr[\Vee\Gamma]{*}
\cdot \xR[\Veq\Delta]{}
\cdot \xr[\Vee\Gamma\Delta]{*}
\cdot \xl[\Vee\Gamma\Delta]{*}
\cdot \xL[\Veq\Gamma]{}
\cdot \xl[\Vee\Delta]{*} s[u|_p]_p
\]
which are mutually parallel since the positions $p \in \min (P \cup Q)$
are mutually parallel. By repeated application of Lemma~\ref{LEM:pmerge}
those sequences can be combined into a single sequence of the same shape.

In order to show \eqref{joinxxx} for $P = \{ \pi \}$ or
$Q = \{ \pi \}$ with $p_\pi = \epsilon$, assume without loss of generality
that $Q = \{ \pi \}$.
If $P \perp \pi$ then $s = l_\pi \sigma$ and,
because $l_\pi$ is linear, there is a substitution $\sigma'$
with $t = l_\pi \sigma'$ and $\sigma(x) \xR{} \sigma'(x)$ for all variables
$x \in \x{V}$. We conclude by Lemma~\ref{LEM:pmanip}(\ref{LEM:pmanip2}).
Otherwise $P$ and $\pi$ overlap, and by Lemma~\ref{LEM:pcp}, there
are a parallel critical peak
$t' \xL{P'}{s'}\xr{} u'$ and substitutions $\sigma$, $\sigma'$ such
that $\sigma \xR{} \sigma'$ and
$t = t'\sigma' \xL[\Veq\Gamma]{P \setminus P'}
t'\sigma \xL[\Veq\Gamma]{P'}
s'\sigma = s \xr[\Delta]{\epsilon}
u'\sigma = u$ with $P' \subseteq P$.
This case is illustrated in
Figure~\ref{fig-proof-basic}. By assumption there are $u''$, $v$ and $v'$
with $\Var(v|_{Q'}) \subseteq \Var(s|_{P'})$ such that we can join
$t'$ and $u'$ decreasingly, and consequently, using the stability of
labelings we obtain
\[
t'\sigma \xr[\Vee\Gamma]{*}
\cdot \xR[\Veq\Delta]{}
\cdot \xr[\Vee\Gamma\Delta]{*}
v'\sigma \xl[\Vee\Gamma\Delta]{*}
v\sigma \xL[\Veq\Gamma]{Q'}
u''\sigma \xl[\Vee\Delta]{*}
u'\sigma = u
\]
Furthermore, making repeated use of
Lemma~\ref{LEM:pmanip}(\ref{LEM:pmanip2}),
\[
t = t'\sigma' \xr[\Vee\Gamma]{*}
\cdot \xR[\Veq\Delta]{}
\cdot \xr[\Vee\Gamma\Delta]{*}
v'\sigma' \xl[\Vee\Gamma\Delta]{*}
v\sigma' \xL[\Veq\Gamma]{} v\sigma
\]
Notably, the step $v\sigma \xR[\Veq\Gamma]{} v\sigma'$ is obtained
from $s'\sigma \xR[\Veq\Gamma]{} s'\sigma'$
by passing through the rewrite sequence 
$s'\sigma \to u'\sigma \to^* u''\sigma  \xR{} v\sigma$.
We have
$\sigma(x) = \sigma'(x)$ for $x \in \Var(s|_{P'})$ for otherwise
$s \xR[\Gamma]{} t$ would not be a parallel step.
Together with
$\Var(v|_{Q'}) \subseteq \Var(s|_{P'})$, the parallel steps
$u''\sigma \xR[\Veq\Gamma]{} v\sigma$ and
$v\sigma \xR[\Veq\Gamma]{} v\sigma'$ 
can be combined into a single $\xR[\Veq\Gamma]{}$ step by
Lemma~\ref{LEM:pmanip}(\ref{LEM:pmanip2}). Thus we can join $t$ and $u$
decreasingly with common reduct $v'\sigma'$, completing the proof.
\qed
\end{proof}

To conclude the section we demonstrate Theorem~\ref{thm-main-basic} on two
examples. Both are based on rule labeling.

\begin{example}
\label{ex1}
Consider the TRS $\RR$ consisting of the following five rules with labels
$2 > 1 > 0$:
\begin{xalignat*}{5}
\m{a} &\xr[1]{} \m{b} &
\m{b} &\xr[0]{} \m{a} &
\m{f}(\m{a},\m{a}) &\xr[1]{} \m{c} &
\m{f}(\m{b},\m{b}) &\xr[2]{} \m{c} &
\m{h}(x) &\xr[0]{} \m{h}(\m{f}(x,x))
\end{xalignat*}
There are six parallel critical peaks that can all be joined decreasingly
as required by Theorem~\ref{thm-main-basic}:
\begin{xalignat*}{2}
\m{f}(\m{b},\m{a})
&\pcpk{1}{\m{f}(\m{a},\m{a})}{1} \m{c} :&
&\m{f}(\m{b},\m{a})
\xr[\{ 0 \}]{} \m{f}(\m{a},\m{a}) \xr[\{ 1 \}]{} \m{c}
\\
\m{f}(\m{a},\m{b})
&\pcpk{1}{\m{f}(\m{a},\m{a})}{1} \m{c} :&
&\m{f}(\m{a},\m{b})
\xr[\{ 0 \}]{} \m{f}(\m{a},\m{a}) \xr[\{ 1 \}]{} \m{c}
\\
\m{f}(\m{b},\m{b})
&\pcpk{1}{\m{f}(\m{a},\m{a})}{1} \m{c} :&
&\m{f}(\m{b},\m{b})
\xR[\{ 0 \}]{} \m{f}(\m{a},\m{a}) \xr[\{ 1 \}]{} \m{c}
\\
\m{f}(\m{a},\m{b})
&\pcpk{0}{\m{f}(\m{b},\m{b})}{2} \m{c} :&
&\m{f}(\m{a},\m{b})
\xr[\{ 0 \}]{} \m{f}(\m{a},\m{a}) \xr[\{ 1 \}]{} \m{c}
\\
\m{f}(\m{b},\m{a})
&\pcpk{0}{\m{f}(\m{b},\m{b})}{2} \m{c} :&
&\m{f}(\m{b},\m{a})
\xr[\{ 0 \}]{} \m{f}(\m{a},\m{a}) \xr[\{ 1 \}]{} \m{c}
\\
\m{f}(\m{a},\m{a})
&\pcpk{0}{\m{f}(\m{b},\m{b})}{2} \m{c} :&
&\phantom{\m{f}(\m{a},\m{a})}
\mathrel{\phantom{\xR[\{ 0 \}]{}}} \m{f}(\m{a},\m{a}) \xr[\{ 1 \}]{} \m{c}
\end{xalignat*}
Therefore, $\x{R}$ is confluent.
\end{example}

\begin{example}
\label{ex2}
Let $\RR$ be the TRS (Cops~\#62) consisting of the (labeled) rules
\begin{xalignat*}{3}
x - 0 &\xr[0]{} x &
0 - x &\xr[0]{} 0 &
\m{s}(x) - \m{s}(y) &\xr[0]{} x - y \\[-.5ex]
0 < \m{s}(x) &\xr[0]{} \m{true} &
x < 0 &\xr[0]{} \m{false} &
\m{s}(x) < \m{s}(y) &\xr[0]{} x < y \\[-.5ex]
\m{gcd}(x,0) &\xr[0]{} x &
\m{gcd}(0,x) &\xr[0]{} x &
\m{gcd}(x,y) &\xr[1]{} \m{gcd}(y,\m{mod}(x,y)) \\[-.5ex]
\m{if}(\m{true},x,y) &\xr[0]{} x &
\m{if}(\m{false},x,y) &\xr[0]{} y \\[-.5ex]
\m{mod}(x,0) &\xr[0]{} x &
\m{mod}(0,x) &\xr[0]{} 0 \\
\m{mod}(x,\m{s}(y)) &\xr[1]{}
\makebox[0pt][l]{$\m{if}(x < \m{s}(y),x,\m{mod}(x - \m{s}(y),\m{s}(y)))$}
\end{xalignat*}
There are 12 critical pairs, 6 of which are trivial.
One easily verifies that the remaining 6 pairs can be
joined decreasingly, using the order $1 > 0$.
Hence the confluence of $\RR$ follows from Theorem~\ref{thm-main-basic}.
Even though~$\RR$ lacks proper parallel critical pairs, none of the
other results in this paper applies. Note that the preconditions for 
Corollaries~\ref{COR:rtd},~\ref{COR:rt}, and~\ref{COR:red} are not
satisfied as $\RRdnd$, $\ST{\RR}$, and $\RR^\red$ are non-terminating
(due to the rules with label~$1$). Finally, persistence cannot rule out
variable overlaps (of the duplicating $\m{mod}$ rule below the variable~$x$)
and hence Theorems~\ref{THM:pl} and~\ref{THM:pll} based on the rule labeling
fail.
\end{example}

\section{Assessment}
\label{ASS:main}

In this section we relate the results from this article to
each other (Section~\ref{ASS:self})
and to the recent literature~\cite{A10,HM11} (Section~\ref{ASS:other}).

\subsection{Interrelationships}
\label{ASS:self}

\newcommand\Xbig{0.89}
\newcommand\Xsmall{0.89}
\begin{figure}
\subfloat[All.\label{FIG:inter:all}]{
\scalebox{\Xsmall}{
\begin{tikzpicture}[scale=\Xbig,semithick,xshift=-1.2mm]
\draw [color=black] (0,1) circle (1.8);
\draw [color=black] (.866,-.5) circle (1.8);
\draw [color=black] (-.866,-.5) circle (1.8);
\node at (0,0)          {};
\node at (0,-2.9)       {};%
\node at (0,1.9)        {};
\node at (0,-1.3)       {Ex.~\ref{EX:fail:per}};
\node at (1.645,-0.95)  {};
\node at (-1.645,-0.95) {};
\node at (1.126,0.65)   {Ex.~\ref{EX:mot7}};
\node at (-1.126,0.65)  {Ex.~\ref{EX:pll}};
\node at (0,3.2)                    {persistence + rule labeling};
\node at (-2.771,-1.6) [rotate=-60] {parallel rewriting};
\node at (2.771,-1.6) [rotate=60]   {relative termination};
\end{tikzpicture}
}
}
\subfloat[Relative Termination.\label{FIG:inter:rt}]{
\scalebox{\Xsmall}{
\begin{tikzpicture}[scale=\Xbig,semithick]
\draw [color=black] (0,1) circle (1.8);
\draw [color=black] (0,0) circle (0.7);
\draw [color=black] (.866,-.5) circle (1.8);
\draw [color=black] (-.866,-.5) circle (1.8);
\node at (0,0)          {};
\node at (0,-2.9)       {};%
\node at (0,1.9)        {};
\node at (0,-1.3)       {Ex.~\ref{EX:fail:rtd}};
\node at (1.645,-0.95)  {};
\node at (-1.645,-0.95) {};
\node at (1.126,0.65)   {Ex.~\ref{EX:mot4}};
\node at (-1.126,0.65)  {Ex.~\ref{EX:mot5}};
\node at (0,0)                      {Thm.~\ref{THM:l}};
\node at (0,3.2)                    {Corollary~\ref{COR:rtd}};
\node at (-2.771,-1.6) [rotate=-60] {Corollary~\ref{COR:rt}};
\node at (2.771,-1.6) [rotate=60]   {Corollary~\ref{COR:red}};
\end{tikzpicture}
}
}
\caption{Interrelationships.}
\label{FIG:inter}
\end{figure}
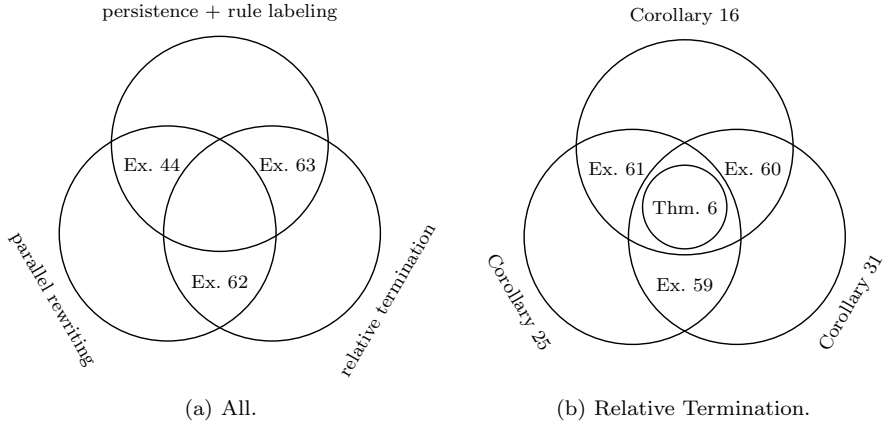

The main results for left-linear systems presented in this article can
be divided into three
classes. Those that require relative termination as a precondition
(Corollaries~\ref{COR:rtd}, \ref{COR:rt}, and~\ref{COR:red}), 
those exploiting persistence (Theorems~\ref{THM:pl} and~\ref{THM:pll}),
and those considering parallel rewriting (Theorem~\ref{thm-main-basic}).
Figure~\ref{FIG:inter}\subref{FIG:inter:all} demonstrates that these three
classes are incomparable.  
The same holds when focusing on the results
relying on relative termination, cf.\
Figure~\ref{FIG:inter}\subref{FIG:inter:rt}.
Note that the regions where only one class is
applicable can be populated with examples using Toyama's celebrated 
modularity result~\cite{T87}, e.g., the disjoint union (after renaming
function symbols) of the TRSs in Examples~\ref{EX:fail:per}
and~\ref{EX:mot7}
can only be handled by the approach based on relative termination.
We discuss the interrelationships in more detail below.

First we observe that Corollaries~\ref{COR:rtd}, \ref{COR:rt},
and~\ref{COR:red} subsume Theorem~\ref{THM:l} since the preconditions of
the corollaries evaporate for linear systems.
The inclusion is strict since Theorem~\ref{THM:l} cannot deal with
the rule $\m{f}(x) \to \m{g}(x,x)$, while all the corollaries can.
Furthermore, Theorem~\ref{THM:l} is subsumed by
Theorem~\ref{THM:pl},
which, if restricted to weak LL-labelings, is subsumed by
Theorem~\ref{THM:pll}.

The following three examples
show that Corollaries~\ref{COR:rtd}, \ref{COR:rt}, and~\ref{COR:red} 
are pairwise incomparable in power (for an overview see 
Figure~\ref{FIG:inter}\subref{FIG:inter:rt}).

\begin{example} 
\label{EX:fail:rtd}
Consider the TRS $\RR$ consisting of the following rules
\begin{xalignat*}{3}
\m{f}(\m{h}(x)) &\to \m{k}(\m{g}(\m{f}(x),x,\m{f}(\m{h}(\m{a})))) &
\m{f}(x) &\to \m{a} &
\m{a} &\to \m{b} \\
\m{k}(x) &\to \m{c} &
\m{b} &\to \bot &
\m{c} &\to \bot
\end{xalignat*}
This TRS has one critical peak (modulo symmetry).
Since $\RRdnd$ is non-terminating, Corollary~\ref{COR:rtd} does not apply.
For Corollary~\ref{COR:rt} observe that $\ST{\RR}$ is terminating using
the interpretation $\m{h_1}_\Nat(x) = x + 1$ and the identify function
for all other function symbols.
To show decreasingness we use the labeling $\ell_\star \times \ell^i_\RL$
with 
${i(\m{f}(x) \to \m{a}) = 1}$
and all other rules receive label $0$.
The critical peak
${t = 
 \m{a} \xl[x,1]{}
 \m{f}(\m{h}(x)) \xr[x,0]{}
 \m{k}(\m{g}(\m{f}(x),x,\m{f}(\m{h}(\m{a})))) = u}$
is closed decreasingly by 
$t 
 \xr[x,0]{} \m{b} 
 \xr[x,0]{} \bot 
 \xl[x,0]{} \m{c}
 \xl[x,0]{} u$.
Corollary~\ref{COR:red} also applies since the polynomial interpretation
with $\m{h}_\Nat(x) = 3x + 1$ and interpreting all other function symbols
by the sum of its arguments establishes termination of $\RR^\red/\RR$.
When taking the identity for $\ell$ in Corollary~\ref{COR:red} the
critical peak
$t = 
 \m{a} \xl[3x+1]{} 
 \m{f}(\m{h}(x)) \xr[3x+1]{} 
 \m{k}(\m{g}(\m{f}(x),x,\m{f}(\m{h}(\m{a})))) 
= u$
can be closed decreasingly by
$t 
 \xr[0]{} \m{b} 
 \xr[0]{} \bot 
 \xl[0]{} \m{c}
 \xl[2x+1]{} u$.
\end{example}

\begin{example}
\label{EX:mot4}
It is easy to adapt the TRS from Example~\ref{EX:mot} such that $\ST{\RR}$
becomes non-terminating. Consider the TRS $\RR$
\begin{xalignat*}{3}
1\colon \m{b} &\to \m{a} &
2\colon \m{a} &\to \m{b} &
3\colon \m{f}(\m{g}(x,\m{a})) &\to \m{g}(\m{f}(x),\m{f}(\m{g}(x,\m{c})))
\end{xalignat*}
for which termination of $\RRdnd$ 
is proved by LPO with precedence $\m{f} > \m{g}$ and
$\m{a} \sim \m{b} > \m{c}$.
Corollary~\ref{COR:rtd} applies since the rule labeling establishes
decreasingness of the critical peak 
$t = \m{f}(\m{g}(x,\m{b})) 
 \xl[2]{} \m{f}(\m{g}(x,\m{a}))
 \xr[3]{} \m{g}(\m{f}(x),\m{f}(\m{g}(x,\m{c}))) = u$
by the join
$t \xr[1]{} \m{f}(\m{g}(x,\m{a})) \xr[3]{} u$. 
Note that 
$\m{f_1}(\m{g_1}(x)) \to \m{g_2}(\m{f_1}(\m{g_1}(x))) \in \STG{\RR}$
is non-terminating and hence Corollary~\ref{COR:rt} does not
apply.%
\footnote{We remark that it is easy to extend this example such that also
$\STT{\RR}$ is non-terminating; just consider the rule
$\m{f}(\m{g}(x,\m{a})) \to
\m{g}(\m{f}(x),\m{g}(\m{f}(\m{g}(x,\m{c}),\m{f}(\m{g}(x,\m{c})))))$.}
For Corollary~\ref{COR:red} the (above) termination proof establishes
termination of~$\RR^\red/\RR$ and $\ell_\red$ in combination with the
rule labeling (taking rule numbers as labels) labels the critical peak 
$t = \m{f}(\m{g}(x,\m{b})) 
 \xl[\m{a},2]{} \m{f}(\m{g}(x,\m{a}))
 \xr[\m{f}(\m{g}(x,\m{a})),3]{} \m{g}(\m{f}(x),\m{f}(\m{g}(x,\m{c}))) = u$
decreasingly since
$t \xr[\m{b},1]{} \m{f}(\m{g}(x,\m{a}))
 \xr[\m{f}(\m{g}(x,\m{a}),3]{} u$. 
\end{example}

\begin{example}
\label{EX:mot5}
Consider the TRS consisting of the rules
\begin{xalignat*}{2}
\m{a}(\m{a}(\m{c})) &\to \m{a}(\m{b}(\m{a}(\m{c}))) &
\m{b}(x) &\to \m{h}(x,x)
\end{xalignat*}
The TRS~$\RR$ has no critical peaks and is terminating by
the following matrix interpretation over $\Nat^2$:
\begin{xalignat*}{2}
\m{a}_{\Nat^2}(\vec x) &=
 \left(\begin{matrix} 1 & 1 \\ 1 & 2 \end{matrix}\right) \vec x +
 \left(\begin{matrix} 0 \\ 3 \end{matrix}\right) &
\m{h}_{\Nat^2}(\vec x, \vec y) &=
 \left(\begin{matrix} 1 & 0 \\ 0 & 0 \end{matrix}\right) \vec x +
 \left(\begin{matrix} 1 & 0 \\ 0 & 0 \end{matrix}\right) \vec y \\
\m{b}_{\Nat^2}(\vec x) &=
 \left(\begin{matrix} 2 & 0 \\ 0 & 0 \end{matrix}\right) \vec x +
 \left(\begin{matrix} 2 \\ 0 \end{matrix}\right) &
\m{c}_{\Nat^2} &=
 \left(\begin{matrix} 0 \\ 0 \end{matrix}\right)
\end{xalignat*}
Hence also $\RRdnd$ is terminating, and by Corollary~\ref{COR:rtd}
the TRS~\RR is confluent.
Corollary~\ref{COR:rt} also applies since $\ST{\RR}$ is terminating.
The derivation $\m{a}(\m{a}(\m{c})) \to \m{a}(\m{b}(\m{a}(\m{c})))
\to_{\RR^\red} \m{a}(\m{a}(\m{c})) \to \cdots$ shows that
$\RR^\red/\RR$ is
non-terminating, so Corollary~\ref{COR:red} does not apply.
\end{example}

Note that any simple monotone reduction pair showing termination of
$\RRdnd$ will also establish termination of $\RR^\red/\RR$,
because if $l \to x \in \RR^\red$ then there is a rule $l \to r \in \RRd$
that duplicates $x$, whence $l > r \geqslant x$.
Hence it is no
surprise that Example~\ref{EX:mot5} used a matrix interpretation of
dimension~2.

\medskip 
Furthermore, the results on relative termination are incomparable with
those on persistence and those based on parallel rewriting.
To this end observe that the first rule of
Example~\ref{EX:pll} violates all preconditions of 
Corollaries~\ref{COR:rtd}, \ref{COR:rt}, and~\ref{COR:red}
but Theorems~\ref{THM:pll} and~\ref{thm-main-basic} apply. 
Note that Theorem~\ref{THM:pll} based on
arbitrary weak LL-labelings subsumes Corollaries~\ref{COR:rtd} 
and~\ref{COR:rt}, since they produce LL-labelings which may be used
to close problematic variable peaks decreasingly even without persistence.
However, if restricted to the rule labeling the following TRS cannot be
handled using persistence while each of the Corollaries~\ref{COR:rtd},
\ref{COR:rt}, and~\ref{COR:red} as well as Theorem~\ref{thm-main-basic} 
succeeds.
\begin{example}
\label{EX:fail:per}
Consider the TRS consisting of the rules
\begin{xalignat*}{2}
1\colon~\m{f}(x,y,\m{a}) &\to \m{f}(x,x,\m{b}) &
2\colon~\m{f}(\m{f}(x,y,\m{b}),z,\m{c}) &\to x
\end{xalignat*}
which is orthogonal. Since a most general sort assignment cannot exclude
variable overlaps of the first rule with itself, Theorem~\ref{THM:pll}
can only succeed when used in combination with an LL-labeling. Note that
all preconditions for Corollaries~\ref{COR:rtd},~\ref{COR:rt},
and~\ref{COR:red}
are satisfied and due to the lack of critical overlaps they are decreasing.
For the same reason Theorem~\ref{thm-main-basic} applies.
\end{example}

The final example shows that Theorem~\ref{thm-main-basic} does not
subsume the plain version for linear TRSs (because of the variable
condition).

\begin{example}
\label{EX:mot7}
Consider the linear TRS consisting of the single rule
\begin{xalignat*}{1}
(x + y) + z &\to (z + y) + x
\end{xalignat*}
Note that all steps are labeled the same, because they use the same rule.
There is only one (parallel) critical peak,
$((z+y)+x)+u \xl{} ((x+y)+z)+u \xr{} (u+z)+(x+y)$, which may
be joined as $((z+y)+x)+u \xr{} ((x+y)+z)+u \xl{} (u+z)+(x+y)$.
Confluence of $\RR$ can be established by Theorem~\ref{THM:l} using
the rule labeling from Lemma~\ref{LEM:lrl}.
On the other hand, trying to use Theorem~\ref{thm-main-basic} fails
for this joining sequence, because
$\Var(((z+y)+x)+u) \not\subseteq \Var((z+y)+x)$. All other ways
of joining the critical peak fail to be decreasing because they
require more than one parallel rewrite step from $((z+y)+x)+u$ or
$(u+z)+(x+y)$, e.g.\ $((z+y)+x)+u \xr{} ((x+y)+z)+y \xr{} (y+z)+(x+y)$.
\end{example}

\subsection{Related work}
\label{ASS:other}

In this section we relate our results to~\cite{HM11,A10}.

To compare our setting with the main result from~\cite{HM11}
we define the \emph{critical pair steps}
$\CPS(\RR) = \{ s \to t, s \to u \mid
\text{$t \cps{s} u$ is a critical peak of $\RR$} \}$.
Furthermore let $\CPS'(\RR)$ be the critical pair steps which do not give
rise to trivial critical pairs.

\begin{theorem}[{\rm\cite[Theorem~3]{HM11}\sf}]
\label{THM:HM11_3}
A left-linear locally confluent TRS $\RR$ is confluent if
$\REL{\CPS'(\RR)}{\RR}$ is terminating.
\end{theorem}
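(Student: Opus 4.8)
The plan is to establish confluence via the decreasing diagrams theorem (Theorem~\ref{THM:dd}), applied not to $\to_\RR$ but to the \emph{parallel} rewrite relation $\parto_\RR$, which has the same reflexive-transitive closure and hence the same confluence behaviour. The labeling is the source labeling lifted to parallel steps: take $\lab(s \xr{\pi} t) = s$ with the preorder ${\to_\RR^*}$ and the well-founded order induced by $\REL{\CPS'(\RR)}{\RR}$. This is the labeling $\ell^{\SS}_{\SN}$ of Lemma~\ref{LEM:lsn} with $\SS := \CPS'(\RR)$ — legitimate because critical-pair steps are rewrite steps, so ${\to_{\CPS'(\RR)}} \subseteq {\to_\RR}$, while $\REL{\CPS'(\RR)}{\RR}$ terminates by hypothesis — and hence a monotone, stable, weak LL-labeling by Remark~\ref{REM:lvswll}. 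Lifted to parallel steps via Definition~\ref{DEF:plab} it assigns the singleton $\{s\}$ to every non-empty parallel step out of $s$, so that a peak below a fixed term $s$ has both labels equal to $\{s\}$ and being decreasing just means: every step in the joining sequences has a source strictly below $s$, save at most one ``facing'' step per side whose source need only be a $\to_\RR$-reduct of $s$.

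Next I would verify that every local peak $t \xL{P} s \xR{Q} u$ of $\parto_\RR$ is decreasing, following the case split in the proof of Theorem~\ref{thm-main-basic}. If $P \perp Q$, the peak closes in a diamond $t \xR{} v \xL{} u$ by Lemma~\ref{LEM:pmanip}(\ref{LEM:pmanip1}),(\ref{LEM:pmanip2}); since $s \to_\RR^* t$ and $s \to_\RR^* u$, the two closing steps have labels in $\Veq\{s\}$, so the peak is decreasing. This single observation subsumes \emph{all} variable overlaps of the underlying single-step system, including the duplicating ones — which is exactly the point at which the step from linear to left-linear TRSs is made — because on parallel steps a redex nested in a variable position is orthogonal to the enclosing redex and all its copies get contracted in one parallel step. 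If $P$ and $Q$ overlap, restrict to minimal redex positions as in the proof of Theorem~\ref{thm-main-basic} to reduce to the case $Q = \{\pi\}$ with $p_\pi = \epsilon$, and apply Lemma~\ref{LEM:pcp}: the peak is, modulo a substitution step $\sigma \parto \sigma'$, an instance of a parallel critical peak $t' \xL{P'} s' \xr{} u'$. Both diverging steps of such a critical peak are by construction critical-pair steps of $\RR$; if the associated critical pair is trivial the peak closes with empty sequences, and otherwise both steps lie in ${\to_{\CPS'(\RR)}}$, so $t'$, $u'$ and everything reachable from them are strictly below $s'$. Local confluence of $\RR$ then supplies a common reduct whose joining sequences stay strictly below $s'$, which is a decreasing diagram; instantiating by $\sigma$, $\sigma'$, re-embedding into the surrounding context (monotonicity and stability of $\lab$), and merging the contributions of the various minimal positions via Lemma~\ref{LEM:pmerge} preserves decreasingness.

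The main obstacle is the last step of the overlapping case when $|P'| > 1$: local confluence only closes \emph{single}-step peaks, whereas a proper parallel critical peak $t' \xL{P'} s' \xr{} u'$ requires joining a parallel reduct of $s'$ with a single reduct. I expect to handle this by a nested well-founded induction along the order induced by $\REL{\CPS'(\RR)}{\RR}$ — available precisely because that relative TRS terminates — noting that contracting one redex of $P'$ that overlaps the root rule with a non-trivial critical pair lands at a term strictly below $s'$, where the induction hypothesis applies, and recombining the pieces with Lemmata~\ref{LEM:pmanip} and~\ref{LEM:pmerge}. The variable side condition $\Var(v|_Q) \subseteq \Var(s|_P)$ occurring in Theorem~\ref{thm-main-basic} causes no trouble here, since with the source labeling the closing parallel step out of $v$ is just the non-duplicating step furnished by local confluence; and the degenerate sub-case in which \emph{all} single overlaps underlying a parallel critical peak give rise to trivial critical pairs is dispatched by checking directly that the parallel critical pair is then trivial too.
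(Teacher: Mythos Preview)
The paper does not prove Theorem~\ref{THM:HM11_3}; it is quoted from~\cite{HM11}. What the paper offers instead is Corollary~\ref{COR:pcp}, obtained from Theorem~\ref{thm-main-basic} with the source labeling $\ell^{\PCPS'(\RR)}_\SN$, plus the remark that Corollary~\ref{COR:pcp} subsumes Theorem~\ref{THM:HM11_3}: termination of $\REL{\PCPS'(\RR)}{\RR}$ and of $\REL{\CPS'(\RR)}{\RR}$ coincide, and joinability of parallel critical pairs is necessary for the confluence Theorem~\ref{THM:HM11_3} already delivers. That is a relative-power argument, not an independent derivation.

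Your proposal is therefore doing more than the paper: you give a self-contained proof inside the framework of Section~\ref{LAB:par:main}, rerunning the argument of Theorem~\ref{thm-main-basic} with $\ell^{\CPS'(\RR)}_\SN$ and filling precisely the gap the paper's subsumption remark glosses over, namely deriving joinability of parallel critical peaks from mere local confluence. Your diagnosis of the obstacle is right and so is the remedy. The one imprecision is that you do not state what the well-founded induction proves. The clean formulation is to show, by induction on $s$ along $\to^+_{\REL{\CPS'(\RR)}{\RR}}$, that $\to_\RR$ is confluent \emph{from~$s$}; in the inductive step one shows every local $\parto$-peak at a reduct $s_0$ of $s$ is decreasing. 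The only non-routine case is a non-trivial parallel critical overlap at $s_0$: choose $\pi \in P'$ whose single overlap with the root step is non-trivial, set $s_1 := s_0^\pi$, so $s \to_\RR^* s_0 \to_{\CPS'(\RR)} s_1$ and hence $s_1 < s$; local confluence joins the \emph{ordinary} peak $s_1 \from s_0 \to u_0$ at some $w$, the induction hypothesis (confluence from $s_1$) joins $t_0$ with $w$ since $s_1 \to_\RR^* t_0$, and all terms involved lie strictly below $s_0$, giving a decreasing diagram. With this formulation the appeal to Lemmata~\ref{LEM:pmanip} and~\ref{LEM:pmerge} for the ``recombination'' is unnecessary. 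Your handling of the degenerate sub-case (all single overlaps trivial implies the parallel critical pair is trivial) and of the variable side condition (take $Q = \varnothing$, exactly as in the proof of Corollary~\ref{COR:pcp}) is correct.
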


Using the weak LL-labeling $\ell^{\PCPS'(\RR)}_\SN$, from
Theorem~\ref{thm-main-basic} we obtain the following corollary.
Here $\PCPS'(\RR)$ are the parallel critical pair steps which do not give
rise to trivial parallel critical pairs.

\begin{corollary}
\label{COR:pcp}
A left-linear TRS $\RR$  whose parallel critical pairs are joinable
is confluent if
$\REL{\PCPS'(\RR)}{\RR}$ is terminating.
\end{corollary}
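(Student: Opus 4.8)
The plan is to apply Theorem~\ref{thm-main-basic} with the weak LL-labeling $\ell^{\PCPS'(\RR)}_\SN$, where we label a rewrite step by its source term, compare with $\to_\RR^*$, and use $\to^+_{\REL{\PCPS'(\RR)}{\RR}}$ as the strict order. First I would verify that this is indeed a weak LL-labeling: by Lemma~\ref{LEM:lsn} (with $\SS = \PCPS'(\RR)$, noting $\to_{\PCPS'(\RR)} \subseteq \to_\RR$ since parallel critical pair steps are $\RR$-steps) we get an L-labeling provided $\REL{\PCPS'(\RR)}{\RR}$ is terminating, and by Remark~\ref{REM:lvswll} every such source-based L-labeling is a weak LL-labeling. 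Here one must be slightly careful: Lemma~\ref{LEM:lsn} is stated for the relation $\SR$ with $\SS$ a TRS, so I would either observe that $\PCPS'(\RR)$ can be treated as a (possibly infinite, but that is harmless for the labeling argument) rewrite system, or invoke the $\ST{\cdot}$-free version directly; the point is only that $\to_\RR^*$ and $\to^+_{\REL{\PCPS'(\RR)}{\RR}}$ are compatible, the latter is well-founded by hypothesis, both are closed under contexts and substitutions, and $\alpha = \beta \geqslant \gamma,\delta$ in Figures~\ref{FIG:lpeaks}\subref{FIG:p} and~\ref{FIG:lpeaks}\subref{FIG:vll} because the labels are sources and $s \to_\RR t, s \to_\RR u$ for a local peak.

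Next I would check the joinability condition of Theorem~\ref{thm-main-basic} for each parallel critical peak $t \xL[\Gamma]{P} s \xr[\Delta]{} u$. Since the parallel critical pairs of $\RR$ are joinable by assumption, there is a common reduct $v$ with $t \to^*_\RR v {}^*_\RR\!\!\leftarrow u$. The labeling is designed precisely so that such an arbitrary join becomes decreasing: any step $s' \to_\RR t'$ occurring in the reduction from $s$ satisfies $\ell^{\PCPS'(\RR)}_\SN(s \xr{} s') \geqslant \ell^{\PCPS'(\RR)}_\SN(s' \to t')$ when $s \to_\RR^* s'$, so every step below the peak is labeled $\Vee\Gamma$ or $\Vee\Delta$ (the peak step $s \to u$ contributes a strict decrease because $s \to u$ is a $\PCPS'(\RR)$-step, hence $\Gamma = \{s\} > \{u\} \ni \beta$ and more generally $\Gamma > $ every label reachable from $u$). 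Thus the join $t \to^*_\RR v {}^*_\RR\!\!\leftarrow u$ can be written in the shape required by Theorem~\ref{thm-main-basic}, with the middle $\xR[\Veq\Delta]{Q}$ step taken empty ($Q = \varnothing$), so that the side condition $\Var(v|_Q) \subseteq \Var(s|_P)$ is vacuous. More carefully: split the reduction from $t$ as $t \to^*_\RR v$ and from $u$ as $u \to^*_\RR v$; the first step $s \to u$ is strictly below $\Gamma$, every further step from $u$ is $\Vee\Delta$, and every step from $t$ is $\Vee\Gamma$; choosing the trivial parallel step $v \xR[\varnothing]{\varnothing} v$ in the middle fits the scheme $t \xr[\Vee\Gamma]{*} \cdot \xR[\Veq\Delta]{} \cdot \xr[\Vee\Gamma\Delta]{*} \cdot \xl[\Vee\Gamma\Delta]{*} v \xL[\Veq\Gamma]{Q} \cdot \xl[\Vee\Delta]{*} u$.

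The only genuine subtlety — and the step I expect to need the most care — is the interplay between ``parallel critical pairs'' and ``trivial parallel critical pairs'' in the definition of $\PCPS'(\RR)$, and making sure the labeling argument still goes through when some parallel critical pairs are trivial. For a trivial parallel critical peak $t \xL[\Gamma]{P} s \xr[\Delta]{} u$ we have $t = u$, and the corresponding steps are not in $\PCPS'(\RR)$, so we only get $\Gamma, \Delta \geqslant$ (not $>$) the labels of subsequent steps; but then $t = u$ already, so the empty join trivially has the required shape. For a nontrivial parallel critical peak, by construction the underlying root step $s \to u'\sigma = u$ arises from an actual critical overlap, hence lies in $\PCPS'(\RR)$ up to context and substitution, giving the strict decrease $\Gamma >_H$ on everything reachable from $u$, as needed. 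I would therefore phrase the proof as: ``By Lemma~\ref{LEM:lsn} (applied with $\SS = \PCPS'(\RR)$) and Remark~\ref{REM:lvswll}, $\ell^{\PCPS'(\RR)}_\SN$ is a weak LL-labeling under the stated termination hypothesis. Given a parallel critical peak $t \xL[\Gamma]{P} s \xr[\Delta]{} u$, joinability of parallel critical pairs yields $v$ with $t \to^*_\RR v {}^*_\RR\!\!\leftarrow u$; since all labels along $t \to^*_\RR v$ are $\leqslant_H \Gamma$ and all labels along $u \to^*_\RR v$ are $<_H \Gamma$ (first step) or $\leqslant_H \Delta$, this join has the shape demanded by Theorem~\ref{thm-main-basic} with empty middle parallel step, so the variable condition holds vacuously. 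Confluence follows.''
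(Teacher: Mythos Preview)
Your approach is essentially the paper's: use the weak LL-labeling $\ell^{\PCPS'(\RR)}_\SN$, observe that for non-trivial parallel critical peaks the whole join lies strictly below the peak labels, and take $Q=\varnothing$ so the variable condition of Theorem~\ref{thm-main-basic} is vacuous. The paper's proof is just the terse version of what you wrote.

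One small point to tighten: in your summary you write that ``all labels along $t \to^*_\RR v$ are $\leqslant_H \Gamma$'', but you actually need (and earlier claim) $\Vee\Gamma$, i.e.\ strictly below. The missing justification is symmetric to the one you give for the $u$-side: for a non-trivial parallel critical peak, $s \to t$ is itself a $\PCPS'(\RR)$-step (the rule $s \to t$ lies in $\PCPS'(\RR)$ by definition), so $s > t$ in the labeling order, and hence every term reachable from $t$ has label strictly below $\Gamma = \{s\}$. With that observation in place your argument goes through exactly as the paper's does.
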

\begin{proof}
We need to show that the relative termination assumption eliminates
the variable condition in Theorem~\ref{thm-main-basic}.
If $\REL{\PCPS'(\RR)}{\RR}$ is terminating then for any (non-trivial)
parallel 
critical peak $t \xL[\Gamma]{P} s \xr[\Delta]{} u$ we obtain
$t \xr[\Vee\Gamma]{*} \cdot \xl[\Vee\Delta]{*} u$, hence $Q$ can be chosen
to be empty and $\varnothing = \Var(v|_\varnothing) \subseteq \Var(s|_P)$
trivially holds.%
\footnote{The condition that
$\{s \to t \mid \text{$u \cps{s} t$ is a critical pair}\}/\RR$
is terminating also eliminates the variable condition.}
\qed
\end{proof}

We stress that despite the fact that the preconditions in
Corollary~\ref{COR:pcp}
require more (implementation) effort to check
than those in Theorem~\ref{THM:HM11_3}, 
in theory Corollary~\ref{COR:pcp} subsumes
Theorem~\ref{THM:HM11_3}.
To this end observe that termination of $\PCPS'(\RR)/\RR$ is equivalent
to termination of $\CPS'(\RR)/\RR$. Furthermore 
joinability of the parallel critical pairs is
a necessary condition for confluence just as local confluence is.

Due to the flexibility of the $\ell^\SS_\SN$ labeling we can also choose
$\SS$ to be (a subset of)
the \emph{critical diagram steps}
$\CDS(\RR) = \{ s \to t_i, s \to u_j \mid {}$
$t_0 \cps{s} u_0$ is a critical peak in $\RR$,
$t_0 \xr{*} t_n = u_m \xl{*} u_0$,
$0 \leqslant i \leqslant n$, and
$0 \leqslant j \leqslant m \}$.
Using $\CDS(\RR)$ allows to detect a possible decrease also somewhere in
the joining part of the diagrams.\footnote{%
In~\cite{ZFM11} we employed the strictly weaker system where all steps of
the join (e.g., $t_i \to t_{i+1}$) are used whereas here we use
$s \to t_{i+1}$.}
This incorporates (and generalizes) the idea of critical
valleys~\cite{vO12b}.
However, we remark that our setting does not (yet) follow another recent
trend, i.e., to drop development closed critical pairs
(see \cite{vO12b,HM13}). We leave this for future work.

Next we show that Corollary~\ref{COR:rt} generalizes the results from 
\cite[Sections~5 and~6]{A10}.
It is not difficult to see that the encoding presented
in~\cite[Theorem~5.4]{A10} can be mimicked by Corollary~\ref{COR:rt} where
linear polynomial interpretations over $\Nat$
of the shape as in (1)
\begin{xalignat*}{2}
(1) \quad {f_i}_\Nat(x) &= x + c_f &
(2) \quad {f_i}_\Nat(x) &= x + c_{f_i}
\end{xalignat*}
are used to prove termination of $\ST{\RR}$ and
$\ell_\star \times \ell_\RL$ is employed to show LL-decreasingness of the
critical peaks. In contrast to~\cite[Theorem~5.4]{A10}, which explicitly
encodes these constraints in a single formula of linear arithmetic, our
abstract formulation has the following advantages.
First, we do not restrict
to weight functions but allow powerful machinery for proving relative
termination and second our approach allows to combine arbitrarily many
labelings lexicographically (cf.\ Lemma~\ref{LEM:lllex}). Furthermore we
stress that our abstract treatment of $\ST{\RR}$ allows to implement
Corollary~\ref{COR:rt} based on $\STT{\RR}$
(cf.\ Section~\ref{IMP:main}) which admits further gains in power
(cf.\ Example~\ref{EX:OO03} as well as Section~\ref{EXP:main}).

The idea of the extension presented in~\cite[Example~6.1]{A10} amounts
to using ${\ell_\RL \times \ell_\star}$ instead of
${\ell_\star \times \ell_\RL}$, which is an application of
Lemma~\ref{LEM:lllex} in our setting. Finally, the extension discussed
in~\cite[Example~6.3]{A10} suggests to use linear polynomial
interpretations over $\Nat$
of the shape as in (2)
to prove termination of $\ST{\RR}$. Note that these interpretations are
still weight functions. This explains why the approach
from~\cite{A10} fails to establish confluence of the TRSs in
Examples~\ref{EX:mot} and~\ref{EX:mot2} since a weight function cannot
show termination of the rules
$\m{f_1}(\m{g_1}(x)) \to \m{g_1}(\m{f_1}(x))$ and
$\m{f_1}(\m{h_1}(x)) \to \m{h_1}(\m{g_1}(\m{f_1}(x)))$,
respectively.

Note that both recent approaches~\cite{A10,HM11} based on decreasing
diagrams fail to prove the TRS $\RR$ from Example~\ref{EX:OO03} confluent.
The former can, e.g., not cope with the non-terminating rule
$\m{\times_1}(x) \to \m{+_0}(\m{\times_1}(x))$ in $\STG{\RR}$
(cf.\ Example~\ref{EX:OO03sstar}) while
overlaps with the non-terminating rule ${x + y \to y + x} \in \RR$ prevent
the latter approach from succeeding. 
In contrast, Examples~\ref{EX:OO03d} and~\ref{EX:OO03sstar} give two
confluence proofs based on our setting.

\section{Implementation}
\label{IMP:main}

In this section we sketch how the results from this article can be
implemented.

Before decreasingness of critical peaks can be investigated, the critical
pairs must be shown to be convergent. For a critical pair $t \cp u$ in
our implementation we consider all joining sequences such that 
$t \to^{\leqslant n} \cdot \xl{\leqslant n} u$ and there is no
smaller $n$ that admits a common reduct. 
While in theory longer joining sequences might be easier to label 
decreasingly, preliminary experiments revealed that the effort
due to the consideration of
additional diagrams decreased performance.

To exploit the possibility for incremental confluence proofs by
lexicographically combining labels (cf.\ Lemmata~\ref{LEM:llex}
and~\ref{LEM:lllex}) our implementation 
considers lists of labels.
The search for relative termination proofs (and thus the labelings) is
implemented by encoding the constraints in non-linear (integer) arithmetic.
Below we describe how we combine existing labels (some partial progress)
with
the search for a new labeling to show the critical peaks decreasing.
Note that labelings use different domains (natural numbers, terms), and,
even worse, different orders (matrix interpretations, LPO, etc.). 
The crucial observation for incremental labeling is that neither the
actual labels nor the precise order on the labels have to be recorded but
only how the labels in the join relate to the labels from the peak.
We use the following encoding. Let the local peak have labels
$t \xl[\alpha]{} s \xr[\beta]{} u$. Then a step $v \xr[\gamma]{} w$
is labeled by the pair $(\labvar_\alpha,\labvar_\beta)$ where
$\labvar_\alpha$ and $\labvar_\beta$ indicates if
$\alpha \mathrel{\labvar_\alpha} \gamma$ and
$\beta \mathrel{\labvar_\beta} \gamma$, respectively.
Here $\{\labvar_\alpha,\labvar_\beta\} \subseteq \{\labgt, \labge, \? \}$
and $\?$ means that the labels are incomparable, e.g., 
$\m{f}(x) \? \m{g}(y)$ in LPO or $2x + 1 \? x + 2$ for (matrix)
interpretations.%
\footnote{Our previous implementation (reported in~\cite{ZFM11}) had a bug,
as it did not track incomparable labels properly.}
Decreasingness as depicted in
Figure~\ref{FIG:dec}\subref{FIG:dec:orig} can
then be captured by the conditions shown in
Figure~\ref{FIG:dec}\subref{FIG:dec:enc}, where
$\labany$ can be replaced by any symbol. 
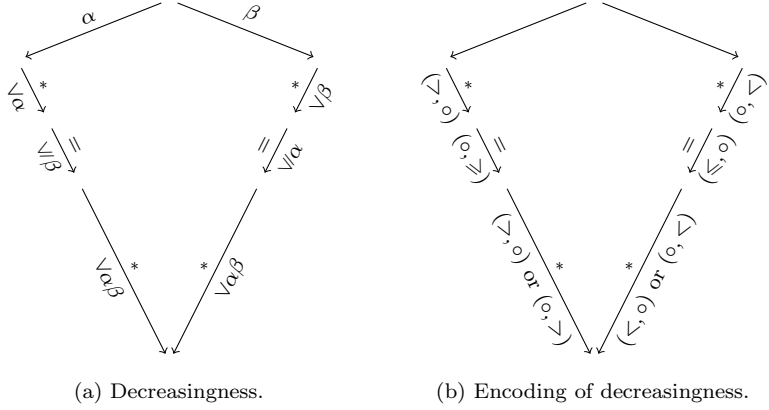
\begin{figure}[t]
\centering
\subfloat[Decreasingness.\label{FIG:dec:orig}]{
\begin{tikzpicture}[xscale=0.8,yscale=0.8]
\node at (0,0)   (1) {};

\node at (-2.5,-1)  (2) {};
\node at ( 2.5,-1)  (3) {};
\node at (-2,-2)  (4) {};
\node at ( 2,-2)  (5) {};

\node at (-1.5,-3)  (6) {};
\node at ( 1.5,-3)  (7) {};
\node at ( 0,-6)   (8) {};

\path[->]  (1) edge node[sloped,above] {$\alpha$} (2); 
\path[->]  (1) edge node[sloped,above] {$\beta$} (3); 

\path[->]  (2) edge node[sloped,below] {$\Vee\alpha$}
                     node[sloped,above] {$\scriptstyle *$} (4); 
\path[->]   (4) edge node[sloped,below] {$\Veq\beta$}
 node[sloped,above] {$=$} (6); 
\path[->]  (6) edge node[sloped,below] {$\Vee\alpha\beta$}
                     node[sloped,above] {$\scriptstyle *$} (8); 

\path[->]  (3) edge node[sloped,below] {$\Vee{\beta}$}
                     node[sloped,above] {$\scriptstyle *$} (5); 
\path[->]   (5) edge node[sloped,below] {$\Veq\alpha$}
 node[sloped,above] {$=$} (7); 
\path[->]  (7) edge node[sloped,below] {$\Vee\alpha\beta$}
                     node[sloped,above] {$\scriptstyle *$} (8); 
\end{tikzpicture}
}
\qquad
\subfloat[Encoding of decreasingness.\label{FIG:dec:enc}]{
\begin{tikzpicture}[xscale=0.8,yscale=0.8]
\node at (0,0)   (1) {};

\node at (-2.5,-1)  (2) {};
\node at ( 2.5,-1)  (3) {};
\node at (-2,-2)  (4) {};
\node at ( 2,-2)  (5) {};

\node at (-1.5,-3)  (6) {};
\node at ( 1.5,-3)  (7) {};
\node at ( 0,-6)   (8) {};

\path[->]  (1) edge node[sloped,above] {} (2); 
\path[->]  (1) edge node[sloped,above] {} (3); 

\path[->]  (2) edge node[sloped,above] {$\scriptstyle *$}
                     node[sloped,below] {$(\labgt,\labany)$} (4); 
\path[->]   (4) edge node[sloped,below] {$(\labany,\labge)$}
 node[sloped,above] {$=$} (6); 
\path[->]  (6) edge node[sloped,above] {$\scriptstyle *$}
                     node[sloped,below]
 {$(\labgt,\labany)\text{ or }(\labany,\labgt)$} (8); 

\path[->]  (3) edge node[sloped,above] {$\scriptstyle *$}
                     node[sloped,below] {$(\labany,\lablt)$} (5); 
\path[->]   (5) edge node[sloped,below] {$(\lable,\labany)$}
 node[sloped,above] {$=$} (7); 
\path[->]  (7) edge node[sloped,above] {$\scriptstyle *$}
                     node[sloped,below]
 {$(\lablt,\labany)\text{ or }(\labany,\lablt)$} (8); 
\end{tikzpicture}
}
\caption{Encoding the order on the labels.}
\label{FIG:dec}
\end{figure}

It is straightforward to implement Corollary~\ref{COR:rtd}. After
establishing termination of $\RRdnd$ (e.g., by an external termination
prover) any weak LL-labeling can be tried to show the critical peaks
decreasing. In \cite{A10,HM11} it is shown how the rule labeling can
be implemented by encoding the constraints in linear arithmetic.
Note that when using weak LL-labelings the implementation does
not have to test condition~\ref{DEF:ll2} in Definition~\ref{DEF:wll}
since this property is intrinsic to weak LL-labelings.

We sketch how to implement the labeling
$\ell^\SS_\SN$ from Lemma~\ref{LEM:lsn} as a relative termination problem.
First we fix a suitable set $\SS$, i.e., the critical diagram steps (see
Section~\ref{ASS:main}).
Facing the relative termination problem $\REL{\SS}{\RR}$ we try to
simplify it according to Theorem~\ref{THM:rt} into some
$\REL{\SS'}{\RR'}$. Note that it is not necessary to finish
the proof. By Theorem~\ref{THM:rt} the relative TRS 
$\REL{(\SS \setminus \SS')}{\RR}$ is terminating and hence
by Lemma~\ref{LEM:lsn} $\ell^{\SS \setminus \SS'}_\SN$ is an
L-labeling. Let ${\geqslant} = {\to^*_\RR}$ and 
${>} = {\to^+_\REL{(\SS \setminus \SS')}{\RR}}$.
Since $\geqslant$ and $>$ can never increase by rewriting, 
it suffices to exploit the first decrease with respect to $>$.
Consider a rewrite sequence
$v_1 \to_\RR v_2 \to_\RR \cdots \to_\RR v_l$.
Take the smallest $k$ such that
$v_1 \to v_{k+1} \in \SS$ but
$v_1 \to v_{k+1} \notin \SS'$.
Then $v_i \to_{(\labge,\labge)} v_{i+1}$ for $1 \leqslant i \leqslant k$
and $v_i \to_{(\labgt,\labgt)} v_{i+1}$ for $k < i < l$.
If no such $k$ exists set 
$v_i \to_{(\labge,\labge)} v_{i+1}$ for $1 \leqslant i < l$. 
We demonstrate the above idea on an example.

\begin{example}
\label{EX:imp}
Consider the following TRS $\RR$ from 
\cite{AYT09}:
\begin{xalignat*}{4}
\m{I}(x) &\to \m{I}(\m{J}(x)) &
\m{J}(x) &\to \m{J}(\m{K}(\m{J}(x))) &
\m{H}(\m{I}(x)) &\to \m{K}(\m{J}(x)) &
\m{J}(x) &\to \m{K}(\m{J}(x))
\end{xalignat*}
We show how the critical peak
$\m{H}(\m{I}(\m{J}(x))) \xl[]{} \m{H}(\m{I}(x))
\to \m{K}(\m{J}(x))$
can be closed decreasingly
$\m{H}(\m{I}(\m{J}(x))) \to_{(\labge,\labge)} 
\m{K}(\m{J}(\m{J}(x))) \to_{(\labgt,\labgt)}
\m{K}(\m{J}(\m{K}(\m{J}(x)))) \xl[(\lable,\lable)]{}
\m{K}(\m{J}(x))$
by $\ell^\SS_\SN$.
Let $\SS$ be the TRS consisting of the critical diagram steps
from the above diagram, i.e., 
\begin{xalignat*}{2}
\m{H}(\m{I}(x)) &\to \m{H}(\m{I}(\m{J}(x))) &
\m{H}(\m{I}(x)) &\to \m{K}(\m{J}(\m{J}(x))) \\
\m{H}(\m{I}(x)) &\to \m{K}(\m{J}(x)) &
\m{H}(\m{I}(x)) &\to \m{K}(\m{J}(\m{K}(\m{J}(x))))
\end{xalignat*}
The interpretation
$\m{H}_\Nat(x) = \m{J}_\Nat(x) = \m{K}_\Nat(x) = x$ and
$\m{I}_\Nat(x) = x+1$ allows to
``simplify'' termination of the problem $\REL{\SS}{\RR}$ according to
Theorem~\ref{THM:rt}. Since the rules that reduce the number of $\m{I}'s$
are dropped from $\SS$ (and $\RR$), those rules admit a decrease in the
labeling.
\end{example}

The abstraction works similarly for the 
labelings~$\ell_\star$ and~$\ell_\red$ from Lemmata~\ref{LEM:star}
and~\ref{LEM:red}, respectively.

Finally, we explain why $\STT{\RR}$ need not be computed explicitly to
implement Corollary~\ref{COR:rt} with the labeling from
Lemma~\ref{LEM:sstar}.
The idea is to start with $\ST{\RR}$ and incrementally prove termination
of $\REL{\STG{\RR}}{\STE{\RR}}$ until some $\REL{\SS_1}{\SS_2}$ is
reached.
If all left-hand sides in $\SS_1$ are distinct then they must have been
derived from different combinations $(l,x)$ with $l \to r \in \RR$ and
$x \in \Var(l)$.\footnote{%
When computing $\ST{\RR}$ the implementation renames 
variables such that $(\ell,x)$ uniquely identifies a rule $\ell \to r$.}
Hence they are exactly those rules which should be
placed in $\STE{\RR}$.
We show the idea by means of an example.

\begin{example}
We revisit Example~\ref{EX:OO03} and try to prove termination
of $\ST{\RR}$. By an application of Theorem~\ref{THM:rt} with the
interpretation given in Example~\ref{EX:OO03sstar} the problem is
termination equivalent to $\REL{\RR_\dagger}{\STE{\RR}}$. By another
application of
Theorem~\ref{THM:rt} the same proof can be used to show termination of
$\REL{(\STG{\RR} \setminus \STD{\RR})}{(\STE{\RR} \cup \STD{\RR})}$
which is a suitable candidate for $\STT{\RR}$ since the rules
in $\STD{\RR}$ have different left-hand sides.
\end{example}

We have also implemented Theorems~\ref{THM:pl} and~\ref{THM:pll}.
The requirements of Theorem~\ref{THM:pl} can be checked effectively
by the following characterization of
$t \in \x{T}_{\trianglelefteq\,\alpha}(\x{F},\x{V})$:

\begin{remark}
\label{REM:testnest}
The condition $t \in \x{T}_{\trianglelefteq\,\alpha}(\x{F},\x{V})$ holds
if and only if $t$ is $S$-sorted and
$S(t) \mathrel{({\leq} \cup {\triangleleft_1})}^* \alpha$, where
the relation $\triangleleft_1$ on sorts relates argument types to
result types: $S(f,i) \triangleleft_1 S(f)$ for all function
symbols $f \in \x{F}$ of arity $n$ and $1 \leqslant i \leqslant n$.
\end{remark}

We only implemented the simplest case of Theorem~\ref{THM:pll},
where $\ell$ is a rule labeling. First, using Remark~\ref{REM:testnest},
we determine for which rules $l \to r \in \x{R}$, $l' \to r' \in \x{R}$,
it is possible to nest $l' \to r'$ below a duplicating variable of
$l \to r$. 
We add constraints $i(l \to r) > i(l' \to r')$ to our constraint
satisfaction problem for the rule labeling. The hard work is done by
an SMT solver.

To postpone the expensive computation (and labeling) of parallel
critical pairs
as long as possible we implemented Theorem~\ref{thm-main-basic} according
the following lazy approach.
We first find ordinary weak LL-labelings for the critical
diagrams, as described earlier in this section.
Only if confluence cannot be established by considering this weak 
LL-labeling for (non-parallel) critical peaks,
we generate parallel critical peaks together
with joining sequences. Finally, we check whether the weak LL-labeling
joins all resulting diagrams (critical and parallel critical) decreasing
as per Theorem~\ref{thm-main-basic}. This check is also responsible for
combining single steps into a parallel
one for the joining sequence.
We confess that this implementation for Theorem~\ref{thm-main-basic}
is somewhat opportunistic but allows to reuse partial progress (the weak
LL-labeling) while postponing parallel critical pairs as long as possible.

\section{Experiments}
\label{EXP:main}

The results from the article have been implemented and form the core
of the confluence prover \CSI~\cite{ZFM11b}. For experiments\footnote{%
\label{FOO:web}%
Details available from
\url{http://cl-informatik.uibk.ac.at/software/csi/labeling2}.}
using version 0.4 of the tool
we considered the current 276 TRSs in Cops. In the experiments we
focus on the 149 systems which have been referenced from the
confluence literature.
From these systems 92 are left-linear.
Our experiments have been performed on a notebook equipped with an
Intel\textsuperscript\textregistered\xspace
quad core processor {i7-2640M} running at a clock rate of 2.8 GHz and
4 GB of main memory.

For 3 systems not even local confluence could be established within
a time limit of 60 seconds.
All other tests finished within this time limit.

\newcommand{\XX}{\phantom{)}}
\newcommand{\mc}[3]{\multicolumn{#1}{#2}{#3}}
\newcommand{\mcc}[1]{\multicolumn{1}{c@{~~}}{#1}}
\begin{table}
\centering
\renewcommand{\arraystretch}{1.25}
\begin{tabular}{@{}l@{\qquad}%
r@{~~}
r@{~~}
r@{~~}
r@{}
}
\hline
method
& \mcc{\PRE} & \mcc{$\CR(\ell_\RL)$} & \mcc{$\CR(\ell_\SN)$} &
\mc{1}{c@{}}{$\CR$}
\\
\hline
Theorem~\ref{THM:l}
& 69
& 42 & 36 & 44
\\
Theorem~\ref{THM:pl}
& 92
& 46 & 40 & 48
\\
Theorem~\ref{THM:pll}
& 92
& 53 & -- & -- 
\\
Corollary~\ref{COR:rtd}
& 65
& 47 & 40 & 49
\\
Corollary~\ref{COR:rt}$\star$
& 66
& 48 & 41 & 50
\\
Corollary~\ref{COR:rt}$\sstar$
& 69
& 51 & 43 & 53
\\
Corollary~\ref{COR:red}
& 65
& 47 & 41 & 49
\\
\hline
Theorem~\ref{thm-main-basic}
& 92
& 55 & 55 & 57
\\
\hline
\end{tabular}
\\[1.0ex]
\caption{Experimental results for 92 left-linear TRSs.}
\label{TAB:trs}
\end{table}

Table~\ref{TAB:trs} shows an evaluation of the results from this article.
The first column indicates which criterion has been used to investigate
confluence. A $\star$ means that the corresponding corollary is
implemented using $\ST{\RR}$ whereas $\sstar$ refers to $\STT{\RR}$.
The column labeled $\PRE$ shows for how many systems the
precondition of the respective criterion is satisfied, e.g., for
Theorem~\ref{THM:l} the precondition is linearity while for
Corollary~\ref{COR:rtd} the precondition is termination of $\RRdnd$.
The columns labeled $\CR(\ell)$ give the number of systems for which
confluence could be established using labeling $\ell$. 
(For Corollary~\ref{COR:rt} implicitly $\ell_\star$ is also employed.
Similarly Corollary~\ref{COR:red} employs $\ell_\red$.)
The column labeled $\CR$ corresponds to the full power of 
each result,
i.e., when the lexicographic combination of all labelings is used. 

From the table we draw the following conclusions. 
On this test bed the labeling function $\ell_\RL$ can handle more systems
than $\ell_\SN$ when considering single steps but 
for parallel rewriting both labelings succeed on equally many systems.
Still, in both settings most power is obtained when using all labelings.
In practice the study of parallel rewriting (Theorem~\ref{thm-main-basic})
is beneficial.
This suggests that the preconditions to obtain weak LL-labelings are
severe.

\begin{table}
\centering
\renewcommand{\arraystretch}{1.25}
\begin{tabular}{
@{}l
r@{~~}
r@{}
}
\\
\hline
tool     &  CR & not CR \\
\hline
\ACP     &  63 & 22 \\
\CSI     &  67 & 20 \\
\SAIGAWA &  53 & 12 \\
\hline
$\sum$   &  68 & 22 \\
\hline
\end{tabular}
\\[1.0ex]
\caption{Comparison with other tools on 92 left-linear TRSs.}
\label{TAB:tools}
\end{table}

For reference in Table~\ref{TAB:tools} we compare the power of the 
confluence provers participating in the Confluence Competition (CoCo),%
\footnote{\url{http://coco.nue.riec.tohoku.ac.jp}}
i.e., \ACP~\cite{AYT09}, \CSI~\cite{ZFM11b}, and
\SAIGAWA~\cite{HM11,KH12}.
\begin{itemize}
\item
\ACP is a powerful confluence prover which implements numerous confluence
criteria from the literature. Its distinctive feature is the strong
support for problems with AC semantics~\cite{AT12}.
\item
\CSI gains most of its power from the labeling framework presented here.
In addition it implements development closed critical pairs~\cite{vO97}
and persistence~\cite{FZM11}.
Recently, the techniques introduced in \cite{AT12} and \cite{KH12} have
also been integrated.
\item
\SAIGAWA also heavily exploits relative termination, remarkably also to
analyze confluence of non-left-linear systems \cite{KH12}.
\end{itemize}

From Tables~\ref{TAB:trs} and~\ref{TAB:tools} we conclude that our
framework
admits a state-of-the-art confluence prover for left-linear systems.

\section{Conclusion}
\label{CON:main}

In this article we studied how the decreasing diagrams technique can be
automated. We presented conditions (subsuming recent related results)
that ensure confluence of a left-linear TRS whenever its critical 
peaks are decreasing.
The labelings we proposed can be combined lexicographically
which allows incremental proofs of confluence and has a modular
flavor in the following sense: Whenever a new labeling function is
invented, the whole framework gains power. We discussed several
situations
(Examples~\ref{EX:OO03}, \ref{EX:mot}, \ref{EX:mot2}, \ref{EX:mot4})
where traditional confluence techniques 
fail but our approach easily establishes confluence.

We have also considered parallel rewriting
resulting in a significantly more powerful approach. We leave the study of
$\XR{}$ and the integration of development closed critical pairs as 
in \cite{vO12b,HM13} as future work.

Recently confluence by decreasing diagrams (for abstract rewrite
systems) has been formalized in the theorem prover
Isabelle/HOL~\cite{Z13,Z13b}. Since the
generated (incremental) labeling proofs are often impossible to check
for humans it seems a natural point for future work to also formalize
the labeling framework to enable automatic certification of confluence
proofs. Since our setting is based on a single method (decreasing
diagrams) while still powerful it offers itself as a perfect candidate
for future certification efforts.

\paragraph{Acknowledgments}
We thank the anonymous reviewers for providing many helpful and detailed
comments.

\providecommand{\noopsort}[1]{}

\end{document}